\documentclass{article}

\usepackage[affil-it]{authblk}
\usepackage[dvipsnames]{xcolor}
\usepackage{amsfonts}
\usepackage{amsmath,amsthm,amssymb,dsfont}

\usepackage{enumerate}
\usepackage{graphicx}	
\usepackage{subcaption}
\usepackage[margin=3cm]{geometry}
\usepackage{url}
\usepackage{todonotes}
\usepackage{bbm}

\usepackage{tikz}
\usepackage{circuitikz}

\usetikzlibrary{arrows.meta} 
\tikzset{ meter/.append style={ draw, inner sep=10, rectangle, font=\vphantom{A}, minimum width=30, line width=.5, path picture={ \draw[black] ([shift={(.1,.3)}]path picture bounding box.south west) to[bend left=50] ([shift={(-.1,.3)}]path picture bounding box.south east); \draw[black,-{Latex[scale=0.6]}] ([shift={(0,.1)}]path picture bounding box.south) -- ([shift={(.3,-.1)}]path picture bounding box.north); } } }

\usepackage{pifont}
\usepackage{multirow}
\usepackage{makecell}

\usepackage{epsfig}
\usetikzlibrary{shapes.symbols,patterns} 
\usepackage{pgfplots}
\pgfplotsset{compat=1.10}
\usepgfplotslibrary{fillbetween}
\usetikzlibrary {decorations.pathmorphing, decorations.pathreplacing, decorations.shapes}

\definecolor{linkblue}{HTML}{001487}
\usepackage{hyperref}
\hypersetup{colorlinks=true,citecolor=linkblue,linkcolor=linkblue,filecolor=linkblue,urlcolor=linkblue,breaklinks=true}

\usepackage{nicefrac}
\usepackage{mathtools}

\usepackage{thmtools} 
\hypersetup{hypertexnames=false}

\usepackage{algorithm}
\usepackage{algorithmic}

\usepackage{mdframed}
\usepackage{aligned-overset}
\usepackage{circuitikz}

\theoremstyle{plain}
\newtheorem{theorem}{Theorem}[section]
\newtheorem{lemma}[theorem]{Lemma}

\newtheorem{corollary}[theorem]{Corollary}

\theoremstyle{definition}
\newtheorem{definition}[theorem]{Definition}

\newtheorem{non-example}[theorem]{Non-example}

\DeclareRobustCommand{\abbrevcrefs}{%
\Crefname{theorem}{Thm.}{Thms.}%
\Crefname{corollary}{Cor.}{Cors.}%
\Crefname{lemma}{Lem.}{Lems.}%
\Crefname{remark}{Rmk.}{Rmks.}%
\Crefname{proposition}{Prop.}{Props.}%
\Crefname{equation}{Eq.}{Eqs.}%
\Crefname{example}{Ex.}{Exs.}%
}

\DeclareMathOperator*{\argmin}{argmin}

\DeclareRobustCommand{\Cshref}[1]{{\abbrevcrefs\Cref{#1}}}

\newcommand*{\ee}{\mathrm{e}}

\newcommand*{\cA}{\mathcal{A}}
\newcommand*{\cB}{\mathcal{B}}

\newcommand*{\cE}{\mathcal{E}}
\newcommand*{\cF}{\mathcal{F}}
\newcommand*{\cG}{\mathcal{G}}
\newcommand*{\cH}{\mathcal{H}}

\newcommand*{\cI}{\mathcal{I}}

\newcommand*{\cN}{\mathcal{N}}
\newcommand*{\cM}{\mathcal{M}}
\newcommand*{\cP}{\mathcal{P}}
\newcommand*{\cQ}{\mathrm{Q}}

\newcommand*{\cS}{\mathcal{S}}
\newcommand*{\cT}{\mathcal{T}}
\newcommand*{\cU}{\mathcal{U}}
\newcommand*{\cV}{\mathcal{V}}

\newcommand*{\N}{\mathbb{N}}

\newcommand*{\R}{\mathbb{R}}

\newcommand*{\St}{\mathrm{S}}

\newcommand*{\reg}{\infty}

\newcommand*{\eps}{\varepsilon}
\newcommand*{\diag}{\mathrm{diag}}

\newcommand*{\rank}{\mathrm{rank}}

\newcommand*{\id}{\mathds{1}}

\newcommand*{\spec}{\mathrm{spec}}
\newcommand*{\supp}{\mathrm{supp}}
\newcommand*{\range}{\mathrm{ran}}
\newcommand*{\tr}{\mathrm{tr}}
\newcommand*{\ket}[1]{| #1 \rangle}
\newcommand*{\bra}[1]{\langle #1 |}

\newcommand{\proj}[1]{|#1\rangle\!\langle #1|}

\newcommand*{\D}{\mathrm{D}}

\newcommand*{\CPTP}{\mathrm{CPTP}}
\newcommand*{\CPTN}{\mathrm{CPTN}}
\newcommand*{\CP}{\mathrm{CP}}

\newcommand*{\ci}{\mathrm{i}} 
\newcommand*{\di}{\mathrm{d}} 

\newcommand{\norm}[1]{\left\lVert#1\right\rVert}

\usepackage{float}
\usepackage[nameinlink,capitalize,noabbrev]{cleveref}

 \allowdisplaybreaks

\definecolor{protocolblue}{HTML}{1F5A94}
\definecolor{protocolorange}{HTML}{C66A2B}
\definecolor{protocolgreen}{HTML}{2F7D68}
\definecolor{protocolgray}{HTML}{68717A}

\title{The regularized channel R\'enyi divergence is continuous} 

    \author{\normalsize Lukas Schmitt$^{1,2}$ and David Sutter$^{2}$}
     \affil{\small $^{1}$Institute for Theoretical Physics, ETH Zurich\\
     $^{2}$IBM Research Europe -- Zurich}
\date{}

\begin{document}

\maketitle

\begin{abstract}
We show that the regularized channel R\'enyi divergence is continuous at $\alpha = 1$. As consequences, we establish an exponentially strong converse for channel discrimination, a subchannel-smoothed asymptotic equipartition property, a relative entropy accumulation theorem, and a new proof of the generalized quantum Stein's lemma. Our result further shows that the regularized channel relative entropy is computable. The continuity proof relies on a variational formula that expresses the regularized channel divergence as a single-letter optimization over variables of unrestricted finite dimension. We then establish continuity using a novel spectral confinement technique. 
\end{abstract}


\section{Introduction}
Quantum relative entropy describes the asymptotic rate at which two states can be distinguished~\cite{PH91,ogawa00}. 
For quantum channels, a similar task can be formulated, which yields the channel relative entropy~\cite{FFRS19,Wang2019,beta18}. However, unlike in state discrimination, input states can be entangled with a reference system and correlated over many channel uses. In technical terms, the channel relative entropy is not additive under the tensor product and therefore needs a regularization~\cite[Section~3.1]{FFRS19}. 

R\'enyi divergences form a prominent family of generalizations of relative entropy, parameterized by the real number $\alpha \in [\frac{1}{2},\infty)$. In the limit $\alpha \to 1$ the R\'enyi divergence converges to the relative entropy~\cite{petz86,MLDSFT13,WWY14}. Due to the regularization, it is unclear if the regularized channel R\'enyi divergence converges to the regularized channel relative entropy in the limit $\alpha \to 1$. 
In this work, we show that this is indeed the case.

\begin{theorem}\label{thm:main}
Let $\cE$ be a completely positive and trace-preserving map and $\cF$ be a completely positive map. Then, 
\begin{align}\label{eq:main}
\lim_{\alpha\downarrow 1} D_\alpha^\infty(\cE\|\cF)
=D^\infty(\cE\|\cF)\, .
\end{align}
\end{theorem}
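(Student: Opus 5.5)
The inequality $\lim_{\alpha\downarrow 1} D_\alpha^\infty(\cE\|\cF)\ge D^\infty(\cE\|\cF)$ is the easy direction and I would dispose of it first. For each $n$ and each input state, the sandwiched R\'enyi divergence is monotone nondecreasing in $\alpha$ and dominates the relative entropy. Hence
\begin{align}
D_\alpha^\infty(\cE\|\cF)\ge \tfrac1n D_\alpha(\cE^{\otimes n}\|\cF^{\otimes n})\ge \tfrac1n D(\cE^{\otimes n}\|\cF^{\otimes n})
\end{align}
for every $n$, and taking the supremum over $n$ gives the claim. The same monotonicity shows that the limit exists, and that it equals $\inf_{\alpha>1} D_\alpha^\infty$.

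The converse is the real content, because a supremum over $n$ cannot be exchanged naively with the limit $\alpha\downarrow 1$. My plan is to derive a single-letter variational formula for $D_\alpha^\infty(\cE\|\cF)$, in the spirit of the amortized or chain-rule characterization of Fang--Fawzi--Renner--Sutter. Concretely, I would aim to show
\begin{align}
D_\alpha^\infty(\cE\|\cF)=\sup_{\rho,\sigma}\bigl[D_\alpha(\cE(\rho)\|\cF(\sigma))-D_\alpha(\rho\|\sigma)\bigr],
\end{align}
where $\rho,\sigma$ range over states on $A\otimes R$ with reference system $R$ of arbitrary finite dimension. The direction $\le$ follows from the chain rule. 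The direction $\ge$ follows by iterating an optimizer, using the data-processing structure to build $n$-fold inputs. I would then prove the analogous formula at $\alpha=1$, or at least the upper bound on $D^\infty$ it implies. This reduces the theorem to the following statement: for every $\varepsilon>0$ there is an $\alpha>1$ such that the amortized difference at $\alpha$ exceeds the one at $1$ by at most $\varepsilon$, uniformly over all $\rho,\sigma$ and all reference dimensions.

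The main obstacle is this uniformity. For fixed dimensions and states with bounded spectra, $D_\alpha$ is continuous in $\alpha$ near $1$, with modulus controlled by $\log$ of the spectral range of $\sigma$ and $\rho$. The reference dimension is unbounded, however, and $\sigma$ may have arbitrarily small eigenvalues, so no uniform modulus is available a priori. I would handle this by \emph{spectral confinement}: show that the optimization may be restricted, at a cost of $\varepsilon$, to pairs $(\rho,\sigma)$ whose relative spectrum (the eigenvalues of $\sigma^{-1/2}\rho\sigma^{-1/2}$, or of the relevant operator $\cF(\sigma)^{-1/2}\cE(\rho)\cF(\sigma)^{-1/2}$) lies in an interval $[e^{-K},e^{K}]$ with $K$ depending only on $\cE,\cF,\varepsilon$. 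To do this I would:
\begin{enumerate}
\item mix $\sigma$ with a small multiple of $\rho$, so that $\sigma\mapsto(1-\delta)\sigma+\delta\rho$ bounds $\rho\le\delta^{-1}\sigma$;
\item use the finite-dimensional input $A$, together with a full-support structure of $\cF$ (after restricting to the support of $\cE$'s outputs), to transfer this bound to the outputs;
\item pinch or truncate the small eigenvalues on the other side.
\end{enumerate}
Each step changes both terms of the difference by a controlled amount, uniformly in $\dim R$, by data processing and quasi-convexity.

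On confined spectra, a Taylor estimate of the form $D_\alpha(\rho\|\sigma)\le D(\rho\|\sigma)+(\alpha-1)\,O(K^2)$ then holds uniformly. Sending $\alpha\downarrow1$ and then $\varepsilon\to0$ gives the upper bound and completes the proof. I expect the delicate part to be step 1 of the confinement: I must check that the subtracted term $D_\alpha(\rho\|\sigma)$ does not decrease too much when $\sigma$ is perturbed, since it enters with a negative sign. I would control it using joint convexity and the operator inequality $\sigma'\ge(1-\delta)\sigma$.
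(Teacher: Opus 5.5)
Your architecture (single-letter formula, spectral confinement, Taylor bound) matches the paper's, and several pieces are fine. The easy direction is correct. The amortized expression does equal $D_\alpha^\infty$ for all $\alpha\ge1$, but your two labels are swapped: the chain rule~\cref{eq_chain_rule} gives $\sup[\cdots]\le D_\alpha^\infty(\cE\|\cF)$, and the reverse inequality comes from telescoping along the $n$ channel uses with the intermediate states as $(\rho,\sigma)$, not from iterating an optimizer. Step~2 is just $\cE_R(\rho)\le\delta^{-1}\cE_R(\sigma')\le\delta^{-1}\ee^{D_{\max}(\cE\|\cF)}\cF_R(\sigma')$. The Taylor bound needs only this upper control (compare \cref{lem:upperboundc0}), so step~3 is superfluous; pinching would anyway cost the logarithm of the number of distinct eigenvalues, which is not uniform in $\dim R$.

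The gap is step~1. After your reduction, the whole theorem rests on the claim that the amortized supremum at $\alpha$ is attained up to $\varepsilon$ by pairs with $D_{\max}(\rho\|\sigma)\le\log(1/\delta)$, with $\delta$ independent of $\alpha$. You give no argument for this, and you have misidentified the delicate term. The subtracted term is harmless: $\sigma'\ge(1-\delta)\sigma$ gives $D_\alpha(\rho\|\sigma')\le D_\alpha(\rho\|\sigma)+\log\frac{1}{1-\delta}$. The dangerous term is $D_\alpha(\cE_R(\rho)\|\cF_R(\sigma'))$. The admixture $\delta\cF_R(\rho)$ can dominate $\cF_R(\sigma)$ by an arbitrarily large factor on the support of $\cE_R(\rho)$, so this term can collapse from an arbitrarily large value to $O(\log(1/\delta))$. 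Neither data processing nor quasi-convexity bounds it from below in terms of $D_\alpha(\cE_R(\rho)\|\cF_R(\sigma))$; quasi-convexity only bounds it from above. You would need that the drop at the output exceeds the drop at the input by at most $\varepsilon$, and no standard property of $D_\alpha$ gives this.

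A soft argument cannot work here, because the part of the relative spectrum that creates an order-one gap $D_\alpha-D$ sits at height of order $1/(\alpha-1)$. Already classically, putting $p$-weight $(\alpha-1)\pi_1$ at $\log(p/q)=c/(\alpha-1)$ gives $D=\pi_1c$ but $D_\alpha\approx\pi_1(\ee^{c}-1)$. A cutoff $K$ independent of $\alpha$ therefore removes exactly the regime where a discontinuity would live. At the natural scale $K\sim1/(\alpha-1)$, however, your Taylor error $(\alpha-1)K^2$ diverges.

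The paper resolves this tension as follows.
\begin{itemize}
\item In the test formulation $C_s$ of \cref{lem_var_formula_new}, it confines the input to a window of width $L=\varepsilon/s$ in the spectrum of $\log(\cF_R^*)^{\otimes n}(T)$, using $n=\lceil L\rceil$ copies.
\item The Taylor error from \cref{lem:upperboundc0} is then $O(s(nK)^2)$, i.e.\ $O(\varepsilon K^2)$ per copy. The confinement losses $\log 2$ and $-\frac1s\log(1-d_s)$ are $o(n)$ (\cref{cor:upper,lem:test}).
\item The genuinely new ingredient is the existence of near-optimal confined inputs. Monotonicity gives $C_s-C_{s/2}\to0$. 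Near-optimality at order $s/2$ then forces $H=\id-\lambda^{-1}N_s(T)$ to have small expectation when averaged over $A^{\ci t}(\cdot)A^{-\ci t}$ against a kernel of width $\sim s$ (\cref{lem:orderineq}).
\item A Fej\'er-window average (\cref{lem:average}) and absolute continuity of the rescaled kernels (\cref{lem:abscont}) then yield a spectral window of width $\varepsilon/s$ whose projected state is near-optimal.
\end{itemize}
Without an argument of this kind, your step~1 carries the full difficulty of the theorem, so the proposal does not establish the hard inequality.
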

\begin{proof}[Proof overview]
We prove the assertion for completely positive and trace-preserving (CPTP) maps $\cE$, $\cF$ and then afterwards show that this also implies the statement for the case where $\cF$ is only CP (see~\cref{sec_CP_sufficient}).
A key ingredient of the proof is a novel variational expression \smash{$C_{(\alpha -1)/\alpha}(\cE,\cF)$} with $\alpha \in [1,\infty)$\footnote{see~\cref{eq_def_C}} for the regularized channel R\'enyi divergence $D_\alpha^\infty(\cE\|\cF)$\footnote{see~\cref{eq:regularized}}, that is a single-letter optimization problem, although over variables of unbounded reference dimension. A spectral confinement argument then yields error bounds independent of the reference dimension.
With this we show that
\begin{align} \label{eq_overview_proof}
\lim_{\alpha\downarrow 1}D_\alpha^\infty(\cE\|\cF) 
\overset{\textnormal{\Cshref{lem_var_formula_new}}}{=} \lim_{s \downarrow 0} C_s(\cE,\cF) 
\overset{\textnormal{\Cshref{lem_cont_Cs}}}{=} C_0(\cE,\cF) 
\overset{\textnormal{\Cshref{lem_var_formula_new}}}{=}  D^\infty(\cE\|\cF) \, .
\end{align}
The details are given in~\cref{sec_proof_main}.
\end{proof}
The limit 
\begin{align}\label{eq:simple_direction}
\lim_{\alpha\uparrow 1} D_\alpha^\infty(\cE\|\cF) = D^\infty(\cE\|\cF)
\end{align}
can be obtained from standard techniques. This is explained in~\cref{app_justification_simple_limit}.
Furthermore, note that by~\cite[Corollary 3.6]{Jencova2018Renyi}, continuity for the (sandwiched) R\'enyi divergence also implies continuity for the Petz R\'enyi divergence~\cite{petz86}.

\paragraph{Applications} 
\Cref{thm:main} has a few consequences, which are discussed in more detail in~\cref{sec_applications}. It implies
\begin{enumerate}[(a)]
\item that $D^\infty(\cE\|\cF)$ is computable (see~\cref{sec_computability});
\item that $D^\infty(\cE\|\cF)$ is the Stein exponent for channel discrimination (see~\cref{sec_channel_disc});
\item a channel asymptotic equipartition property (AEP), where the smoothing is over subchannels and with respect to (generalized) diamond norm (see~\cref{sec_channel_AEP});
\item a relative entropy accumulation theorem (see~\cref{sec_REAT});
\item a proof for the generalized quantum Stein's lemma, with a more general null hypothesis (see~\cref{sec_GSL}).
\end{enumerate}
\section{Notation}
All logarithms are natural logarithms and all Hilbert spaces are finite-dimensional in this paper. Powers are taken on the support and $\mathbf{1}_{[a,b]}(X)$ denotes the projection onto the eigenspace corresponding to the eigenvalues in $[a,b]$. Furthermore, we denote the range of an operator $X$ by $\range(X)$ and its support by $\supp(X)$. The notation $\rho \ll \sigma$ means $\supp(\rho) \subseteq \supp(\sigma)$.
Let $\St(\cH)$ denote the set of density matrices on $\cH$ and let $\CPTP(A,B)$ be the set of completely positive and trace-preserving maps from $A$ to $B$. The set of completely positive and trace non-increasing maps is denoted by $\CPTN(A,B)$. 
For $\cE,\cF \in \CPTP(A,B)$, let  $\cE_R:=\cI_R\otimes\cE$ and $\cF_R:=\cI_R\otimes\cF$, where $\cI_R$ denotes the identity map on $R$. For $\cE \in \CPTP(A,B)$ its adjoint map is denoted by $\cE^*$.

For $\alpha \in (1,\infty)$ the \emph{(sandwiched) R\'enyi divergence}~\cite{MLDSFT13,WWY14} is given by
\begin{align} \label{eq_sandwhiched}
D_{\alpha}(\rho \| \sigma) :=\frac{1}{\alpha -1 } \log \tr\big[(\sigma^{\frac{1-\alpha}{2\alpha}} \rho \, \sigma^{\frac{1-\alpha}{2\alpha}})^{\alpha}\big] \, ,
\end{align}
if $\rho \ll \sigma$, and $+\infty$ otherwise.
In the limits $\alpha \to 1$ and $\alpha \to \infty$ the (sandwiched) R\'enyi divergence converges to the relative entropy $D(\rho \| \sigma):=\tr[\rho(\log \rho - \log \sigma)]$ if $\rho \ll \sigma$, and $+\infty$ otherwise, and the max-relative entropy~\cite{renner_phd,datta09}
\begin{align}
D_{\max}(\rho \| \sigma) :=\inf\{\lambda \in \R: \rho \leq \ee^\lambda \sigma \} \, ,
\end{align}
respectively.
The measured R\'enyi divergence is defined as
\begin{align}
D_{\alpha,\mathbb{M}}(\rho \| \sigma):=\sup_{\cM} D_{\alpha}\big(\cM(\rho) \| \cM(\sigma) \big) \, ,
\end{align}
where the supremum is over all POVM channels $\cM$~\cite{donald86,PH91,FBT15}.

The stabilized channel divergence is
\begin{align}
D_{\alpha}(\cE\|\cF) &=\sup_{\ket{\psi}_{RA}} D_{\alpha}\big(\cE_R(\psi)\|\cF_R(\psi)\big)\, , \label{eq:channel}
\end{align}
where purification and data-processing allow us to restrict to pure inputs and a reference of dimension at most the input dimension~\cite{LKDW18}. 
Its regularization is given by
\begin{equation} 
D_{\alpha}^\infty(\cE\|\cF) 
= \lim_{n\to\infty}\frac{1}{n}D_{\alpha}(\cE^{\otimes n}\|\cF^{\otimes n})
=\sup_{n\geq 1}\frac{1}{n}D_{\alpha}(\cE^{\otimes n}\|\cF^{\otimes n})\, ,\label{eq:regularized}
\end{equation}
where the limit exists and can be expressed as a supremum by Fekete's lemma~\cite{fekete23}.
Let $J_{\cE}$ and $J_{\cF}$ be the Choi matrices of $\cE$ and $\cF$, respectively.\footnote{The Choi matrix is defined as $J_{\cE}:=\cE_R(\Phi_{AR})$, where $\Phi_{AR}:=\sum_{ar}\ket{a}\bra{r}_A \otimes \ket{a}\bra{r}_R$} Note that $D_{\max}(\cE \| \cF) = D_{\max}(J_{\cE} \| J_{\cF})$, which is finite if and only if $J_{\cE} \ll J_{\cF}$.
In the case $J_{\cE} \not \ll J_{\cF}$ both sides of~\cref{eq:main} are unbounded. Thus, we assume without loss of generality that $J_{\cE} \ll J_{\cF}$.

The channel R\'enyi divergence satisfies a chain rule~\cite{FFRS19,FF21} which will be crucial for this work. For any $\alpha \in [1,\infty)$, $\rho\in \St(A)$, $\sigma \geq 0$, and $\cE,\cF \in \CPTP(A,B)$ we have
\begin{align} \label{eq_chain_rule}
D_{\alpha}\big(\cE(\rho)\| \cF(\sigma)\big) \leq D_{\alpha}(\rho \| \sigma) + D^{\infty}_\alpha(\cE \| \cF) \, .
\end{align}

For $\eps\in(0,1)$ we define the $\eps$-ball around $\rho$ by $\cB_{\eps}(\rho):=\{\rho' \in \St(\cH): P(\rho,\rho') \leq \eps \}$, where \smash{$P(\rho,\sigma):=\sqrt{1-F(\rho,\sigma)}$} is the purified distance~\cite{marco_book} and $F(\rho, \sigma):=\norm{\sqrt{\rho} \sqrt{\sigma}}^2_1$ is the fidelity. 
The smooth max-relative entropy is defined as
\begin{align}
D_{\max}^\eps(\rho  \| \sigma ) :=  \min_{\rho' \in \cB_{\eps}(\rho)} D_{\max}(\rho' \| \sigma) \, . 
\end{align} 

\section{Proof of~\cref{thm:main}} \label{sec_proof_main}
We prove the assertion of~\cref{thm:main} first for $\cE,\cF \in \CPTP$ and lift to $\cF \in \CP$ in the end. 
For a positive definite matrix $T>0$ and $s \in (0,1)$ we define
\begin{align} \label{eq_def_q}
q_s(T,\rho)
:=\frac{1}{s}\log \frac{\tr[\cE_R(\rho)T^s]}{\tr[\rho \cF_R^*(T)^s]} \qquad \textnormal{and} \qquad
q_0(T,\rho):=\tr[\cE_R(\rho)\log T]-\tr[\rho\log \cF_R^*(T)] \, ,
\end{align}
as well as
\begin{align} \label{eq_def_C}
C_s(\cE,\cF):=\sup_{T>0,\rho \in \St(AR)} q_s(T,\rho) \qquad \textnormal{and} \qquad 
C_0(\cE,\cF):=\sup_{T>0,\rho \in \St(AR)} q_0(T,\rho)\, .
\end{align}
Note that the matrix $\cF_R^*(T)$ is strictly positive because $\cF_R^*$ is unital and $\cF$ is CPTP.
Every reference system $R$ in these suprema is finite, but its dimension is unrestricted.
We next study some mathematical properties of these quantities.
\begin{lemma}\label{lem:properties}
Let $\cE,\cF \in \CPTP(A,B)$ and $s \in [0,1)$. Then $0\leq C_s(\cE,\cF)\leq D_{\max}(\cE \| \cF)$ and $C_r(\cE,\cF)\leq C_s(\cE,\cF)$ for $0\leq r<s<1$. Furthermore, pure input states suffice in~\cref{eq_def_C} and we have
\begin{equation}
    C_s(\cE,\cF) = \frac{1}{s} \!\sup_{T>0}D_{\max}\big(\cE_R^*(T^s)\|(F_R^*T)^s\big) \!\quad \!\textnormal{and}\! \quad \!C_0(\cE,\cF) = \sup_{T>0} \lambda_{\max}\big( \cE_R^*(\log T) - \log \cF^*_R(T)\big) \, .
\end{equation}
In addition for $\cE_i,\cF_i \in \CPTP(A_i,B_i)$ with $i \in \{ 1,2\}$,
\begin{align}\label{eq:tensorization}
C_s(\cE_1\otimes\cE_2,\cF_1\otimes\cF_2) = C_s(\cE_1,\cF_1)+C_s(\cE_2,\cF_2) \, .
\end{align}
\end{lemma}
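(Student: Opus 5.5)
The plan is to first establish the two closed forms, since most of the other claims follow from them. Fix $T>0$ and write $X:=\cE_R^*(T^s)\geq 0$ and $Y:=\cF_R^*(T)^s>0$. Then $q_s(T,\rho)=\frac1s\log\big(\tr[\rho X]/\tr[\rho Y]\big)$. The supremum over $\rho$ of a ratio of two linear functionals, with a positive definite denominator, equals $\lambda_{\max}(Y^{-1/2}XY^{-1/2})$. This value is attained at the pure state $\rho$ proportional to $Y^{-1/2}\ket{v}\bra{v}Y^{-1/2}$, where $\ket{v}$ is a top eigenvector. Since $\lambda_{\max}(Y^{-1/2}XY^{-1/2})=\exp D_{\max}(X\|Y)$, we get the formula for $C_s$, and pure inputs suffice. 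For $s=0$ the objective $\tr[\rho(\cE_R^*(\log T)-\log\cF_R^*(T))]$ is linear in $\rho$, so its supremum is the largest eigenvalue of that operator, again attained at a pure eigenvector. Throughout, we use $\tr[\cE_R(\rho)Z]=\tr[\rho\,\cE_R^*(Z)]$.

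\textbf{Bounds and monotonicity.} For the lower bound $C_s\geq 0$, take $T=\id$: then $q_s=\frac1s\log(1/1)=0$, and likewise $q_0=0$. For the upper bound, let $\lambda=D_{\max}(\cE\|\cF)$, so $J_\cE\leq \ee^\lambda J_\cF$. This gives $\cE_R(\rho)\leq \ee^\lambda\cF_R(\rho)$ for every $\rho$, since any input arises from the Choi state by a local operation on $R$. Then
\begin{align*}
\tr[\cE_R(\rho)T^s]\leq \ee^\lambda \tr[\cF_R(\rho)T^s]=\ee^\lambda\tr[\rho\,\cF_R^*(T^s)]\leq \ee^\lambda\tr[\rho\,\cF_R^*(T)^s],
\end{align*}
where the last step is the operator Jensen inequality for the operator concave map $t\mapsto t^s$, $s\in(0,1)$, and the unital CP map $\cF_R^*$. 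For $s=0$ we use the operator concave $\log$ in the same way, together with $\log \cE_R(\rho)\ldots$ replaced by the bound $\tr[\cE_R(\rho)\log T]\leq \lambda+\tr[\cF_R(\rho)\log T]$. That bound does not follow from the operator inequality directly, since $\log T$ is not positive. It is cleaner to note $q_0=\lim_{s\downarrow0}q_s$ pointwise and use monotonicity. For monotonicity in $s$, I would write $\frac1s\log\tr[\cE_R(\rho)T^s]=\frac1r\cdot\frac rs\log\tr[\cE_R(\rho)(T^{s/r})^r]$ and substitute $T'=T^{s/r}$. For the numerator this gives $q_r(T',\rho)$ with $T'^r=T^s$. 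The denominator requires $\tr[\rho\,\cF_R^*(T'^{r/s}\ldots)]$, which is a mismatch. Instead I will use a power-mean argument: by operator Jensen with the concave map $t\mapsto t^{r/s}$, we have $\cF_R^*(T')^{r/s}$ versus $\cF_R^*(T'^{r/s})$, and then the scalar Jensen step $\tr[\rho Z^{r/s}]\leq\tr[\rho Z]^{r/s}$. Combining these should yield $\frac1r\log(\cdot)\leq\frac1s\log(\cdot)$ after choosing $T'=T^{r/s}$ appropriately. This step is fiddly, and I expect to spend care on which variable to reparametrize. The case $r=0$ then follows by letting $r\downarrow0$, using $\frac1r\log\tr[\omega Z^r]\to\tr[\omega\log Z]$.

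\textbf{Additivity.} The inequality ``$\geq$'' is proved by choosing product $T=T_1\otimes T_2$ and product $\rho=\rho_1\otimes\rho_2$. Here $\cF^*(T_1\otimes T_2)^s$ factorizes, so $q_s$ adds. The hard direction is ``$\leq$'', which is the main obstacle. I would use the $D_{\max}$ form: I need $\cE_{1}^*\otimes\cE_2^*(T^s)\leq \ee^{s(c_1+c_2)}\,(\cF_1^*\otimes\cF_2^*(T))^s$ for all $T>0$ on $B_1B_2R$. The idea is to apply the single-system bound twice, treating system $2$ as part of the reference. First, $(\cE_1\otimes\cI)^*(T^s)\leq \ee^{sc_1}(\cF_1\otimes\cI)^*(T)^s$, with $B_2R$ as reference. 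Then apply $(\cI\otimes\cE_2)^*$, which is positive, so the inequality is preserved. Next, set $T':=(\cF_1\otimes\cI)^*(T)>0$ and use the bound for $\cE_2$ with reference $A_1R$: $(\cI\otimes\cE_2)^*(T'^s)\leq \ee^{sc_2}(\cI\otimes\cF_2)^*(T')^s$. Chaining the two steps gives exactly the desired bound. This is where the unrestricted reference dimension in the definition is essential.
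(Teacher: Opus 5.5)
There are two genuine gaps: the upper bound as written loses a factor $1/s$, and monotonicity in $s$ is not established. Your closed forms, the purity of optimizers, the lower bound via $T=\id$, and the twofold chaining for additivity are correct and match the paper's proof.

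\textbf{Upper bound.} Your chain ends in $\tr[\cE_R(\rho)T^s]\le \ee^{\lambda}\tr[\rho\,\cF_R^*(T)^s]$. Because $q_s$ carries the prefactor $1/s$, this only gives $q_s(T,\rho)\le \lambda/s$, not $\lambda$. That bound degenerates as $s\downarrow 0$, so your reduction of $s=0$ to a limit of the $s>0$ bounds yields nothing. The fix is to apply Choi--Davis--Jensen to $\cE_R^*$ rather than to $\cF_R^*$, and only then compare the two channels:
\[
\cE_R^*(T^s)\le \cE_R^*(T)^s\le (\ee^{\lambda}\cF_R^*(T))^s=\ee^{s\lambda}\cF_R^*(T)^s .
\]
This uses $\cE_R^*(T)\le \ee^{\lambda}\cF_R^*(T)$, which holds since $\ee^\lambda\cF-\cE$ is CP, together with operator monotonicity of $t\mapsto t^s$. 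The same order of steps works at $s=0$:
\[
\cE_R^*(\log T)\le \log\cE_R^*(T)\le \log\cF_R^*(T)+\lambda\id .
\]
So the obstacle you noticed, that $\log T$ is not positive, disappears once you work with the adjoints.

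\textbf{Monotonicity.} Your sketch is not a proof, and the route it suggests cannot work. At fixed $\rho$, the scalar Jensen step $\tr[\omega Z^{r/s}]\le \tr[\omega Z]^{r/s}$ bounds both the numerator $\tr[\cE_R(\rho)T^r]$ and the denominator $\tr[\rho\,\cF_R^*(T)^r]$ of $q_r$ from above, so their ratio is not controlled. No reparametrization of $T$ at a fixed input fixes this, because the optimizing $\rho$ changes with the order.

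The missing idea is to use your own closed form. It says $C_s$ is the smallest $c$ with $\cE_R^*(T^s)\le \ee^{sc}\cF_R^*(T)^s$ for all $T>0$ and all references $R$, which is exactly the characterization you already invoke for additivity. Then
\[
\cE_R^*(T^r)=\cE_R^*\big((T^s)^{r/s}\big)\le \cE_R^*(T^s)^{r/s}\le \ee^{rC_s}\cF_R^*(T)^r .
\]
The first inequality is Choi--Davis--Jensen for the operator concave map $t\mapsto t^{r/s}$, and the second is operator monotonicity of the same map; together they give $C_r\le C_s$. For $r=0$ you can use
\[
\cE_R^*(\log T)=\tfrac1s\cE_R^*(\log T^s)\le\tfrac1s\log\cE_R^*(T^s)\le \log\cF_R^*(T)+C_s\id ,
\]
or your pointwise limit $q_0=\lim_{r\downarrow0}q_r$.

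Once monotonicity is available, even your weak bound would suffice, since $C_s\le \inf_{s'\in(s,1)}\lambda/s'=\lambda$. As written, however, neither the bound $C_s\le D_{\max}(\cE\|\cF)$ nor monotonicity is established. You should also spell out additivity at $s=0$; the same chaining with $\cE_R^*(\log T)\le\log\cF_R^*(T)+C_0\id$ works.
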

The proof of~\cref{lem:properties} is given in~\cref{app_proof_properties}.

\begin{lemma}[Variational expression for regularized channel R\'enyi divergence]\label{lem_var_formula_new}
Let $\cE,\cF \in \CPTP(A,B)$ and $s \in (0,1)$. Then
    \begin{align}
        C_s(\cE,\cF) =  D_{\frac{1}{1-s}}^\infty(\cE\|\cF)  \qquad \textnormal{and} \qquad C_0(\cE,\cF) = D^\infty(\cE\|\cF) \, .
    \end{align}
\end{lemma}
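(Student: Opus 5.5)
Set $\alpha = \frac{1}{1-s}$, so $s = \frac{\alpha-1}{\alpha}$ and $\alpha' := \frac{\alpha}{\alpha-1} = \frac1s$ is the Hölder conjugate. I prove the two inequalities $C_s \le D_\alpha^\infty$ and $C_s \ge D_\alpha^\infty$ separately, and then obtain the $s=0$ case in the same way. The $s=0$ statement is the limiting version with $\alpha\downarrow 1$ replaced by the relative entropy. Its duality is the Gibbs variational principle $D(\omega\|\tau)=\sup_{H}\{\tr[\omega H]-\log\tr[\ee^{H}\tau]\}$, and everything below is organized so that it applies verbatim with $T^s$ replaced by $\log T$.

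\emph{Step 1: $C_s \le D_\alpha^\infty$.} Fix $T>0$ and a state $\rho_{AR}$. Hölder's inequality in sandwiched form gives the duality $\tr[\omega X] \le \ee^{(1-1/\alpha)D_\alpha(\omega\|\tau)}\,\|\tau^{1/(2\alpha')}X\tau^{1/(2\alpha')}\|_{\alpha'}$ for $X\ge 0$. Apply it with $\omega=\cE_R(\rho)$, $\tau=\cF_R(\rho)$ and $X=T^s$, so that $X^{\alpha'}=T$. This bounds $\tr[\cE_R(\rho)T^s]$ by $\ee^{sD_\alpha(\cE_R(\rho)\|\cF_R(\rho))}$ times a quantity of the form $\tr[(\tau^{s/2}T^s\tau^{s/2})^{1/s}]^{s}$. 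It is not immediately clear that this is at most $\tr[\rho\,\cF_R^*(T)^s]$, so I will argue differently: by the chain rule~\eqref{eq_chain_rule}. Choose $\sigma:=\cF_R^*(T)^{s}$-weighted states. Define $\sigma_{AR}:=\rho^{1/2}\cF_R^*(T)^{s}\rho^{1/2}$ up to normalization, or work directly in the dual picture of Lemma~\ref{lem:properties}. Then $\cE_R^*(T^s)\le \ee^{sc}(\cF_R^*T)^s$ is equivalent to a max-divergence between the states obtained by measuring with $T$. The cleaner route is the following. Take the state $\omega=\cE_R(\rho)$ and the positive operator $\tau=\cF_R(\sigma)$ with $\sigma$ chosen so that $\tau^{-s}$ aligns with $T^s$. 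The chain rule gives $D_\alpha(\cE_R(\rho)\|\cF_R(\sigma))\le D_\alpha(\rho\|\sigma)+D^\infty_\alpha(\cE\|\cF)$. The variational (Legendre) form of $D_\alpha$, namely $\frac{1}{\alpha-1}\log\tr[\omega^\alpha\ldots]=\sup_{X>0}\{\ldots\}$, evaluated at $X=T^s$ on the left and optimized over $\sigma$ on the right, turns this into $q_s(T,\rho)\le D_\alpha^\infty$. Concretely, I will use $\sup_\sigma\{\log\tr[\rho Y]-\ldots\}$ to eliminate $\sigma$ with $Y=\cF_R^*(T)^s$. Making this Legendre duality exact is the first technical point.

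\emph{Step 2: $C_s\ge D_\alpha^\infty$.} By tensorization~\eqref{eq:tensorization}, it suffices to show $C_s(\cE,\cF)\ge D_\alpha(\cE\|\cF)$ for the single-copy stabilized divergence; applying this to $\cE^{\otimes n},\cF^{\otimes n}$ and dividing by $n$ then gives the claim. Given $\psi_{RA}$, set $\omega=\cE_R(\psi)$ and $\tau=\cF_R(\psi)$. Take the optimizer of the Hölder duality, $T^s\propto(\tau^{-s/2}\omega\tau^{-s/2})^{\alpha-1}$ conjugated appropriately, i.e.\ $T=\tau^{-s/2}(\tau^{s/2}\ldots)$, so that $\tr[\omega T^s]$ attains the sandwiched quantity. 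The denominator $\tr[\psi\,\cF_R^*(T)^s]$ is then bounded by Jensen/operator concavity of $x\mapsto x^s$ for $s\in(0,1)$, in the form $\cF_R^*(T)^s\ge\cF_R^*(T^s)$ for unital CP maps. This inequality goes the wrong way for an upper bound on the denominator, so instead I will use the Rényi data-processing freedom and choose $\rho\ne\psi$. I will pick $\rho$ proportional to $\psi^{1/2}$-reweighted by $\cF_R^*(T)$ so that the denominator becomes $\tr[\tau T]^{s}$-like via Hölder applied to $\rho$. Since the reference dimension is unrestricted, $\rho$ may live on an enlarged reference system. This step is the main obstacle: the supremum over $\rho$ must exactly reproduce the sandwiched $\alpha$-quantity, and I expect to need a minimax exchange between $\sup_\rho$ and $\sup_T$ (Sion's theorem, using concavity of $T\mapsto\ldots$ in the variable $T^s$ and linearity in $\rho$). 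If this exact equality fails at one copy, I fall back on showing $nC_s\ge D_\alpha(\cE^{\otimes n}\|\cF^{\otimes n})-o(n)$ through pinching of $\tau^{\otimes n}$, where the $\mathrm{poly}(n)$ pinching losses vanish after regularization.

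\emph{Step 3: $s=0$.} Repeat Steps 1 and 2 with the Gibbs principle in place of Hölder: $\tr[\omega\log T]-\log\tr[\tau T]\le D(\omega\|\tau)$, together with the chain rule at $\alpha=1$ for the upper bound. For the lower bound, use pinching with $\log T=\log(\text{pinched }\omega)$ combined with additivity~\eqref{eq:tensorization}.
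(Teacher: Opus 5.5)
You have the right tools (chain rule for $C_s\le D^\infty_\alpha$, additivity plus pinching for the converse), but Step~1 has a genuine gap. The ``Legendre duality'' you want to make exact cannot be exact at one copy. The cancellation $\tr[\cF_R(\sigma)T]=\tr[\sigma\,\cF_R^*(T)]$ needs a variational formula in which $\sigma$ enters linearly, and such formulas characterize the \emph{measured} divergences: $\sup_{X>0}\{\frac1s\log\tr[\omega X^s]-\log\tr[\tau X]\}=D_{\alpha,\mathbb{M}}(\omega\|\tau)$. Likewise your ``Gibbs principle'' $\sup_H\{\tr[\omega H]-\log\tr[\ee^H\tau]\}$ is $D_{\mathbb{M}}(\omega\|\tau)$, not $D(\omega\|\tau)$; the true Gibbs principle has $\tr\exp(H+\log\tau)$. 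On the output side this is harmless, since $D_\alpha\ge D_{\alpha,\mathbb{M}}$. After the chain rule, however, you must bound the input term $D_\alpha(\rho\|\sigma)$ \emph{from above} by the value $\frac1s\log\tr[\rho A^s]-\log\tr[\sigma A]$ of the test $A=\cF_R^*(T)$. That value is at most $D_{\alpha,\mathbb{M}}(\rho\|\sigma)\le D_\alpha(\rho\|\sigma)$, so the inequality points the wrong way, in general strictly.

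Two ideas are missing. First, the right $\sigma$ is the Fr\'echet derivative $\sigma=f_{s,A}(\rho)=\frac1s\frac{\di}{\di t}(A+t\rho)^s|_{t=0}$ (derivative of $\log$ at $s=0$). This is exactly the first-order condition making $A$ the optimal test for $(\rho,\sigma)$, and it gives $D_{\alpha,\mathbb{M}}(\cE_R(\rho)\|\cF_R(\sigma))-D_{\alpha,\mathbb{M}}(\rho\|\sigma)\ge q_s(T,\rho)$. Your candidates do not achieve this. For instance, for pure $\rho$, $\rho^{1/2}\cF_R^*(T)^s\rho^{1/2}$ is just a multiple of $\rho$, and the chain rule then only returns the trivial $D_\alpha(\cE_R(\rho)\|\cF_R(\rho))\le D^\infty_\alpha(\cE\|\cF)$. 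Second, the measured-to-sandwiched conversion on the input side is only asymptotic. Replace $(\rho,T)$ by $(\rho^{\otimes m},T^{\otimes m})$ with $\rho=\proj{\psi}$ pure. Then $\sigma_m=f_{s,A^{\otimes m}}(\rho^{\otimes m})$ has range in $\mathrm{span}\{P_a\ket{\psi}^{\otimes m}\}$, where $P_a$ are the spectral projections of $A^{\otimes m}$, so its rank is polynomial in $m$ and pinching costs only $O(\log m)$. Now apply the chain rule to $\cE^{\otimes m},\cF^{\otimes m}$ and divide by $m$. Your pinching fallback aims at the output side. Here it is the input-side $\sigma_m$ whose spectrum must be controlled, and that is where purity of the input is used.

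Step~2 is closer but stalls on a misdiagnosis. Keep $\rho=\psi$ and set $\omega=\cE_R(\psi)$, $\tau=\cF_R(\psi)$. The \emph{scalar} Jensen inequality $\tr[\psi\,\cF_R^*(T)^s]\le(\tr[\cF_R(\psi)T])^s$, rather than the operator Jensen you quote, goes the right way. It gives $q_s(T,\psi)\ge\frac1s\log\tr[\omega T^s]-\log\tr[\tau T]$, hence $C_s(\cE,\cF)\ge D_{\alpha,\mathbb{M}}(\omega\|\tau)$, with no minimax and no change of input. At $s=0$ the analogue is $\tr[\psi\log\cF_R^*(T)]\le\log\tr[\tau T]$. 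This only reaches the measured divergence, and your fallback supplies the upgrade, provided it is organized as a tensor power of a \emph{fixed} input. Apply the bound to $\cE^{\otimes km},\cF^{\otimes km}$ with input $\psi_k^{\otimes m}$. Then use additivity of $C_s$ and $\lim_m\frac1m D_{\alpha,\mathbb{M}}(\omega_k^{\otimes m}\|\tau_k^{\otimes m})=D_\alpha(\omega_k\|\tau_k)$, where $\omega_k=\cE_R^{\otimes k}(\psi_k)$ and $\tau_k=\cF_R^{\otimes k}(\psi_k)$, and let $k\to\infty$. This is the paper's argument for this direction. The same two-step structure handles $s=0$ with $D_{\mathbb{M}}$ and $D$.
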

\Cref{lem_var_formula_new} is interesting as it allows us to express a regularized channel divergence with an expression that contains no regularization but involves an optimization over reference systems of arbitrary finite dimension. The proof is given in~\cref{sec_proof_var_formula}.

\begin{lemma}[Continuity of variational expression]\label{lem_cont_Cs}
Let $\cE,\cF \in \CPTP(A,B)$. Then 
    \begin{align}
         \lim_{s \downarrow 0} C_s(\cE,\cF) = C_0 (\cE,\cF) \, .
    \end{align}
\end{lemma}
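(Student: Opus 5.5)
Since \cref{lem:properties} gives $C_r\le C_s$ for $r<s$, the map $s\mapsto C_s(\cE,\cF)$ is monotone and bounded by $D_{\max}(\cE\|\cF)<\infty$. Hence $\lim_{s\downarrow 0}C_s(\cE,\cF)$ exists and is at least $C_0(\cE,\cF)$. It remains to prove $\limsup_{s\downarrow0}C_s\le C_0$.

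The natural route would be to use $q_s(T,\rho)\to q_0(T,\rho)$ for each fixed $(T,\rho)$. However, the supremum runs over unbounded reference dimension and over $T$ with arbitrarily spread spectrum, so pointwise convergence is not enough. I will therefore aim for an error bound that depends only on the spectral spread of $T$ and never on $\dim R$. Note that $q_s(cT,\rho)=q_s(T,\rho)$ for every $c>0$, so only ratios of eigenvalues of $T$ matter.

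\textbf{Step 1 (uniform expansion on confined spectra).} Suppose $\spec(T)\subseteq[\ee^{-L},\ee^{L}]$. Then $\log T$ and $\log\cF_R^*(T)$ both have operator norm at most $L$, the latter because $\cF_R^*$ is unital and positive. Using $|\ee^{sx}-1-sx|\le \tfrac{s^2x^2}{2}\ee^{s|x|}$ and $\log(1+y)=y+O(y^2)$, I will show
\begin{align*}
\tr[\cE_R(\rho)T^s]&=1+s\,\tr[\cE_R(\rho)\log T]+O\big(s^2L^2\ee^{sL}\big),\\
\tr[\rho\,\cF_R^*(T)^s]&=1+s\,\tr[\rho\log\cF_R^*(T)]+O\big(s^2L^2\ee^{sL}\big).
\end{align*}
Dividing by $s$ after taking logarithms gives
\[
q_s(T,\rho)\le q_0(T,\rho)+c\,sL^2\ee^{sL}\le C_0+c\,sL^2\ee^{sL}
\]
for an absolute constant $c$, uniformly in $R$ and $\rho$.

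\textbf{Step 2 (spectral confinement).} This is the main obstacle. For each small $s$, take a near-optimizer $(T,\rho)$ of $C_s$ and normalize $T$ by scaling. I would then replace $T$ by a clipped operator $T'$ with spectrum in $[\ee^{-L},\ee^{L}]$, with $L=L(s)$ chosen so that $sL\to0$ and $sL^2\to0$, for instance $L=s^{-1/3}$. The goal is
\[
q_s(T',\rho)\ge q_s(T,\rho)-o(1)\quad\text{uniformly in the dimension}.
\]

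First, I would choose the normalization so that the denominator $\tr[\rho\cF_R^*(T)^s]$ is of order one; for example, rescale so that $T$'s spectral mass, as seen through $\rho$, sits around $1$.

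Next, the tail parts must be controlled, namely the spectral projections $\mathbf 1_{(\ee^L,\infty)}(T)$ and $\mathbf 1_{(0,\ee^{-L})}(T)$.
\begin{itemize}
\item \emph{Large eigenvalues.} Their contribution to the numerator is bounded via $\cE_R^*\le \ee^{D_{\max}(\cE\|\cF)}\cF_R^*$. Combined with operator concavity of $x\mapsto x^s$, which gives $\cF_R^*(T^s)\le\cF_R^*(T)^s$, this bounds that contribution by a constant times the corresponding contribution to the denominator. Clipping them from above then changes numerator and denominator in a controlled proportion.
\item \emph{Small eigenvalues.} These contribute at most $\ee^{-sL}$ in relative terms to the numerator. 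Raising them to $\ee^{-L}$ changes $\cF_R^*(T)$ by at most $\ee^{-L}$ in operator norm, which is negligible against a denominator of order one.
\end{itemize}
The delicate point is that all these estimates must hold with $s$-dependent $L$ in a way that makes the total loss $o(1)$ after dividing by $s$. If the bare clipping argument is too lossy, my fallback is to use the $D_{\max}$ form $\frac1s\sup_T D_{\max}(\cE_R^*(T^s)\|(\cF_R^*T)^s)$ from \cref{lem:properties}. The operator inequalities there may be easier to compress spectrally, by conjugating with spectral projections of $\cF_R^*(T)$.

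\textbf{Step 3 (conclusion).} Combining the two steps gives
\[
C_s\le C_0+c\,sL^2\ee^{sL}+o(1).
\]
With $L=s^{-1/3}$ this tends to $C_0$, so $\limsup_{s\downarrow0}C_s\le C_0$, which proves \cref{lem_cont_Cs}.
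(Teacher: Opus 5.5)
\textbf{There is a genuine gap in Step 2.} Your monotonicity step and Step 1 are fine; Step 1 is a cruder version of \cref{lem:upperboundc0}. But given Step 1, Step 2 contains the entire content of the lemma, and the estimates you sketch cannot close it.

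Because of the prefactor $1/s$ in $q_s$, clipping may change the ratio $\tr[\cE_R(\rho)T^s]/\tr[\rho\,\cF_R^*(T)^s]$ only by a factor $1-o(s)$, uniformly in $\dim R$. Your bounds do not reach that precision:
\begin{enumerate}[(a)]
\item Your upper-tail bound ($\cE_R^*\le \ee^{D_{\max}(\cE\|\cF)}\cF_R^*$ plus Jensen) only shows that the top part of the numerator is at most a constant times the whole denominator. That is $O(1)$, not $o(s)$.
\item For the lower tail, eigenvalues below $\ee^{-L}$ still carry weight up to $\ee^{-sL}\to1$ when $sL\to0$. Since the power acts after the channel, lifting them to $\ee^{-L}$ shifts $\tr[\rho\,\cF_R^*(T)^s]$ by an amount you can only bound by $(a+\ee^{-L})^s-a^s\le\ee^{-sL}$. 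This is not small against a denominator of order one.
\item Making that shift $o(s)$ would need $L\gtrsim s^{-1}\log(1/s)$, whereas Step 1 needs $sL^2\to0$. No choice of $L$ satisfies both.
\end{enumerate}
More fundamentally, a near-optimal $\rho$ can spread its weight, with coherences, over a stretch of the spectrum of $\log\cF_R^*(T)$ much longer than $1/s$. Then no rescaling of $T$ makes both tails negligible. Your fallback, compressing by spectral projections of $\cF_R^*(T)$, can destroy near-optimality for the same reason.

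\textbf{What the paper does instead.} It uses two ideas that your plan lacks.

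First, it confines the input state to a window of $\log\cF_R^*(T)$ rather than clipping the test. It proves that confined near-optimizers exist by exploiting $C_s-C_{s/2}\to0$. \Cref{lem:orderineq} turns near-optimality at order $s/2$ into smallness of the modular-flow average $\int\mu_s(t)\langle v,A^{\ci t}HA^{-\ci t}v\rangle\,\di t$. Then \cref{lem:average,lem:abscont} produce a window of width $L=\eps/s$ on which the projected vector still satisfies $\langle v_x,Hv_x\rangle\le d_s\to0$.

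Second, it amplifies over tensor copies. The confinement costs $\frac1s\log(1-d_s)$ in $q_s$, which is small relative to $1$ but not relative to $s$, and \cref{lem:newtest} costs another $\log 2$. So the paper runs the argument on $n=\lceil\eps/s\rceil$ tensor copies and uses additivity of $C_s$ and $C_0$ from \cref{lem:properties}. Per copy the losses become $\frac{\log 2}{n}$, $-\frac{\log(1-d_s)}{sn}\to0$, and $\frac{K^2sn}{2(1-s)}$, which is of order $K^2\eps$. Finally it sends $\eps\downarrow0$.

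Without substitutes for these two ingredients, Step 2 remains an unproven assertion and the argument does not go through.
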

The proof of~\cref{lem_cont_Cs} is given in~\cref{sec_proof_confinement}.
It develops a spectral confinement method showing that almost optimal values for $C_s(\cE,\cF)$ and $C_0(\cE,\cF)$ can be attained by inputs supported on a controlled interval. This restriction gives an error bound that is uniform in the reference dimension, without requiring a lower bound on the nonzero eigenvalues of the input state.


\subsection{Proof of~\cref{lem:properties}}
\label{app_proof_properties}
Without loss of generality we may assume that the channels $\cE,\cF$ are such that $D_{\max}(\cE \| \cF) <\infty$, as otherwise the quantities are unbounded and the statements from~\cref{lem:properties} hold trivially.
For $T>0$, let $0<A = \cF_R^*(T)$, $N_s(T)=A^{-s/2}\cE_R^*(T^s)A^{-s/2}$ and $\omega=\frac{A^{s/2}\rho A^{s/2}}{\tr[\rho A^s]}$.
Cyclicity of the trace gives
\begin{equation}\label{eq:tilt}
\tr[\omega N_s(T)] = \frac{\tr[\rho\cE_R^*(T^s)]}{\tr[\rho A^s]} = \ee^{s q_s(T,\rho)}.
\end{equation}
Maximizing the left-hand side of~\cref{eq:tilt} over all density matrices yields a top eigenvector of $N_s(T)$, i.e.~we find
\begin{equation}
\frac{1}{s} \log \lambda_{\max}(N_s(T)) = \sup_{\rho \in \St(R  A)} q_s(T,\rho) \leq C_s(\cE,\cF) \, .
\end{equation}
This implies that $N_s(T) \leq \ee^{s C_s(\cE,\cF)} \id$ and hence $C_s(\cE,\cF)$ is the optimal constant such that
\begin{align}
\cE_R^*(T^s)&\leq \ee^{s C_s(\cE,\cF)}(\cF_R^*(T))^s \, \label{eq_s_big}
\end{align}
for all tests $T>0$.
For $s=0$, recall that $q_0(T,\rho)=\tr[\cE_R(\rho) \log T] - \tr[\rho \log A]$ and hence
\begin{align}
C_0(\cE,\cF) 
\geq \sup_{\rho \in \St(R A)} q_0(T,\rho)
=\lambda_{\max}(\cE^*_R(\log T) - \log \cF_R^*(T)) \, ,
\end{align}
which implies that $C_0(\cE,\cF)$ is the optimal constant satisfying
\begin{align}
    \cE_R^*(\log T)\leq\log(\cF_R^*(T))+C_0(\cE,\cF)\id \, .\label{eq_s_zero}
\end{align}
Unital operator Jensen~\cite{hansen03} and operator monotonicity of $x \mapsto x^s$~\cite[Table~2.2]{Sutter_book} give\footnote{Note that $J_{\cE} \leq J_{\cF}$ implies $J_{\cE^*} \leq J_{\cF^*}$ and hence $D_{\max}(\cE \| \cF) = D_{\max}(\cE^* \| \cF^*)$.}
\begin{align}
\cE_R^*(T^s) \leq(\cE_R^*(T))^s \leq \ee^{sD_{\max}(\cE\|\cF)}(\cF_R^*T)^s
\end{align}
and
\begin{align}
    \cE_R^*(\log T)
    \leq \log(\cE_R^*(T))
    \leq\log( \exp(D_{\max}(\cE\|\cF)) \cF_R^*(T))
    =\log(\cF_R^*(T))+ D_{\max}(\cE\|\cF) \id \, .
\end{align}
Taking $T=\id$ gives the lower bound zero. This together with~\cref{eq_s_big,eq_s_zero} proves that $0 \leq C_s(\cE,\cF) \leq D_{\max}(\cE\|\cF)$.
For $0<r<s<1$, a further application of Jensen leads to
\begin{equation}\label{eq:monotonicity}
    \cE_R^*(T^r) \leq \bigl(\cE_R^*(T^s)\bigr)^{r/s} 
    \overset{\textnormal{\Cshref{eq_s_big}}}{\leq}\bigl(\ee^{sC_s(\cE,\cF)}(\cF_R^*(T))^s\bigr)^{r/s}
    =\ee^{r C_s(\cE,\cF)}(\cF_R^*(T))^r \, ,
\end{equation}
which implies $C_r\leq C_s$ by the characterization as the smallest constant.
For the case of $r=0$, applying Jensen to the logarithm and using~\eqref{eq_s_big} gives
\begin{equation}
    \cE_R^*(\log T) = \frac{1}{s}\cE_R^*(\log T^s) \leq \frac{1}{s}\log\cE_R^*(T^s) \leq \log A+C_s(\cE,\cF) \id \, ,
\end{equation}
which by comparing the expression with \eqref{eq_s_zero} then gives $C_0(\cE,\cF) \leq C_s(\cE,\cF)$.

To prove additivity under tensor products, we treat the first output system as a reference when applying the inequality to the second channel. Applying \cref{eq_s_big} twice gives
\begin{align}
    (\cE_1^*\!\otimes\!\cE_2^*)(T^s)
    \leq \ee^{s C_s(\cE_2,\cF_2)}(\cE_1^*\!\otimes\!\id) \Big(\bigl((\id\!\otimes\!\cF_2^*)(T)\bigr)^s\Big)
    \leq \ee^{s(C_s(\cE_1,\cF_1)\!+\!C_s(\cE_2,\cF_2))}\bigl((\cF_1^*\!\otimes\!\cF_2^*)(T)\bigr)^s.
\end{align}
The same calculation at $s=0$, using~\cref{eq_s_zero},  yields
\begin{align}
    (\cE_1^*\otimes\cE_2^*)(\log T)
    &\leq(\cE_1^*\otimes\id) \Big(\log\bigl((\id\otimes\cF_2^*)(T)\bigr)\Big)+C_0(\cE_2,\cF_2)\id \\
    &\leq \log\bigl((\cF_1^*\otimes\cF_2^*)(T)\bigr) +(C_0(\cE_1,\cF_1)+C_0(\cE_2,\cF_2)) \id\, .
\end{align}
All these inequalities remain valid with an additional reference. To show the other direction, we take product tests and product input states. \qed

\subsection{Proof of~\cref{lem_var_formula_new}} \label{sec_proof_var_formula}
To prove~\cref{lem_var_formula_new} we need a few preparatory results. We start by recalling known variational expressions for the measured R\'enyi divergence.
\begin{lemma}[\cite{FBT15}]\label{lem:measured-formulas}
Let $\rho \in \St(\cH)$, $\sigma\geq0 $, and $\alpha>1$. Then
\begin{align}
    D_{\alpha,\mathbb{M}}(\rho\|\sigma)&=\sup_{X>0}\left\{\frac{\alpha}{\alpha -1}\log\tr[\rho X^\frac{\alpha -1}{\alpha}]-\log\tr[\sigma X]\right\},\label{eq:var1}\\
    \ee^{(\alpha-1) D_{\alpha,\mathbb{M}}(\rho\|\sigma)} &=\sup_{X>0}\left\{\alpha\tr[\rho X^\frac{\alpha -1}{\alpha}]-(\alpha-1)\tr[\sigma X]\right\},\label{eq:var2}\\
    D_{\mathbb{M}}(\rho\|\sigma) &=\sup_{X>0}\left\{\tr[\rho\log X]+1-\tr[\sigma X ]\right\}.\label{eq:var3}
\end{align}
\end{lemma}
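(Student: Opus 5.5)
The plan is to reduce all three formulas to their classical (commutative) versions, in two directions. For the direction ``$\leq$ measured'', any $X>0$ is turned into a measurement by projecting onto its eigenbasis. For the direction ``$\geq$ measured'', an optimal classical test is lifted to an operator $X$ via operator concavity. Throughout let $\gamma:=\frac{\alpha-1}{\alpha}\in(0,1)$.

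\emph{Classical step.} For probability vectors $p$ and nonnegative $q$, I will show
\begin{align}
D_\alpha(p\|q)=\sup_{t>0}\Big\{\tfrac{1}{\gamma}\log\textstyle\sum_x p_x t_x^{\gamma}-\log\sum_x q_x t_x\Big\}.
\end{align}
The inequality ``$\le$'' comes from Hölder's inequality with exponents $\alpha$ and $\alpha/(\alpha-1)$ applied to $\sum_x p_x t_x^\gamma=\sum_x (p_x q_x^{-\gamma})\,(q_x t_x)^{\gamma}$. Equality is attained at $t_x=(p_x/q_x)^{\alpha}$; if $q_x=0<p_x$, sending $t_x\to\infty$ makes both sides $+\infty$. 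If some $p_x$ vanish, a limit $t_x\to 0$ is needed, which is why we use a supremum over $t>0$ rather than a maximum.

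The linear form~\eqref{eq:var2} follows from~\eqref{eq:var1} by optimizing over the scaling $X\mapsto\lambda X$. For fixed $a=\tr[\rho X^\gamma]$ and $b=\tr[\sigma X]$, one computes $\sup_{\lambda>0}\{\alpha a\lambda^\gamma-(\alpha-1)b\lambda\}=a^{\alpha}b^{1-\alpha}$. This equals $\exp\big((\alpha-1)\cdot[\frac{1}{\gamma}\log a-\log b]\big)$, so the two suprema are monotonically related. The same scaling trick reduces~\eqref{eq:var3} to Gibbs/Donald's variational formula $D(p\|q)=\sup_t\{\sum_x p_x\log t_x-\log\sum_x q_x t_x\}$. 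That formula is itself obtained from $\log y\le y-1$ and equality at $t=p/q$, since optimizing $\lambda$ in $\log\lambda+1-\lambda b$ gives $-\log b$.

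\emph{Quantum step, ``$\sup_X\le$ measured''.} Given $X=\sum_x t_x P_x>0$ in spectral form, the functional only sees $p_x=\tr[\rho P_x]$ and $q_x=\tr[\sigma P_x]$. Indeed, $\tr[\rho X^\gamma]=\sum_x p_x t_x^\gamma$ and $\tr[\sigma X]=\sum_x q_x t_x$. By the classical step the value is at most $D_\alpha(\cM(\rho)\|\cM(\sigma))\le D_{\alpha,\mathbb{M}}(\rho\|\sigma)$ for the projective measurement $\cM=\{P_x\}$. The case of $\log$ is identical.

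\emph{Quantum step, ``measured $\le\sup_X$''.} Fix a POVM $\{M_x\}$, put $p_x=\tr[\rho M_x]$ and $q_x=\tr[\sigma M_x]$, and take a near-optimal classical $t>0$. Set $X:=\sum_x t_x M_x>0$, so that $\tr[\sigma X]=\sum_x q_x t_x$ exactly. For the other term I use the operator Jensen inequality (Hansen--Pedersen) for the operator concave function $y\mapsto y^\gamma$ and the unital channel $t\mapsto\sum_x t_x M_x$. This gives
\begin{align}
X^\gamma=\Big(\sum_x t_x M_x\Big)^{\gamma}\ge\sum_x t_x^\gamma M_x,
\end{align}
hence $\tr[\rho X^\gamma]\ge\sum_x p_x t_x^\gamma$. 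Likewise, operator concavity of $\log$ gives $\tr[\rho\log X]\ge\sum_x p_x\log t_x$. Therefore the operator functional at $X$ dominates the classical one, and taking suprema yields the claim.

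I expect this last step to be the main obstacle. Naively one would want to restrict to projective measurements, which is nontrivial. The operator Jensen trick sidesteps this, but one must check its hypotheses: operator concavity on $(0,\infty)$ and unitality $\sum_x M_x=\id$. The remaining boundary issues are that $\sigma$ need not be full rank and that $\rho\not\ll\sigma$ is allowed. I will handle them by the limits $t_x\to0$ and $t_x\to\infty$ noted in the classical step.
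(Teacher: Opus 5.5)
Your proposal is correct. The paper gives no proof of its own here and simply cites \cite{FBT15}, and your argument is essentially the standard proof from that reference: the classical variational formula via H\"older and $\log y\le y-1$, the eigenprojective measurement of $X$ for one inequality, the operator Jensen inequality for the operator concave maps $y\mapsto y^{\gamma}$ and $\log$ applied to $X=\sum_x t_x M_x$ for the other, and the scaling trick linking the three forms (which, as a by-product, shows that projective measurements suffice in $D_{\alpha,\mathbb{M}}$).
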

We will often use the reparameterization $\alpha = \frac{1}{1-s}$ to switch between $\alpha \in [1,\infty)$ and $s \in [0,1)$.
For $X > 0$, $\rho \in \St(A R)$, and $\sigma \geq 0$, define the function
\begin{align}\label{eq:Vs}
V_s(X,\rho,\sigma):=\frac{1}{s} \log \tr[\rho X^s]-\log\tr[\sigma X] \qquad \textnormal{for } s \in (0,1)
\end{align}
and
\begin{align}
    V_0(X,\rho,\sigma)=\tr[\rho\log X]+1-\tr[\sigma X] \, .
\end{align}
By definition of the adjoint map we have $\tr[\cF_R(\sigma)T]=\tr[\sigma \cF_R^*(T)]$ and thus for every $s \in [0,1)$ 
\begin{align}\label{eq:test-cancellation}
    V_s(T,\cE_R(\rho),\cF_R(\sigma))-V_s(\cF_R^*(T),\rho,\sigma)=q_s(T,\rho)\, .
\end{align}

\begin{lemma}\label{lem:derivative}
Let $A>0$, $\rho \in \St(\cH)$, and $s \in [0,1)$. Define
\begin{equation}
    \sigma_s=f_{s,A}(\rho):= \begin{cases}
    \frac{1}{s} \frac{\di }{\di t} \big(A+t \rho\big)^s \big|_{t=0},&0<s<1,\\
    \frac{\di}{\di t} \log(A + t \rho) \big|_{t=0},&s=0\,. \end{cases}
\end{equation}
Then $\sigma_s\geq 0 $, $\rho\ll\sigma_s$, $\tr[\sigma_s]=\tr[\rho A^{s-1}]$ and $\tr[\sigma_s A]=\tr[\rho A^s]$.
The test $A$ attains the suprema in~\cref{eq:var1,eq:var2} for $(\rho,\sigma_s)$. At $s=0$ it attains the supremum in~\cref{eq:var3} for $(\rho,\sigma_0)$.
In particular, using the notation from~\cref{lem:measured-formulas}
\begin{equation}
    V_s(A,\rho,\sigma_s) = D_{\alpha,\mathbb{M}}(\rho\|\sigma_s)
\end{equation}
\end{lemma}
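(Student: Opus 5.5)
The plan is to obtain everything from the Fréchet derivative of $t\mapsto (A+t\rho)^s$ (resp.\ $\log(A+t\rho)$) at $t=0$, written in the eigenbasis of $A$. Write $A=\sum_i a_i P_i$ with $a_i>0$. By the Daleckii--Krein formula,
\begin{align}
\sigma_s=\sum_{i,j} g_s(a_i,a_j)\, P_i\rho P_j \, ,
\end{align}
where $g_s(x,y)=\frac{x^s-y^s}{s(x-y)}$ for $x\neq y$ and $g_s(x,x)=x^{s-1}$, with the analogous divided difference of $\log$ at $s=0$. The function $x\mapsto x^s/s$ (or $\log x$) is operator monotone on $(0,\infty)$. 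Hence its Loewner matrix $[g_s(a_i,a_j)]_{ij}$ is positive semidefinite, and $\sigma_s$ is a Schur product of a PSD kernel with $\rho\geq 0$, so $\sigma_s\geq0$. Equivalently, one can use the integral representation $x^s=c_s\int_0^\infty \frac{x}{x+\lambda}\lambda^{s-1}\di\lambda$, which gives $\sigma_s=c_s' \int_0^\infty \lambda^{s}(A+\lambda)^{-1}\rho(A+\lambda)^{-1}\di\lambda$. This integrand is manifestly positive, and $\supp\rho\subseteq\supp\sigma_s$ follows because each term $(A+\lambda)^{-1}\rho(A+\lambda)^{-1}$ has the same support as $\rho$ (congruence by an invertible matrix). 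The trace identities follow from the diagonal terms only. Since $\tr[P_i\rho P_j]=0$ for $i\neq j$, we get $\tr\sigma_s=\sum_i a_i^{s-1}\tr[P_i\rho]=\tr[\rho A^{s-1}]$. Likewise $\tr[\sigma_s A]=\sum_i a_i\, a_i^{s-1}\tr[P_i\rho]=\tr[\rho A^s]$, using that $A$ commutes with the $P_i$.

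For optimality, the objectives in \cref{eq:var1,eq:var2,eq:var3} are concave in the right variable, so I would verify a first-order stationarity condition at $X=A$. I will do this for \cref{eq:var2}; the others are analogous. With $\frac{\alpha-1}{\alpha}=s$, the map $X\mapsto \alpha\tr[\rho X^s]-(\alpha-1)\tr[\sigma_s X]$ is concave on $X>0$, since $X\mapsto X^s$ is operator concave. Its directional derivative at $A$ in direction $H$ is
\begin{align}
\alpha s\,\tr\big[\rho\, \tfrac{1}{s}\mathrm{D}(x^s)[A](H)\big]-(\alpha-1)\tr[\sigma_s H] \, .
\end{align}
By the symmetry of the Daleckii--Krein kernel, $\tr[\rho\,\mathrm{D}f[A](H)]=\tr[\mathrm{D}f[A](\rho)\,H]$, so the first term equals $\alpha s\tr[\sigma_s H]=(\alpha-1)\tr[\sigma_sH]$. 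The derivative therefore vanishes, and concavity makes $A$ a global maximizer over the open cone $X>0$. For \cref{eq:var3} with $s=0$, the same computation gives derivative $\tr[\sigma_0 H]-\tr[\sigma_0 H]=0$. The objective $\tr[\rho\log X]-\tr[\sigma_0X]$ is concave by operator concavity of $\log$.

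For \cref{eq:var1}, the objective is not concave in $X$ directly but is invariant under $X\mapsto cX$. I would relate it to \cref{eq:var2} by optimizing over the scale: \cref{eq:var1} equals $\frac{1}{\alpha-1}\log$ of the supremum in \cref{eq:var2}, because the scaling optimum satisfies the same relation. Alternatively, one can check stationarity directly. The gradient is proportional to $\frac{\sigma_s}{\tr[\rho A^s]}-\frac{\sigma_s}{\tr[\sigma_s A]}$, which vanishes by the trace identity just proved. Then I would argue that the maximizer of \cref{eq:var2} (namely $A$, unique up to the scaling structure) also maximizes \cref{eq:var1}.

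Finally, the identity $V_s(A,\rho,\sigma_s)=D_{\alpha,\mathbb M}(\rho\|\sigma_s)$ is immediate from \cref{eq:var1}, since $V_s$ is exactly its objective; at $s=0$ it follows from \cref{eq:var3}. The main obstacle I expect is rigor in the optimality step: global optimality over the open cone $X>0$ requires concavity, which holds for \cref{eq:var2,eq:var3} but must be transferred to \cref{eq:var1} via the scaling argument. Support issues of $\rho$ need no separate care, because all derivatives are taken at the strictly positive $A$.
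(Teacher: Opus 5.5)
Apart from one step, your plan is the paper's proof. Positivity comes from the integral representation $f_{s,A}(X)=\int_0^\infty h_s(t)(A+t\id)^{-1}X(A+t\id)^{-1}\,\di t$; your Loewner-matrix/Schur-product argument is an equivalent alternative. The trace identities come from $f_{s,A}(\id)=A^{s-1}$ and $f_{s,A}(A)=A^s$. Optimality of $A$ in \cref{eq:var2,eq:var3} follows from concavity plus a vanishing directional derivative, obtained from $\tr[X f_{s,A}(Y)]=\tr[f_{s,A}(X)Y]$. Your transfer to \cref{eq:var1} by optimizing the scale is correct. For fixed $X$, $\sup_{c>0}\{\alpha\tr[\rho (cX)^s]-(\alpha-1)\tr[\sigma_s cX]\}=\exp\big((\alpha-1)V_s(X,\rho,\sigma_s)\big)$, and at $X=A$ the optimal scale is $c=1$ because $\tr[\rho A^s]=\tr[\sigma_s A]$. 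The paper shortcuts this by noting $F(A)=\tr[\rho A^s]=\tr[\sigma_s A]$, so $V_s(A,\rho,\sigma_s)=\frac{1}{\alpha-1}\log F(A)$. Your remark that $A$ is unique up to scaling is neither needed nor justified; drop it.

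The step that fails is your proof of $\rho\ll\sigma_s$. Congruence by an invertible matrix preserves rank, not support: $\range\big((A+\lambda\id)^{-1}\rho(A+\lambda\id)^{-1}\big)=(A+\lambda\id)^{-1}\range(\rho)$. Take $\rho=\proj{\psi}$ with $\ket{\psi}$ not an eigenvector of $A$. Then every term in the integral is proportional to the projector onto $(A+\lambda\id)^{-1}\ket{\psi}$, which is never parallel to $\ket{\psi}$. Compare the proof of \cref{lem:polyrank}, where the range of such an integral is spanned by the $P_a\ket{\psi}$ rather than by $\ket{\psi}$.

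The conclusion is still true, but it needs an argument about the whole family of terms. Suppose $\sigma_s v=0$. Then $\int_0^\infty h_s(\lambda)\norm{\rho^{1/2}(A+\lambda\id)^{-1}v}^2\,\di\lambda=0$, so by continuity $\rho^{1/2}(A+\lambda\id)^{-1}v=0$ for every $\lambda>0$. Multiplying by $\lambda$ and letting $\lambda\to\infty$ gives $\rho^{1/2}v=0$, hence $\ker\sigma_s\subseteq\ker\rho$.

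The paper instead gets this at the end of the proof. Since $A$ is an optimizer with finite value, $D_{\alpha,\mathbb{M}}(\rho\|\sigma_s)<\infty$. A finite measured divergence forces $\rho\ll\sigma_s$, as one sees by measuring the projector onto $\ker\sigma_s$.
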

\begin{proof}
For every $s \in[0,1)$ recall the following integral representation (see for instance~\cite{HiaiPetz2014})
\begin{equation}\label{eq:f_s-integral}
    f_{s,A}(X)=\int_0^\infty h_s(t)(A+t \id )^{-1}X(A+t \id)^{-1}\,\di t
    \quad \textnormal{with} \quad
    h_s(t) = \begin{cases} \dfrac{\sin(\pi s)}{\pi s}t^s,&s>0\\
    1,&s=0 \, . \end{cases}
\end{equation}
It shows that $\sigma_s\geq 0$ and $\tr[Xf_{s,A}(Y)]=\tr[f_{s,A}(X)Y]$ for any $X,Y$.
Since $\id$ and $A$ commute, we have $f_{s,A}(\id)=A^{s-1}$ and $f_{s,A}(A)=A^s$.
Hermiticity then gives
\begin{align}
    \tr[\sigma_s]=\tr[\rho f_{s,A}(\id)]=\tr[\rho A^{s-1}] \qquad \textnormal{and} \qquad 
    \tr[\sigma_s A]=\tr[\rho f_{s,A}(A)]=\tr[\rho A^s] \, .
\end{align}\label{eq:eq_A}
We next prove that $A$ maximizes~\cref{eq:var2} and then show that it also attains the supremum in~\cref{eq:var1}. For $s>0$, define $F(U)=\alpha\tr[\rho U^s]-(\alpha-1)\tr[\sigma_s U]$.
Its derivative at $A$ in Hermitian direction $X$ then is
\begin{align}
    \frac{\di}{\di t}F(A+tX) \big|_{t=0} =\alpha s \,  \tr[f_{s,A}(\rho)X]-(\alpha-1)\tr[\sigma_s X]=0.
\end{align}
Here we used again Hermiticity and that $\alpha s=\alpha-1$. Since $F$ is concave, $A$ is indeed a global maximizer and optimizes~\cref{eq:var2}. Since we have $F(A) = \tr[\rho A^s] = \tr[\sigma_sA] = \ee^{(\alpha-1)D_{\alpha,\mathbb{M}}(\rho\|\sigma_s)}$,
we get
\begin{equation}
    V_s(A,\rho,\sigma_s) = \frac{1}{s} \log \tr[\rho A^s] - \log \tr[\sigma_s A] = 
    \left(\frac{1}{s}-1\right)\log  F(A) = \frac{1}{\alpha-1}\log F(A) = D_{\alpha,\mathbb{M}}(\rho\|\sigma_s).
\end{equation}
For $s=0$, the objective in~\cref{eq:var3} is concave.
Its derivative at $A$ in any Hermitian direction $X$ is 
\begin{equation}
     \frac{\di}{\di t} V_0(A+tX,\rho,\sigma_0) \Bigr|_{t=0} = \tr[f_{0,A}(\rho)X]-\tr[\sigma_0 X] = 0 \, .
\end{equation}
Hence $A$ is a global maximizer, which gives
\begin{equation}
    V_0(A,\rho,\sigma_0) = D_{\mathbb{M}}(\rho\|\sigma_0)\,.
\end{equation}
Since the optimal $A$ is finite, we get $D_{\alpha,\mathbb{M}}(\rho\|\sigma_s) < \infty$ which also shows $\rho \ll \sigma_s$.
\end{proof}

\begin{lemma}\label{lem:measured-amortization}
Let $s \in [0,1)$ and $\cE,\cF \in \CPTP(A,B)$. Then,
\begin{align}
    C_s(\cE,\cF)=\sup_{\rho,\sigma \in \St(A R):\rho\ll\sigma} \left\{D_{\frac{1}{1-s},\mathbb{M}}\big(\cE_R(\rho)\|\cF_R(\sigma)\big)-D_{\frac{1}{1-s},\mathbb{M}}(\rho\|\sigma)\right\}.
\end{align}
\end{lemma}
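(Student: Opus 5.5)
We prove the two inequalities separately, writing $\alpha=\frac{1}{1-s}$ and denoting the right-hand side by $M_s$.

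\emph{Upper bound $M_s\le C_s(\cE,\cF)$.} Fix $\rho,\sigma\in\St(AR)$ with $\rho\ll\sigma$ and fix a test $T>0$ on $BR$. We insert the identity \cref{eq:test-cancellation}:
\begin{align}
V_s\big(T,\cE_R(\rho),\cF_R(\sigma)\big)=q_s(T,\rho)+V_s\big(\cF_R^*(T),\rho,\sigma\big).
\end{align}
The first term is at most $C_s(\cE,\cF)$ by definition. Since $\cF_R^*(T)>0$, the second term is at most $D_{\alpha,\mathbb{M}}(\rho\|\sigma)$ by \cref{lem:measured-formulas}. For $s>0$ this uses \cref{eq:var1} with exponent $\frac{\alpha-1}{\alpha}=s$ and prefactor $\frac{\alpha}{\alpha-1}=\frac1s$, which matches $V_s$ exactly. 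For $s=0$ it uses \cref{eq:var3}. Taking the supremum over $T$ on the left-hand side and invoking \cref{lem:measured-formulas} again gives
\begin{align}
D_{\alpha,\mathbb{M}}(\cE_R(\rho)\|\cF_R(\sigma))\le C_s(\cE,\cF)+D_{\alpha,\mathbb{M}}(\rho\|\sigma).
\end{align}
Since $\rho\ll\sigma$, the subtracted term is finite, so this rearrangement is legitimate.

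\emph{Lower bound $M_s\ge C_s(\cE,\cF)$.} Fix $T>0$ and $\rho\in\St(AR)$, and put $A=\cF_R^*(T)>0$. Let $\sigma_s=f_{s,A}(\rho)$ as in \cref{lem:derivative}, so that $\rho\ll\sigma_s$ and
\begin{align}
V_s(A,\rho,\sigma_s)=D_{\alpha,\mathbb{M}}(\rho\|\sigma_s).
\end{align}
Using \cref{eq:test-cancellation} again, together with the variational formula as a lower bound for the output pair, we get
\begin{align}
q_s(T,\rho)=V_s(T,\cE_R(\rho),\cF_R(\sigma_s))-V_s(A,\rho,\sigma_s)\le D_{\alpha,\mathbb{M}}(\cE_R(\rho)\|\cF_R(\sigma_s))-D_{\alpha,\mathbb{M}}(\rho\|\sigma_s).
\end{align}
The remaining issue is that $\sigma_s$ need not be normalized. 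Its trace is $\tr[\rho A^{s-1}]$, which is positive.
\begin{itemize}
\item For $s>0$, both the measured R\'enyi divergence and $V_s$ shift by the same additive constant $-\log c$ under $\sigma\mapsto c\sigma$. These shifts cancel in the difference, so we may replace $\sigma_s$ by $\sigma_s/\tr[\sigma_s]$.
\item For $s=0$, \cref{eq:var3} is not scale-covariant. Here we first argue scale-invariance of the difference directly: the supremum in \cref{eq:var3} over $X$ and over $\lambda X$ gives $D_{\mathbb{M}}(\rho\|c\sigma)=D_{\mathbb{M}}(\rho\|\sigma)-\log c$ for normalized $\rho$. This holds for both pairs because $\tr\cE_R(\rho)=1$, so the shifts again cancel.
\end{itemize}
Taking the supremum over $T$ and $\rho$ then yields $C_s(\cE,\cF)\le M_s$.

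The main obstacle is this normalization and scaling bookkeeping, especially at $s=0$. We also need to check that \cref{lem:measured-formulas} is valid for unnormalized $\sigma\ge0$, which is its stated hypothesis. Finally, we must ensure that $\cF_R(\sigma)$ has the correct trace relation, which holds because $\cF$ is trace preserving.
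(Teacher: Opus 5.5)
Your proposal is correct and follows the paper's proof essentially step by step. The upper bound comes from the test-cancellation identity \cref{eq:test-cancellation} together with the variational formulas of \cref{lem:measured-formulas}. The lower bound comes from choosing $\sigma_s=f_{s,A}(\rho)$ via \cref{lem:derivative} and then renormalizing. Your explicit check that $D_{\mathbb{M}}(\rho\|c\sigma)=D_{\mathbb{M}}(\rho\|\sigma)-\log c$ at $s=0$ (using $\tr\rho=\tr\cE_R(\rho)=1$) simply spells out the paper's one-line remark that rescaling $\sigma_s$ shifts both divergences by the same constant.
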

\begin{proof}
We start by noting that for any $T >0$ and $A=\cF_R^*(T)$
\begin{align}
    V_s(T,\cE_R(\rho),\cF_R(\sigma))-D_{\frac{1}{1-s},\mathbb{M}}(\rho\|\sigma)
    \overset{\textnormal{\Cshref{lem:measured-formulas}}}&{\leq} V_s(T,\cE_R(\rho),\cF_R(\sigma))-V_s(A,\rho,\sigma)\\
    \overset{\textnormal{\Cshref{eq:test-cancellation}}}&{=} q_s(T,\rho) \\
    \overset{\textnormal{\Cshref{eq_def_C}}}&{\leq} C_s(\cE,\cF) \, .
\end{align}
Optimizing over all $T>0$ and using~\cref{lem:measured-formulas} yields
\begin{align}
    C_s(\cE,\cF)\geq \sup_{\rho,\sigma \in \St(AR):\rho\ll\sigma} \left\{D_{\frac{1}{1-s},\mathbb{M}}\big(\cE_R(\rho)\|\cF_R(\sigma)\big)-D_{\frac{1}{1-s},\mathbb{M}}(\rho\|\sigma)\right\}.
\end{align}

To see the other direction, fix $T>0$, a density matrix $\rho$, and choose $A = \cF_R^*(T)$ and $\sigma_s=f_{s,A}(\rho)$. By \cref{lem:derivative}, $A$ is now an optimal input test and hence
\begin{align}
    D_{\frac{1}{1-s},\mathbb{M}}\big(\cE_R(\rho)\|\cF_R(\sigma_s))-D_{\frac{1}{1-s},\mathbb{M}}(\rho\|\sigma_s)
    \overset{\textnormal{\Cshref{lem:derivative}}}&{\geq} V_s(T,\cE_R(\rho),\cF_R(\sigma_s))-V_s(A,\rho,\sigma_s) \\
    \overset{\textnormal{\Cshref{eq:test-cancellation}}}&{=}q_s(T,\rho)\, .\label{eq:amort-lower}
\end{align}
Replacing $\sigma_s$ by $\sigma_s/\tr[\sigma_s]$ adds $\log \tr \sigma_s$ to both divergences and therefore leaves their difference unchanged.
Since this holds for all $T>0$ and all density matrices $\rho$ we find
\begin{align}
    C_s(\cE,\cF)\leq \sup_{\rho,\sigma \in \St(AR):\rho\ll\sigma} \left\{D_{\frac{1}{1-s},\mathbb{M}}\big(\cE_R(\rho)\|\cF_R(\sigma)\big)-D_{\frac{1}{1-s},\mathbb{M}}(\rho\|\sigma)\right\}\, .
\end{align}
\end{proof}

\begin{lemma}\label{lem:pinch}
        Let $\rho \in \St(\cH)$, $\sigma\geq0$ and $\alpha\geq 1$. If $\sigma$ has $v$ distinct positive eigenvalues, we have
    \begin{align} \label{eq_piching}
        D_\alpha(\rho \| \sigma) \leq  D_{\alpha,\mathbb{M}}(\rho \| \sigma) + \log v \, .
    \end{align}
\end{lemma}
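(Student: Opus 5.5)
The plan is to go through the pinching map of $\sigma$. Write $\sigma = \sum_{i=1}^v \lambda_i P_i$ for its spectral decomposition over the distinct positive eigenvalues, where the $P_i$ are the eigenprojections. Define the pinching $\cP_\sigma(X) := \sum_i P_i X P_i$, restricted to $\supp(\sigma)$. We may assume $\rho \ll \sigma$, since otherwise both sides are $+\infty$: the measured divergence is also infinite in that case, as one can measure with the projection onto $\ker\sigma$. We may also assume $\alpha>1$; the case $\alpha=1$ follows by taking the limit, or by the identical argument with the operator monotonicity of $\log$.

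The argument has three steps.

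\textbf{Step 1.} Use the pinching inequality $\rho \leq v\, \cP_\sigma(\rho)$, which holds on $\supp(\sigma)$. It follows from writing $\cP_\sigma(X) = \frac{1}{v}\sum_{k} U_k X U_k^*$ with the unitaries $U_k = \sum_j \ee^{2\pi \ci jk/v} P_j$ and keeping only the $k=0$ term. Conjugating with $\sigma^{\frac{1-\alpha}{2\alpha}}$ preserves the operator inequality, so
\begin{align}
\sigma^{\frac{1-\alpha}{2\alpha}}\rho\,\sigma^{\frac{1-\alpha}{2\alpha}} \leq v\, \sigma^{\frac{1-\alpha}{2\alpha}}\cP_\sigma(\rho)\,\sigma^{\frac{1-\alpha}{2\alpha}} .
\end{align}

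\textbf{Step 2.} The map $X\mapsto \tr[X^\alpha]$ is monotone under the Loewner order for $\alpha\geq 1$, by Weyl monotonicity of the eigenvalues. Applying it to Step 1 gives
\begin{align}
\tr\big[(\sigma^{\frac{1-\alpha}{2\alpha}}\rho\,\sigma^{\frac{1-\alpha}{2\alpha}})^\alpha\big] \leq v^\alpha \tr\big[(\sigma^{\frac{1-\alpha}{2\alpha}}\cP_\sigma(\rho)\,\sigma^{\frac{1-\alpha}{2\alpha}})^\alpha\big].
\end{align}
Taking $\frac{1}{\alpha-1}\log$ of both sides yields $D_\alpha(\rho\|\sigma) \leq D_\alpha(\cP_\sigma(\rho)\|\sigma) + \frac{\alpha}{\alpha-1}\log v$. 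This constant $\frac{\alpha}{\alpha-1}\log v$ is too weak: it blows up as $\alpha\downarrow 1$, whereas the claim requires exactly $\log v$.

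\textbf{Step 3 (main obstacle).} To obtain the sharp constant I would rely on the known pinching bound for the sandwiched divergence, $D_\alpha(\rho\|\sigma) \leq D_\alpha(\cP_\sigma(\rho)\|\sigma) + \log v$ (Hayashi--Tomamichel). Its proof writes
\begin{align}
Q_\alpha(\rho\|\sigma) = \big\|\sigma^{\frac{1-\alpha}{2\alpha}}\rho\,\sigma^{\frac{1-\alpha}{2\alpha}}\big\|_\alpha^\alpha
\end{align}
and uses the Step 1 bound only inside a norm duality. Concretely, $Q_\alpha(\rho\|\sigma)^{1/\alpha} = \sup_{\|Y\|_{\alpha'}\le1}\tr[\rho\,\sigma^{\frac{1-\alpha}{2\alpha}}Y\sigma^{\frac{1-\alpha}{2\alpha}}]$ with $Y\geq0$. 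Split one power, writing $Q_\alpha = \tr[\rho\, \sigma^{\frac{1-\alpha}{2\alpha}} Z^{\alpha-1}\sigma^{\frac{1-\alpha}{2\alpha}}]$ with $Z$ the sandwiched operator. Then bound $\rho\le v\cP_\sigma(\rho)$ in that one linear slot, and use Hölder so that the factor $v$ enters with exponent $1$ relative to the $\alpha-1$ normalization. This gives $Q_\alpha(\rho\|\sigma)\le v^{\alpha-1}Q_\alpha(\cP_\sigma(\rho)\|\sigma)$. An alternative route to the same bound is the data-processing inequality applied with the pinching channel, combined with the inequality $\cP_\sigma(\rho)\ge \rho/v$. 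I expect this step to need the most care.

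\textbf{Conclusion.} Since $\cP_\sigma(\rho)$ and $\sigma$ commute, $D_\alpha(\cP_\sigma(\rho)\|\sigma)$ is a classical Rényi divergence. Measuring in a common eigenbasis $\{\ket{e_j}\}$ of $\cP_\sigma(\rho)$ and $\sigma$ achieves it, since $\bra{e_j}\rho\ket{e_j} = \bra{e_j}\cP_\sigma(\rho)\ket{e_j}$. Hence $D_\alpha(\cP_\sigma(\rho)\|\sigma) = D_\alpha(\cM(\rho)\|\cM(\sigma)) \leq D_{\alpha,\mathbb{M}}(\rho\|\sigma)$, which combined with Step 3 proves~\cref{eq_piching}.
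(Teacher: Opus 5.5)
Your Steps 1 and 2, the reduction to $\rho\ll\sigma$, and the final measurement step are all fine. Step 3, however, carries the entire content of the lemma (the factor $v^{\alpha-1}$ instead of $v^{\alpha}$), and it does not work as you describe it. Set $Z=\sigma^{\frac{1-\alpha}{2\alpha}}\rho\,\sigma^{\frac{1-\alpha}{2\alpha}}$ and $W=\sigma^{\frac{1-\alpha}{2\alpha}}\cP_\sigma(\rho)\,\sigma^{\frac{1-\alpha}{2\alpha}}=\cP_\sigma(Z)$. Inserting $\rho\le v\,\cP_\sigma(\rho)$ into the linear slot and applying H\"older gives
\[
\tr[Z^\alpha]\le v\,\tr[WZ^{\alpha-1}]\le v\,\|W\|_\alpha\|Z\|_\alpha^{\alpha-1}.
\]
This yields only $\|Z\|_\alpha\le v\|W\|_\alpha$, i.e.\ $\tr[Z^\alpha]\le v^{\alpha}\tr[W^\alpha]$. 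That is exactly the weak bound of Step 2, with constant $\frac{\alpha}{\alpha-1}\log v$.

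Your alternative route fails as well. Data processing under $\cP_\sigma$ gives $D_\alpha(\cP_\sigma(\rho)\|\sigma)\le D_\alpha(\rho\|\sigma)$, which is the opposite direction. Combining $\rho\le v\,\cP_\sigma(\rho)$ with monotonicity in the first argument just reproduces Step 2. So invoking Hayashi--Tomamichel for $D_\alpha(\rho\|\sigma)\le D_\alpha(\cP_\sigma(\rho)\|\sigma)+\log v$ assumes the nontrivial part of the statement, and the justification you sketch for it is incorrect.

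The missing idea is the one the paper uses. On $\supp\sigma$, write $Z=\sum_{i=1}^v Z^{1/2}P_iZ^{1/2}$. Then apply convexity of $X\mapsto\tr[X^\alpha]$ to the \emph{average} $\frac1v\sum_i Z^{1/2}P_iZ^{1/2}$. The prefactor $v^\alpha$ is reduced by the $\frac1v$ from Jensen, giving $\tr[Z^\alpha]\le v^{\alpha-1}\sum_i\tr[(Z^{1/2}P_iZ^{1/2})^\alpha]$. Since $Z^{1/2}P_iZ^{1/2}=(Z^{1/2}P_i)(Z^{1/2}P_i)^*$ has the same nonzero eigenvalues as $P_iZP_i$, and these operators live on orthogonal blocks, the sum equals $\tr[W^\alpha]$.

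For $\alpha\in(1,2]$ you could instead repair your own argument by putting the operator inequality inside the power rather than in the linear slot. From $Z\le vW$ and operator monotonicity of $t\mapsto t^{\alpha-1}$ you get $\tr[Z^\alpha]\le v^{\alpha-1}\tr[ZW^{\alpha-1}]=v^{\alpha-1}\tr[W^\alpha]$, where the equality uses that $W^{\alpha-1}$ is block diagonal. This breaks down for $\alpha>2$, whereas the lemma is stated for all $\alpha\ge 1$. Once you have the sharp bound, your common-eigenbasis argument finishes the proof exactly as in the paper.
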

\begin{proof}
Without loss of generality we can assume that $\rho \ll \sigma$ as otherwise~\cref{eq_piching} is trivial.
    Suppose the spectral decomposition of $\sigma$ is given by $\sigma = \sum_{i=1}^v q_i P_i$. Then we can define a pinching with respect to $\sigma$ by $\cP_\sigma(X) = \sum_{i=1}^v P_i X P_i$.\footnote{More information about the pinching map can be found in~\cite[Section~3.1]{Sutter_book}.}
    Now define the matrix 
    \begin{equation}
        X = \sigma^{\frac{1-\alpha}{2\alpha}} \rho \sigma^{\frac{1-\alpha}{2\alpha}} \, .
    \end{equation}
    Since $\rho \ll \sigma$, we can restrict ourselves to $\supp \, \sigma$ and get $X = \sum_i X^{1/2}  P_i X^{1/2} =: \sum_i Y_i$. By convexity of $X \mapsto \tr[X^\alpha]$~\cite[Proposition~2.10]{Sutter_book}, we have
    \begin{equation}
        \tr[X^\alpha] 
        = v^\alpha \tr\left[\left(\frac{1}{v}\sum_{i=1}^v Y_i\right)^\alpha\right] 
        \leq v^\alpha \frac{1}{v} \sum_{i=1}^v \tr[Y_i^\alpha]  
        = v^{\alpha -1} \sum_{i=1}^v \tr[Y_i^\alpha] \, . \label{eq_LS_pinch1}
    \end{equation}
    Now define $Z_i := X^{1/2} P_i$. We then see that $Z_i Z_i^* = Y_i$ and $Z_i^*Z_i = P_i X P_i$ and therefore that both $Y_i$ and $P_i X P_i$ have the same non-zero eigenvalues. Thus by orthogonality
    \begin{equation}
        \sum_{i=1}^v \tr[Y_i^\alpha] 
        = \sum_{i=1}^v \tr[(P_i X P_i)^\alpha] 
        = \tr \left[ \left( \sum_{i=1}^v P_i X P_i \right)^\alpha \right] 
        = \tr[ \cP_\sigma(X)^\alpha] \, . \label{eq_LS_pinch2}
    \end{equation}
    For $\alpha >1$, we then have 
    \begin{align}
        D_\alpha(\rho \| \sigma) 
        &= \frac{1}{\alpha - 1} \log \tr[X^\alpha] \\
        \overset{\textnormal{\Cshref{eq_LS_pinch1,eq_LS_pinch2}}}&{\leq}  \frac{1}{\alpha-1} \log \tr\left[ \cP_\sigma(X)^\alpha \right] + \log v \\
        &= D_\alpha(P_\sigma(\rho)\| \sigma) + \log v \\
        &= D_\alpha(P_\sigma(\rho)\| P_\sigma(\sigma)) + \log v \\
        &\leq D_{\alpha,\mathbb{M}}(\rho\|\sigma) + \log v \, ,
    \end{align}
    where in the penultimate step we used that $P_{\sigma}(\sigma)=\sigma$.
    Sending $\alpha \to 1$ then shows the complete statement~\cite{PH91,MLDSFT13}.
\end{proof}

\begin{lemma}\label{lem:polyrank}
    Let $A>0$ and $\rho =\ket{\psi}\bra{\psi}$ be a pure state. For $m \in \N$ and $s\in[0,1)$, define $\sigma_m := f_{s,A^{\otimes m}}(\rho^{\otimes m})$. Then 
    \begin{equation}
        \rank \, \sigma_m \leq \binom{m+\dim(A)-1}{\dim(A)-1} \leq (m+1)^{\dim(A)-1} \, .
    \end{equation}
\end{lemma}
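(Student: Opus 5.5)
The plan is to work directly from the integral representation~\cref{eq:f_s-integral} from the proof of \cref{lem:derivative}. Write $A_m := A^{\otimes m}$ and $\ket{\psi_m} := \ket{\psi}^{\otimes m}$, so that $\rho^{\otimes m} = \proj{\psi_m}$. Then
\begin{align}
\sigma_m = \int_0^\infty h_s(t)\, (A_m + t\id)^{-1}\proj{\psi_m}(A_m+t\id)^{-1}\,\di t = \int_0^\infty h_s(t)\, \proj{v_t}\,\di t ,
\end{align}
where $\ket{v_t} := (A_m+t\id)^{-1}\ket{\psi_m}$. Since $h_s \geq 0$, $\sigma_m$ is a positive superposition of rank-one projectors. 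Hence $\range(\sigma_m)$ is contained in the span of the vectors $\{\ket{v_t}\}_{t\geq 0}$: any vector orthogonal to all $\ket{v_t}$ lies in the kernel of $\sigma_m$. This argument is the same for $s>0$ and $s=0$; only the weight $h_s$ changes.

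Next I will bound the dimension of this span using only the spectral structure of $A_m$. Let $d$ denote the dimension of the space on which $A$ acts, and write $A = \sum_{j=1}^{d} a_j \proj{e_j}$ in an eigenbasis, allowing repeated $a_j$. The eigenvectors of $A_m$ are the products $\ket{e_{j_1}}\otimes\cdots\otimes\ket{e_{j_m}}$, with eigenvalue $\prod_j a_j^{k_j}$. Here $k = (k_1,\dots,k_d)$ is the type of the sequence, i.e.\ $k_j$ counts how often $j$ occurs. So the eigenvalue depends only on the type, and the number of distinct eigenvalues of $A_m$ is at most the number of types, namely $\binom{m+d-1}{d-1}$.

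Let $Q_\lambda$ be the spectral projections of $A_m$. Then $(A_m+t\id)^{-1} = \sum_\lambda (\lambda+t)^{-1} Q_\lambda$, and therefore $\ket{v_t} = \sum_\lambda (\lambda+t)^{-1} Q_\lambda\ket{\psi_m}$. Every $\ket{v_t}$ thus lies in $\mathrm{span}\{Q_\lambda\ket{\psi_m}\}_\lambda$, whose dimension is at most the number of distinct eigenvalues of $A_m$. This gives $\rank\,\sigma_m \leq \binom{m+d-1}{d-1}$. Purity of $\rho$ is essential here: it makes the range of each integrand one-dimensional.

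The second inequality is a counting bound. A type is determined by $(k_1,\dots,k_{d-1})$ with each $k_j\in\{0,\dots,m\}$, since $k_d = m-\sum_{j<d}k_j$, which gives at most $(m+1)^{d-1}$ types. I do not expect a real obstacle. The only point requiring care is the justification that the range of the integral lies in the closed span of the integrand vectors, which is immediate in finite dimensions by the kernel argument above. One should also check that $(A_m+t\id)^{-1}$ is well defined for all $t\ge0$, which holds because $A>0$.
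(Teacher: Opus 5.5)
Your proof is correct and follows the paper's argument. You write $\sigma_m$ through the integral representation as a positive mixture of rank-one projectors onto $(A^{\otimes m}+t\id)^{-1}\ket{\psi}^{\otimes m}$, place all these vectors in the span of the spectral projections of $A^{\otimes m}$ applied to $\ket{\psi}^{\otimes m}$, and bound the number of distinct eigenvalues by the number of types; the only difference is that you derive the two type-counting bounds yourself, whereas the paper cites Csisz\'ar's lemma for them.
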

\begin{proof}
    By~\cref{eq:f_s-integral}, we have
    \begin{equation}
        \sigma_m = \int_0^\infty h_s(t) \ket{\varphi_t}\bra{\varphi_t} \, \di t \qquad \textnormal{with} \qquad \ket{\varphi_t} = (A^{\otimes m}+t\id)^{-1} \ket{\psi}^{\otimes m} \, .
    \end{equation}
    Then writing $A^{\otimes m}$ in its spectral decomposition gives $A^{\otimes m} = \sum_a a P_a$, and thus $\ket{\varphi_t} = \sum_a \frac{1}{a + t}P_a \ket{\psi}^{\otimes m}$. Therefore, we have $\range \, \sigma_m \subseteq \mathrm{span}\, \{P_a \ket{\psi}^{\otimes m}\}_a$, which allows us to bound its rank by the size of the spectrum of $A^{\otimes m}$. Then~\cite[Lemma II.1]{csiszar98} gives
    \begin{equation}
        |\spec(A^{\otimes m})| \leq \binom{m+\dim(A)-1}{\dim(A)-1} \leq (m+1)^{\dim(A)-1} \, .
    \end{equation}
\end{proof}


We are equipped with all preliminaries to prove the assertion of~\cref{lem_var_formula_new}.
For $\rho=\proj{\psi}_{AR}$, $m \in \N$, $T > 0$, and $A=\cF_R^*(T)$ let $\sigma_m = f_{s,A^{\otimes m}}(\rho^{\otimes m})$ and note that for any $s \in [0,1)$
\begin{align}
D_{\frac{1}{1-s},\mathbb{M}}\big(\cE_R^{\otimes m}(\rho^{\otimes m}) \| \cF_R^{\otimes m}(\sigma_m)\big) - D_{\frac{1}{1-s},\mathbb{M}}(\rho^{\otimes m} \| \sigma_m)
\overset{\textnormal{\Cshref{eq:amort-lower}}}{\geq} q_s(T^{\otimes m},\rho^{\otimes m})
\overset{\textnormal{\Cshref{eq_def_q}}}{=} m q_s(T,\rho) \, . \label{eq_var_formula_step1}
\end{align}
Furthermore, we have
\begin{align}
&\hspace{-15mm}D_{\frac{1}{1-s}}\big(\cE_R^{\otimes m}(\rho^{\otimes m}) \| \cF_R^{\otimes m}(\sigma_m)\big) - D_{\frac{1}{1-s}}(\rho^{\otimes m} \| \sigma_m) \nonumber \\
\overset{\textnormal{\Cshref{lem:pinch,lem:polyrank}}}&{\geq} D_{\frac{1}{1-s},\mathbb{M}}\big(\cE_R^{\otimes m}(\rho^{\otimes m}) \| \cF_R^{\otimes m}(\sigma_m)\big) - D_{\frac{1}{1-s},\mathbb{M}}(\rho^{\otimes m} \| \sigma_m) - o(m) \\
\overset{\textnormal{\Cshref{eq_var_formula_step1}}}&{\geq}  m q_s(T,\rho) - o(m) \, . \label{eq_var_formula_step2}
\end{align}
This can be rewritten as
\begin{align}
q_s(T,\rho) 
\overset{\textnormal{\Cshref{eq_var_formula_step2} }}&{\leq}  \frac{1}{m}\Big(D_{\frac{1}{1-s}}\big(\cE_R^{\otimes m}(\rho^{\otimes m}) \| \cF_R^{\otimes m}(\sigma_m)\big) - D_{\frac{1}{1-s}}(\rho^{\otimes m} \| \sigma_m) \Big) + o(1) \\
\overset{\textnormal{\Cshref{eq_chain_rule}}}&{\leq} D_{\frac{1}{1-s}}^{\infty}(\cE \| \cF) + o(1) \, . \label{eq_var_formula_step3}
\end{align}
 Since~\cref{eq_var_formula_step3} is true for any $T>0$ and $\rho=\proj{\psi}_{AR}$ we find by taking the limit $m \to \infty$
\begin{align} \label{eq_var_formula_proof_done1}
C_s(\cE , \cF) \leq  D_{\frac{1}{1-s}}^{\infty}(\cE \| \cF) \, .
\end{align}

To see the other direction, let $(\rho_k)_{k \in \N}$ be an optimizing sequence of states. Then define the density matrices $\omega_k:=\cE_R^{\otimes k}(\rho_k)$ and $\tau_k:=\cF_R^{\otimes k}(\rho_k)$ such that for $m \in \N$ we get
\begin{align}
    D_{\frac{1}{1-s},\mathbb{M}}\big(\omega_k^{\otimes m}  \| \tau_k^{\otimes m}\big)
    &=D_{\frac{1}{1-s},\mathbb{M}}\big(\cE_R^{\otimes km}(\rho_k^{\otimes m})  \| \cF_R^{\otimes km}(\rho_k^{\otimes m})\big) \\
    \overset{\textnormal{\Cshref{lem:measured-amortization}}}&{\leq} C_s(\cE^{\otimes km} , \cF^{\otimes km})  \\
    \overset{\textnormal{\Cshref{lem:properties}}}&{=} km C_s(\cE , \cF) \, . \label{eq_var_formula_step4}
\end{align}
Hence, by definition of the sequence $(\rho_k)_{k \in \N}$ we have
\begin{align}
D_{\frac{1}{1-s}}^{\infty}(\cE \| \cF)
&=\lim_{k \to \infty} \frac{1}{k} D_{\frac{1}{1-s}} \big(\cE_R^{\otimes k}(\rho_k) \| \cF_R^{\otimes k}(\rho_k) \big) \\
&=\lim_{k \to \infty}  \frac{1}{k} D_{\frac{1}{1-s}} (\omega_k \| \tau_k)\\
&=\lim_{k \to \infty} \lim_{m \to \infty}  \frac{1}{m k} D_{\frac{1}{1-s}} \big(\omega_k^{\otimes m} \| \tau_k^{\otimes m}\big) \\
\overset{\textnormal{\cite[Fact B.3]{MSR25}}}&{=} \lim_{k \to \infty} \lim_{m \to \infty}  \frac{1}{m k} D_{\frac{1}{1-s},\mathbb{M}} \big(\omega_k^{\otimes m} \| \tau_k^{\otimes m}\big) \\
\overset{\textnormal{\Cshref{eq_var_formula_step4}}}&{\leq}C_s(\cE , \cF) \, ,
\end{align}
which together with~\cref{eq_var_formula_proof_done1} completes the proof. \qed

\subsection{Proof of~\cref{lem_cont_Cs}} \label{sec_proof_confinement}
Proving~\cref{lem_cont_Cs} consists of two main challenges. The first part is to show that the function $q_s(T,\rho)$ can be bounded from above by $C_0(\cE,\cF)$ plus a controlled error term. This is possible if we confine the support of $\rho$ to a short interval of $\log \cF_R^*(T)$.
The second challenge is to  construct a pair of near-optimal $T$ and $\rho$ that fulfill this support condition.
\begin{definition}\label{def:confined}
    We say a Hermitian matrix $Z$ is confined to width $L \geq 0$ for a Hermitian matrix $Y$ if for some $x\in \mathbb{R}$
    \begin{equation}
    \supp\,Z \subseteq \range \mathbf{1}_{[x,x+L]}(Y)\, 
\end{equation}
\end{definition}

\begin{lemma}\label{lem:newtest}
Let $\cF \in \CPTP(A,B)$, $s \in (0,1)$, $T>0$, $A=\cF_R^*(T)$, and let $\rho \in \St(RA)$ be confined to width $L\geq 0$ for $\log A$. Then there exists a $\lambda>0$ such that $S=\lambda T+\id$ satisfies
\begin{equation}
\tr[\rho\cF_R^*(S)]\leq \ee^L+1 \qquad \textnormal{and} \qquad   q_s(T,\rho)\leq q_s(S,\rho)+\log2 \, .
\end{equation}
\end{lemma}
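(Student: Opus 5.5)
The plan is to take $\lambda$ to be the reciprocal of the bottom of the confinement window. By \cref{def:confined} there is $x\in\R$ with $\supp\,\rho\subseteq\range\,P$, where $P:=\mathbf{1}_{[x,x+L]}(\log A)$. Set $\lambda:=\ee^{-x}>0$ and $S:=\lambda T+\id$, which is clearly positive definite. The key structural fact is that $P$ is a spectral projector of $A$, so $P$ commutes with every function of $A$. Consequently $\rho=P\rho P$, and on $\range\,P$ the eigenvalues of $\lambda A$ lie in $[1,\ee^L]$.

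For the first inequality I would use that $\cF_R^*$ is linear and unital, since $\cF$ is CPTP. This gives $\cF_R^*(S)=\lambda A+\id$. Hence
\begin{align}
\tr[\rho\,\cF_R^*(S)]=\tr[\rho\,P(\lambda A+\id)P]\leq \lambda\ee^{x+L}+1=\ee^L+1 \, ,
\end{align}
because $P(\lambda A+\id)P\leq(\ee^L+1)P$ and $\tr[\rho]=1$.

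For the second inequality I would compare the numerator and the denominator of $q_s$ separately.
\begin{itemize}
\item Numerator: we have $\lambda T+\id\geq\lambda T$, and $x\mapsto x^s$ is operator monotone for $s\in(0,1)$. Therefore $(\lambda T+\id)^s\geq\lambda^sT^s$. Since $\cE_R(\rho)\geq0$, this gives $\tr[\cE_R(\rho)S^s]\geq\lambda^s\tr[\cE_R(\rho)T^s]$.
\item Denominator: the operators $P$, $(\lambda A+\id)^s$ and $(\lambda A)^s$ commute, and every eigenvalue $a$ of $\lambda A$ on $\range\,P$ satisfies $a\geq1$. For such $a$ we have $(a+1)^s=a^s(1+1/a)^s\leq2^sa^s$. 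Hence $P(\lambda A+\id)^sP\leq2^s\lambda^sPA^sP$, and so $\tr[\rho\,\cF_R^*(S)^s]\leq2^s\lambda^s\tr[\rho A^s]$.
\end{itemize}
Dividing the two bounds, the factors $\lambda^s$ cancel and we obtain
\begin{align}
q_s(S,\rho)\geq\frac1s\log\frac{\lambda^s\tr[\cE_R(\rho)T^s]}{2^s\lambda^s\tr[\rho A^s]}=q_s(T,\rho)-\log 2 \, .
\end{align}

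I do not expect a real obstacle here. The only point needing care is the denominator bound: the scalar inequality can be applied only because $\rho$ lives in a spectral subspace of $A$, which lets us restrict to commuting operators. This is exactly where the lower end of the confinement window, which forces $\lambda a\geq1$, is used. The upper end of the window is what gives the $\ee^L$ in the first inequality.
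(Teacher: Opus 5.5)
Your proposal is correct and follows the paper's proof essentially step for step. You take $\lambda=\ee^{-x}$, use unitality to get $\cF_R^*(S)=\ee^{-x}A+\id$, bound the numerator of $q_s$ via operator monotonicity of $x\mapsto x^s$, and bound the denominator via the scalar inequality $(a+1)^s\leq 2^s a^s$ for $a\geq1$ on the confinement window. You are, if anything, more explicit than the paper about why the scalar inequality may be applied: you note that $P$ is a spectral projector of $A$ and that $\rho=P\rho P$.
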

\begin{proof}
Take $x$ from~\cref{def:confined}, and set $\lambda= \ee^{-x}$. By unitality, we get $\cF_R^*(S) = \ee^{-x} A + \id$ and since $\rho$ is supported on the interval of $A$ corresponding to the eigenvalues $[\ee^{x},\ee^{x+L}]$, we find 
\begin{equation}
    \tr[\rho \cF_R^*(S)] = \tr[\rho (\ee^{-x}A + \id)]\leq \ee^L + 1 \, .
\end{equation}
Next, we show that $q_s$ does not change too much.
By operator monotonicity of $x \to x^s$~\cite[Theorem~V.1.9]{bhatia_book}, we have 
\begin{align} \label{eq_dsdsds}
\tr[\cE_R(\rho)S^s] \geq \ee^{-xs} \, \tr[\cE_R(\rho)T^s] \, .
\end{align}
Since $\rho$ is confined for $\log A$, the  scalar inequality $(x+1)^s \leq 2^s x^s$ for $x\geq1$ implies
\begin{align}  \label{eq_dsdsds2}
    \tr[\rho\cF_R^*(S)^s] =\tr[\rho(\ee^{-x}A+\id)^s] \leq 2^s \ee^{-xs}\tr[\rho A^s] \, .
\end{align}
Together, this gives
\begin{align}
    q_s(S,\rho) 
    \overset{\textnormal{\Cshref{eq_def_q}}}{=} \frac{1}{s} \log \frac{\tr[\cE_R(\rho)S^s]}{\tr[\rho\cF_R^*(S)^s]}
    \overset{\textnormal{\Cshref{eq_dsdsds,eq_dsdsds2}}}{\geq} \frac{1}{s} \log \frac{\ee^{-xs}\tr[\cE_R(\rho)T^s]}{2^s \ee^{-xs}\tr[\rho A^s]} 
    \overset{\textnormal{\Cshref{eq_def_q}}}{=} q_s(T,\rho) - \log 2 \, .
\end{align}
\end{proof}

\begin{lemma}\label{lem:upperboundc0}
Let $\cE,\cF \in \CPTP(A,B)$, $s\in(0,1)$, $\rho \in\St(RA)$, and suppose $S \geq \id$ and $\tr[\rho\cF_R^*(S)]\leq M$. By setting $l = 1 +  \max\{ D_{\max}(\cE \| \cF) +\log M,0\} $, we get
\begin{equation}
    q_s(S,\rho)\leq C_0 + \frac{sl^2}{2(1-s)}\,.
\end{equation}
\end{lemma}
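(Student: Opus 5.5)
The plan is to compare the two numerator/denominator terms of $q_s(S,\rho)$ with their first-order (``$s\to0$'') counterparts, which together form $q_0(S,\rho)\le C_0$. Write $B:=\cF_R^*(S)$, $\omega:=\cE_R(\rho)$, and define the two cumulant-type functions
\begin{align}
\Lambda_{\cE}(t):=\log\tr[\omega S^{t}],\qquad \Lambda_{\cF}(t):=\log\tr[\rho B^{t}],\qquad t\in[0,1],
\end{align}
so that $q_s(S,\rho)=\frac1s\big(\Lambda_{\cE}(s)-\Lambda_{\cF}(s)\big)$ and $\Lambda_{\cE}(0)=\Lambda_{\cF}(0)=0$.

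\textbf{Denominator.} $\Lambda_{\cF}$ is a log-moment generating function: in the eigenbasis of $B$ it equals $\log\sum_b p_b\,\ee^{t\log b}$ with $p_b\ge0$. It is therefore convex, and
\begin{align}
\Lambda_{\cF}(s)\ \ge\ s\,\Lambda_{\cF}'(0)=s\,\tr[\rho\log B].
\end{align}
This step costs nothing.

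\textbf{Numerator.} Here we need an upper bound with a quadratic error. Diagonalise $S=\sum_j \ee^{x_j}Q_j$ with $x_j\ge0$ (since $S\ge\id$), and put $p_j:=\tr[\omega Q_j]$, a probability vector. Then
\begin{align}
\tr[\omega S^{t}]=\mathbb{E}_p[\ee^{tX}],\qquad \mu:=\mathbb{E}_p[X]=\tr[\omega\log S].
\end{align}
The only global information about $X$ comes from $t=1$. Using $S\le \ee^{D_{\max}(\cE\|\cF)}$-domination through the adjoint, as in the proof of \cref{lem:properties}, we get
\begin{align}
\mathbb{E}_p[\ee^{X}]=\tr[\rho\,\cE_R^*(S)]\le \ee^{D_{\max}(\cE\|\cF)}\tr[\rho B]\le \ee^{D_{\max}(\cE\|\cF)}M\le \ee^{l-1}.
\end{align}

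The main scalar claim is then: if $X\ge0$ and $\mathbb{E}[\ee^X]\le \ee^{l-1}$ with $l\ge1$, then
\begin{align}
\log\mathbb{E}[\ee^{sX}]\le s\,\mathbb{E}[X]+\frac{s^2l^2}{2(1-s)}.
\end{align}
I would prove it as follows.
\begin{itemize}
\item[(i)] Use $\log y\le y-1$, which reduces the claim to bounding $\mathbb{E}[\ee^{sX}-1-sX]$.
\item[(ii)] Use the Taylor remainder $\ee^{sx}-1-sx\le \tfrac{s^2x^2}{2}\ee^{sx}$ for $x\ge0$.
\item[(iii)] Control $\mathbb{E}[X^2\ee^{sX}]$ by H\"older with exponents $\frac1s$ and $\frac1{1-s}$:
\begin{align}
\mathbb{E}[X^2\ee^{sX}]\le \mathbb{E}[\ee^X]^s\,\mathbb{E}\big[X^{2/(1-s)}\big]^{1-s}.
\end{align}
Alternatively, split the range at $x=l$, treating large $x$ with $x^2\ee^{-(1-s)x}$ decay and small $x$ by $x\le l$.
\end{itemize}
If the exact constant $\frac{l^2}{2(1-s)}$ does not fall out directly, I would instead use convexity of $t\mapsto\Lambda_{\cE}(t)$ together with $\Lambda_{\cE}(1)\le l-1$. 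Combined with the monotone-derivative bound $\Lambda_{\cE}'(t)\le \frac{\Lambda_{\cE}(1)-\Lambda_{\cE}(t)}{1-t}$, this bounds $\Lambda_{\cE}''$ on $[0,s]$ in an integrated sense; integrating twice from $0$ should produce the $\frac{s^2}{1-s}$ form.

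\textbf{Conclusion.} Combining the two bounds gives
\begin{align}
q_s(S,\rho)\le \tr[\omega\log S]-\tr[\rho\log B]+\frac{sl^2}{2(1-s)}=q_0(S,\rho)+\frac{sl^2}{2(1-s)}\le C_0+\frac{sl^2}{2(1-s)},
\end{align}
where the last step uses \cref{eq_def_C}.

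The step I expect to be the main obstacle is the scalar mgf estimate with exactly the constant $\frac{l^2}{2(1-s)}$, uniformly over all distributions of $X\ge0$ with $\mathbb{E}[\ee^X]\le\ee^{l-1}$. The rest is bookkeeping that is independent of the reference dimension, since everything is reduced to a classical distribution $p$.
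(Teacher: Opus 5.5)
Your overall structure matches the paper's: the denominator is handled by convexity of $\Lambda_{\cF}$ (the paper's Jensen step), and everything then hinges on a second-order bound for $\Lambda:=\Lambda_{\cE}$ using only $X\ge 0$ and $\Lambda(1)\le l-1$. However, neither of your routes establishes that numerator estimate, which is the step you flagged.

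\textbf{Route (i)--(iii) fails at (i).} Linearizing with $\log y\le y-1$ turns the goal into a false inequality. Take $\cE=\cF$, $S=\ee^{l-1}\id$ and $M=\ee^{l-1}$, so that $l$ is as in the lemma and $X\equiv l-1$. You would then need $\ee^{s(l-1)}-1-s(l-1)\le\frac{s^2l^2}{2(1-s)}$. At $s=\tfrac12$, $l=21$ the left side is about $2.2\cdot 10^4$ and the right side is about $110$. Any argument passing through (i) carries a factor exponential in $sl$. The same is true of the factor $\mathbb{E}[\ee^X]^s\le \ee^{s(l-1)}$ produced by your H\"older step. So this route cannot give the stated bound, which must hold for all $s\in(0,1)$ and all $l\ge 1$.

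\textbf{Your fallback fails for a different reason.} Convexity and the secant bound only control $\Lambda'$ from above. They yield $\Lambda'(u)-\mu\le\frac{l-1-\mu}{1-u}$, which is $O(1)$ rather than $O(u)$. Integrating gives $\Lambda(s)-s\mu\le-(l-1-\mu)\log(1-s)=O(s)$, i.e.\ only $q_s-q_0=O(1)$. Convexity alone cannot do better: a convex function with a kink just after $t=0$ essentially saturates this. So no second-order information can come out of it.

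\textbf{The missing idea} is a pointwise bound on $\Lambda''(t)$, the variance of $X$ under the tilted law $p_t\propto p\,\ee^{tX}$. Since $X\ge0$, you have $(1-t)^2X^2\le(1+(1-t)X)^2$. Since $y\mapsto(1+\log y)^2$ is concave on $[1,\infty)$, apply Jensen to $y=\ee^{(1-t)X}\ge 1$ under $p_t$; its mean is $\ee^{\Lambda(1)-\Lambda(t)}$. Together these give
\begin{align*}
\Lambda''(t)\le \mathbb{E}_{p_t}[X^2]\le \frac{\mathbb{E}_{p_t}\bigl[(1+(1-t)X)^2\bigr]}{(1-t)^2}\le\frac{\bigl(1+\Lambda(1)-\Lambda(t)\bigr)^2}{(1-t)^2}\le\frac{l^2}{(1-t)^2}\,,
\end{align*}
where the last step uses $0\le\Lambda(t)\le\Lambda(1)\le l-1$. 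The integral Taylor remainder then gives
\begin{align*}
\Lambda(s)-s\Lambda'(0)=\int_0^s(s-t)\Lambda''(t)\,\di t\le l^2\bigl(-\log(1-s)-s\bigr)\le\frac{s^2l^2}{2(1-s)}\,,
\end{align*}
which is exactly the stated constant after dividing by $s$. This is the paper's argument.

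Incidentally, a bound with an extra factor exponential in $sl$, which your route can deliver, would still suffice downstream in \cref{cor:upper}, where $sl\lesssim \varepsilon K$ stays bounded. But it does not prove the lemma as stated.
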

\begin{proof}

By Jensen's inequality, we have $\frac{1}{s}\log \tr[\rho \cF_R^*(S)^s]\geq \tr[\rho \log \cF_R^*(S)]$. This implies
\begin{align}
    q_s(S,\rho) - q_0(S,\rho) 
    \overset{\textnormal{\Cshref{eq_def_q}}}&{=} \frac{1}{s} \log\frac{\tr[\cE_R(\rho)S^s]}{\tr[\rho \cF_R^*(S)^s]} - \tr[\cE_R(\rho)\log S] + \tr[\rho \log \cF_R^*(S)] \\
    &\leq \frac{1}{s}\log \tr[\cE_R(\rho)S^s] - \tr[\cE_R(\rho)\log S] \,. \label{eq_LS_1}
\end{align}
Now define $F(t) := \log \tr[\cE_R(\rho)S^t]$ for $t \in [0,1]$. For this function we have $F(0)=0$ and $F'(0) = \tr[\cE_R(\rho)\log S]$.
Furthermore,
\begin{equation}
    \ee^{F(1)} = \tr[\cE_R(\rho)S] \leq \ee^{D_{\max}(\cE \| \cF)} \tr[\rho \cF^*_R(S)] \leq M \ee^{D_{\max}(\cE \| \cF)}
\end{equation}
and therefore $F(1)\leq l-1 $. Calculating the second derivative gives
\begin{equation}
    F''(t) 
    = \frac{\tr[\cE_R(\rho)S^t(\log S)^2]}{\tr[\cE_R(\rho)S^t]}-F'(t)^2 
    \leq \frac{\tr[\cE_R(\rho)S^t(\log S)^2]}{\tr[\cE_R(\rho)S^t]} \label{eq_LSLS2}
\end{equation}
This can be further bounded by
\begin{align}
    \frac{\tr[\cE_R(\rho)S^t(\log S)^2]}{\tr[\cE_R(\rho)S^t]} &\leq \frac{1}{(1-t)^2}\frac{\tr[\cE_R(\rho)S^t(\id+(1-t)\log S)^2]}{\tr[\cE_R(\rho)S^t]}  \label{eq_ds_FIRST}\\
    &\leq \frac{1}{(1-t)^2}\left( 1+ \log \frac{\tr [\cE_R(\rho)S]}{\tr[\cE_R(\rho)S^t]}\right)^2 \label{eq_ds_SEC} \\
    & = \frac{1}{(1-t)^2} (1+ F(1)-F(t))^2 \label{eq_ds_THIRD}\\
    &\leq \frac{l^2}{(1-t)^2}  \, . \label{eq_LSLS3}
\end{align}
\Cref{eq_ds_FIRST} is due to
\begin{equation}
    (\id +(1-t)\log S)^2 - (1-t)^2 (\log S)^2 = \id +2(1-t)\log S \geq 0 \, .
\end{equation}
If we multiply this expression by $S^t$ and afterwards take the trace against $\cE_R(\rho)$, we get~\cref{eq_ds_FIRST}.
\Cref{eq_ds_SEC} follows from applying Jensen's inequality to the  function $f(x)= (1+\log x)^2$, which is concave for $x\geq1$. 
Since $S\geq \id$, working in an eigenbasis of $S$ with eigenvalues $\lambda_i$ leads to
\begin{equation}
     \frac{\tr [\cE_R(\rho)S]}{\tr[\cE_R(\rho)S^t]} = \sum_i \lambda_i^{1-t} \frac{\lambda_i^t\bra{i}\cE_R(\rho)\ket{i}}{\tr[\cE_R(\rho)S^t]} = \sum_i\lambda_i^{1-t} p_i \, ,
\end{equation}
where we introduced the probability distribution $p_i$. Applying Jensen's inequality to $f(\sum_i p_i \lambda_i^{1-t})$ explains~\cref{eq_ds_SEC}. \Cref{eq_ds_THIRD,eq_LSLS3} follow by definition and $0\leq F(t) \leq F(1) \leq l-1$.

Now fully equipped, we can show
\begin{align}
     q_s(S,\rho) - q_0(S,\rho) 
    \overset{\textnormal{\Cshref{eq_LS_1}}}&{\leq} \frac{1}{s}\log \tr[\cE_R(\rho)S^s] - \tr[\cE_R(\rho)\log S] \\
    &= \frac{1}{s}F(s)-F'(0) \\
    &= \frac{1}{s} \int_0^s (s-t)F''(t)\, \di t \label{eq_ds_unclear} \\
    \overset{\textnormal{\Cshref{eq_LSLS2,eq_LSLS3}}}&{\leq} \frac{l^2}{s} \int_0^s \frac{s-t}{(1-t)^2} \, \di t \\
    &\leq \frac{sl^2}{2(1-s)} \, ,
\end{align}
where~\cref{eq_ds_unclear} follows from $F(0)=0$ and
\begin{align}
    F(s)-F(0) - sF'(0) &= \int_0^s (F'(u)-F'(0)) \, \di u \\
    &= \int_0^s \int_0^u F''(t) \, \di t \, \di u\\
    &= \int_0^s \int_t^s F''(t) \, \di u \, \di t\\
    &= \int_0^s (s-t) F''(t) \, \di t \, .
\end{align}
Together with $q_0(S,\rho) \leq C_0(\cE,\cF)$, this concludes the proof. 
\end{proof}

\begin{corollary}\label{cor:upper}
Let $\cE,\cF \in \CPTP(A,B)$, $s\in (0,1)$, $K=D_{\max}(\cE\|\cF) +3$, and $n \in \N$. For finite reference $R$ and test $T>0$, let $\rho$ be a state that is confined to width $L \in [0,n]$ for $\log( (\cF_R^*)^{\otimes n} (T))$. Then
    \begin{equation}
        q_s(T,\rho) \leq nC_0(\cE,\cF) + \log 2 + \frac{sK^2n^2}{2(1-s)} \, ,
    \end{equation}
    where $q_s$ corresponds to $(\cE^{\otimes n},\cF^{\otimes n})$.
\end{corollary}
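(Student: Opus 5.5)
The plan is to chain \cref{lem:newtest} and \cref{lem:upperboundc0}, applied to the pair $(\cE^{\otimes n},\cF^{\otimes n})$ in place of $(\cE,\cF)$, and then use the tensorization properties from \cref{lem:properties} to pull out the factor $n$.

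\emph{Step 1 (rescaling the test).} Since $\rho$ is confined to width $L$ for $\log A$ with $A=(\cF_R^*)^{\otimes n}(T)$, \cref{lem:newtest} applies verbatim to the CPTP map $\cF^{\otimes n}$. It yields $\lambda>0$ such that $S=\lambda T+\id$ satisfies
\begin{equation}
\tr[\rho (\cF_R^*)^{\otimes n}(S)]\leq \ee^L+1\leq \ee^n+1
\end{equation}
and $q_s(T,\rho)\leq q_s(S,\rho)+\log 2$. Moreover $S\geq\id$ by construction.

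\emph{Step 2 (passing to $C_0$).} Apply \cref{lem:upperboundc0} to $(\cE^{\otimes n},\cF^{\otimes n})$ with $M=\ee^n+1$. This gives
\begin{equation}
q_s(S,\rho)\leq C_0(\cE^{\otimes n},\cF^{\otimes n})+\frac{s\,l^2}{2(1-s)},
\end{equation}
where $l=1+\max\{D_{\max}(\cE^{\otimes n}\|\cF^{\otimes n})+\log(\ee^n+1),0\}$. By the additivity \cref{eq:tensorization}, $C_0(\cE^{\otimes n},\cF^{\otimes n})=nC_0(\cE,\cF)$.

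\emph{Step 3 (bounding $l$).} We need $l\leq Kn$. First, $D_{\max}$ of Choi matrices is additive under tensor products, so $D_{\max}(\cE^{\otimes n}\|\cF^{\otimes n})=nD_{\max}(\cE\|\cF)$; this quantity is nonnegative, since trace preservation forces the Choi operator inequality to have constant at least $1$. Second, $\log(\ee^n+1)\leq n+\log 2\leq n+1$. Hence
\begin{equation}
l\leq 1+nD_{\max}(\cE\|\cF)+n+1\leq n\bigl(D_{\max}(\cE\|\cF)+3\bigr)=Kn
\end{equation}
for $n\geq1$. Combining the three steps gives the claim.

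The main point to check is the applicability of the earlier lemmas to tensor-power channels, with the first $n-1$ copies effectively absorbed as part of the reference or simply treating $\cF^{\otimes n}$ as a single channel. Both lemmas are stated for arbitrary CPTP maps and arbitrary finite $R$, so this is immediate. The only mildly delicate bookkeeping is the constant in $l$, which the slack in $K=D_{\max}+3$ comfortably absorbs.
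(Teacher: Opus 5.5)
Your proposal is correct and follows the paper's proof essentially verbatim: \cref{lem:newtest} replaces $T$ by $S=\lambda T+\id\geq\id$ at a cost of $\log 2$, then \cref{lem:upperboundc0} is applied to $(\cE^{\otimes n},\cF^{\otimes n})$, with additivity of $C_0$ and $D_{\max}$ and the bound $l\leq nK$. The only difference is cosmetic: you insert $L\leq n$ before bounding $\log(\ee^L+1)$, whereas the paper does so afterwards.
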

\begin{proof}
    By~\cref{lem:newtest}, we can replace the given test $T$ with a new test $S$ if the input state is confined. Then we have $q_s(T,\rho) \leq q_s(S,\rho) + \log 2$. Since this test satisfies $S\geq \id$ and we also have $D_{\max}(\cE^{\otimes n}\|\cF^{\otimes n}) = n D_{\max}(\cE\|\cF)$, and $\tr[\rho (\cF^*_R)^{\otimes n}(S)]\leq \ee^L + 1$ by~\cref{lem:newtest}, we can use~\cref{lem:upperboundc0} to obtain 
    \begin{equation}
        q_s(S,\rho) \leq C_0(\cE^{\otimes n},\cF^{\otimes n}) + \frac{sl^2}{2(1-s)}
    \end{equation}
    with 
    \begin{align}
  l 
  = 1 + n D_{\max}(\cE\|\cF) + \log (\ee^L+1) 
  \leq 2 + n D_{\max}(\cE\|\cF) + L \leq n(D_{\max}(\cE\|\cF) + 3) 
  = nK \, .
    \end{align}
     The assertion of~\cref{cor:upper} then follows from additivity of $C_0(\cE^{\otimes n},\cF^{\otimes n})$ (see~\cref{lem:properties}).
\end{proof}

We continue by relating a nearly optimal witness at order $s/2$ to $s$.

\begin{lemma}\label{lem:orderineq}
Let $\cE,\cF \in \CPTP(A,B)$, $s\in(0,1)$, $T>0$, $A=\cF_R^*(T)$, $N_s(T)=A^{-s/2}\cE_R^*(T^s)A^{-s/2}$, choose $\lambda \geq \norm{N_s(T)}_\infty $ and set $H=\id -\lambda^{-1}N_s(T)$. Then 
\begin{equation}\label{eq:lemineq}
    \id - \lambda^{-1/2}N_{s/2}(T) \geq \frac{1}{2} \int_\mathbb{R} \mu_s(t) A^{\ci t} H A^{-\ci t} \, \di t \, ,
\end{equation}
where
\begin{equation}
    \mu_s(t) = \frac{2}{s \cosh(2\pi t/s)} \, 
\end{equation}
is a probability density.
\end{lemma}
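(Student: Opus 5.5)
The plan is to reduce the inequality to an operator AM--GM bound at the two boundary lines of a strip, and to pass from the boundary to the midpoint with the Poisson kernel of the strip. Set $M:=\lambda^{-1}N_s(T)$, so that $0\le M\le\id$ by the choice of $\lambda$. Then the right-hand side of~\cref{eq:lemineq} equals $\frac12\id-\frac12\int_\R\mu_s(t)A^{\ci t}MA^{-\ci t}\,\di t$. So it suffices to show
\begin{equation*}
\lambda^{-1/2}N_{s/2}(T)\le \frac12\int_\R \mu_s(t)\,A^{\ci t}(\id+M)A^{-\ci t}\,\di t\, .
\end{equation*}

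\textbf{Step 1 (Jensen).} By the unital operator Jensen inequality~\cite{hansen03} for the operator concave function $x\mapsto x^{1/2}$, we have $\cE_R^*(T^{s/2})\le \cE_R^*(T^s)^{1/2}$. Moreover $\cE_R^*(T^s)=A^{s/2}N_s(T)A^{s/2}=\lambda A^{s/2}MA^{s/2}$. Conjugating with $A^{-s/4}$ gives
\begin{equation*}
\lambda^{-1/2}N_{s/2}(T)\le A^{-s/4}\,|X|\,A^{-s/4}, \qquad X:=M^{1/2}A^{s/2},\quad |X|=(X^*X)^{1/2}=(A^{s/2}MA^{s/2})^{1/2}\, .
\end{equation*}
Take the polar decomposition $X=W|X|$ with a partial isometry $W$, so that $|X|=W^*M^{1/2}A^{s/2}$. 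This yields the two identities $|X|A^{-s/2}=W^*M^{1/2}$ and, taking adjoints, $A^{-s/2}|X|=M^{1/2}W$.

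\textbf{Step 2 (strip interpolation).} Define the entire matrix-valued function $G(z):=A^{-z}|X|A^{-(s/2-z)}$. Since $A>0$ and everything is finite-dimensional, $G$ is bounded on the strip $0\le\mathrm{Re}\,z\le s/2$, and $G(s/4)=A^{-s/4}|X|A^{-s/4}$. The Poisson kernel of a strip of width $w=s/2$, evaluated at its midpoint, gives on each boundary line the density $\frac{1}{2w\cosh(\pi t/w)}=\frac12\mu_s(t)$. Each boundary therefore carries total mass $\frac12$, and entrywise
\begin{equation*}
G(s/4)=\frac12\int_\R\mu_s(t)\big[G(\ci t)+G(s/2+\ci t)\big]\,\di t\, .
\end{equation*}
This identity should be checked either by citing the standard strip Poisson formula (Hirschman) or by mapping the strip conformally to the disk.

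Using the identities from Step 1 on the two boundary lines, we get
\begin{equation*}
G(\ci t)=A^{-\ci t}W^*M^{1/2}A^{\ci t}, \qquad G(s/2+\ci t)=A^{-\ci t}M^{1/2}WA^{\ci t}\, .
\end{equation*}
Hence the bracket equals $A^{-\ci t}\big(W^*M^{1/2}+M^{1/2}W\big)A^{\ci t}$.

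\textbf{Step 3 (AM--GM).} The operator inequality $(W-M^{1/2})^*(W-M^{1/2})\ge0$ gives
\begin{equation*}
W^*M^{1/2}+M^{1/2}W\le W^*W+M\le \id+M\, ,
\end{equation*}
since $W$ is a partial isometry. Conjugation by the unitary $A^{\ci t}$ preserves this order, and $\mu_s$ is even, so the substitution $t\to-t$ matches the form $A^{\ci t}(\cdot)A^{-\ci t}$. Combining Steps 1--3 gives the displayed sufficient inequality. Finally, $\int\mu_s=1$ follows from $\int_\R\frac{\di u}{\cosh u}=\pi$.

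\textbf{Main obstacle.} The delicate step is Step 2. One must choose the interpolating function so that both boundary values become manifestly bounded by $\id+M$ after the polar-decomposition trick, which is exactly what the asymmetric splitting $A^{-z}|X|A^{-(s/2-z)}$ achieves. One must also verify the midpoint Poisson kernel normalization, namely that it reproduces $\mu_s$ with the factor $\frac12$.
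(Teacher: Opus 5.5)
Your proof is correct, but it reaches the inequality by a different route from the paper's, even though the core inequality is the same.

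The paper sets $B=A^{s/2}-\lambda^{-1/2}\sqrt{\cE_R^*(T^s)}$ and derives $A^{s/2}B+BA^{s/2}=A^{s/2}HA^{s/2}+B^2$. It then drops $B^2\ge0$ and inverts the Lyapunov operator with the positive map $Y\mapsto\int_0^\infty\ee^{-tA^{s/2}}Y\ee^{-tA^{s/2}}\,\di t$. Only afterwards does it compute entrywise that the result is the Schur multiplier $1/(2\cosh(\frac{s}{4}\log(a_i/a_j)))$. A Fourier-transform table identity then turns this into the $\mu_s$-average of $A^{\ci t}(\cdot)A^{-\ci t}$.

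In your notation $|X|=\lambda^{-1/2}\sqrt{\cE_R^*(T^s)}$, so $B=A^{s/2}-|X|$. Your AM--GM step, conjugated by $A^{s/2}$, reads $A^{s/2}|X|+|X|A^{s/2}\le A^{s/2}W^*WA^{s/2}+|X|^2\le A^s+|X|^2$. This is exactly the paper's $B^2\ge0$, with a harmless extra support projection. Likewise, your midpoint Poisson identity is the same Schur-multiplier identity when checked entrywise. Each entry of $G$ is a multiple of $\ee^{-zb}$ with $b=\log(a_i/a_j)$, and $\tfrac12(1+\ee^{-sb/2})/\cosh(sb/4)=\ee^{-sb/4}$. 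This entrywise check also removes any need to invoke general strip-interpolation results.

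What your version buys is conceptual. The polar decomposition makes both boundary values manifestly unitary conjugates $A^{-\ci t}(\cdot)A^{\ci t}$. As a result, several things appear in one step rather than from a separate computation after the Lyapunov integral:
\begin{itemize}
\item the averaging over conjugations;
\item the density $\tfrac12\mu_s$ on each boundary line, which explains both the total mass $1$ and where the $\cosh$ kernel comes from;
\item the positivity of the averaging map.
\end{itemize}
The paper's version is more elementary, needing no complex analysis beyond one scalar Fourier integral.

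As a side remark, neither argument uses $\lambda\ge\norm{N_s(T)}_\infty$: the inequality holds for every $\lambda>0$. The bound $M\le\id$ you note is only needed later, to have $0\le H\le\id$ in \cref{lem:test}.
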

\begin{proof}
    Define the matrix $B = A^{s/2} - \lambda^{-1/2} \sqrt{\cE_R^*(T^s)}$. Using that $t \mapsto \sqrt{t}$ is operator concave~\cite{bhatia_book} together with Choi-Davis-Jensen's inequality~\cite{choi74,davis57}, we have $\cE_R^*(T^{s/2})\leq \sqrt{\cE_R^*(T^s)}$. Therefore, conjugating $B$ by $A^{-s/4}$ and using the definition of $N_s(T)$ gives
    \begin{equation}\label{eq:unitjensen}
        \id - \lambda^{-1/2} N_{s/2}(T) \geq A^{-s/4} B A^{-s/4} \, .
    \end{equation}
    By definition of $B$ and $H$ we further get
    \begin{equation}
        (A^{s/2} - B)^2 = \lambda^{-1} \cE_R^*(T^s) = A^{s/2}(\id-H)A^{s/2}
    \end{equation}
    or equivalently
    \begin{equation}\label{eq:id1}
        A^{s/2} B + B A^{s/2} = A^{s/2}HA^{s/2} + B^2 \, .
    \end{equation}

    Now consider 
    \begin{equation}
        \frac{\di}{\di t} \left(\ee^{-tA^{s/2}} B \ee^{-tA^{s/2}}\right) = - \ee^{-tA^{s/2}} A^{s/2} B \ee^{-tA^{s/2}} -  \ee^{-tA^{s/2}} B A^{s/2} \ee^{-tA^{s/2}} \, . \label{eq_LSLSLS}
    \end{equation}
    Conjugating~\cref{eq:id1} with $\ee^{-tA^{s/2}}$, dropping $B^2$ and then integrating over $t\geq 0$ leads to
    \begin{equation}
        B \overset{\textnormal{\Cshref{eq_LSLSLS}}}{\geq} \int_0^\infty \ee^{-tA^{s/2}} A^{s/2}HA^{s/2}  \ee^{-tA^{s/2}} \, \di t \, .
    \end{equation}
    Another conjugation with $ A^{-s/4}$ then gives a lower bound to $\id - \lambda^{-1/2}N_{s/2}(T)$ by~\cref{eq:unitjensen}. 
    Now it remains to bring the lower bound into its final form. Working in an eigenbasis of $A=\diag(a_i)$ gives
    \begin{equation}
        \left(\int_0^\infty \ee^{-tA^{s/2}} A^{s/4} H A^{s/4}  \ee^{-tA^{s/2}} \, \di t \right)_{ij} = a_i^{s/4} a_j^{s/4} H_{ij} \int_0^\infty \ee^{-t(a_i^{s/2}+a_j^{s/2})} \, \di t= \frac{H_{ij}}{2\cosh\left(\frac{s}{4}\log \frac{a_i}{a_j}\right)} \, .
    \end{equation}
    Using the identity (see~\cite[Equation~(3.982.1)]{gradshteyn_ryzhik})
    \begin{equation}
        \frac{1}{\cosh(sb/4)} = \int_\mathbb{R} \mu_s(t) \ee^{\ci tb} \, \di t, \qquad \forall b\in \mathbb{R} \, ,
    \end{equation}
    together with $(A^{it}HA^{-it})_{ij} = \ee^{\ci t \log (a_i/a_j)} H_{ij}$ proves~\cref{eq:lemineq}.
\end{proof}

\begin{lemma}\label{lem:average}
For any Hermitian matrix $Y$, unit vector $\ket{v}$, $0 \leq H \leq \id$ and $L>0$, define $P_x = \mathbf{1}_{[x,x+L)}(Y)$.
Then, there exists an $x$ with $P_x \ket{v} \neq 0$ such that $\ket{v_x}=\frac{P_x \ket{v}}{\norm{P_x \ket{v}}}$ satisfies
\begin{equation}
    \langle v_x, Hv_x \rangle \leq \int_\mathbb{R} \omega_L(t) \langle v, \ee^{\ci tY} H \ee^{-\ci tY} v\rangle \, \di t \, ,
\end{equation}
where
\begin{equation}
    \omega_L(t) = \frac{2 \sin^2(Lt/2)}{\pi L t^2} 
\end{equation}
is a probability density.
\end{lemma}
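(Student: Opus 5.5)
The plan is an averaging argument over the shift parameter $x$. The quantity $\langle v, P_x H P_x v\rangle$, integrated over $x$ with Lebesgue measure, should reproduce the Fourier-smoothed expression on the right-hand side. Once that identity is in hand, a pigeonhole step finishes the proof.

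Work in an eigenbasis of $Y$, writing $Y=\sum_k y_k Q_k$ with spectral projections $Q_k$, and set $v_k = Q_k v$. Then
\begin{equation}
\langle P_x v, H P_x v\rangle = \sum_{k,l} \mathbf{1}[y_k\in[x,x+L)]\,\mathbf{1}[y_l\in[x,x+L)]\, \langle v_k, H v_l\rangle \, .
\end{equation}
Integrating over $x\in\mathbb{R}$, the Lebesgue measure of the set of $x$ with both $y_k,y_l\in[x,x+L)$ is $(L-|y_k-y_l|)_+$. This gives
\begin{equation}
\int_{\mathbb{R}} \langle P_x v, H P_x v\rangle\,\di x = \sum_{k,l}(L-|y_k-y_l|)_+\langle v_k,Hv_l\rangle \, .
\end{equation}
Since $H\geq 0$, this is indeed a nonnegative expression.

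Next I use the Fejér-kernel identity. The triangle function $(1-|b|/L)_+$ is the Fourier transform of $\omega_L$:
\begin{equation}
\int_{\mathbb{R}}\omega_L(t)\ee^{\ci t b}\,\di t = (1-|b|/L)_+ \, .
\end{equation}
I will verify this directly; it is standard, since the triangle is the autocorrelation of an indicator function and $\widehat{\mathbf{1}}$ gives $\sin^2$ over $t^2$. Setting $b=0$ shows that $\omega_L$ is a probability density. Using $(\ee^{\ci tY}H\ee^{-\ci tY})$ restricted to the $(k,l)$ blocks, which equals $\ee^{\ci t(y_k-y_l)}Q_kHQ_l$, I obtain
\begin{equation}
\int_{\mathbb{R}}\langle P_x v,HP_xv\rangle\,\di x = L\int_{\mathbb{R}}\omega_L(t)\langle v,\ee^{\ci tY}H\ee^{-\ci tY}v\rangle\,\di t =: L\,I \, .
\end{equation}
Applying the same identity with $H$ replaced by $\id$ gives $\int_{\mathbb{R}}\norm{P_xv}^2\,\di x = L\norm{v}^2 = L$.

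Finally comes the pigeonhole step. The function $g(x)=\langle P_xv,HP_xv\rangle - I\,\norm{P_xv}^2$ is piecewise constant with compact support and integrates to $L\,I - I\,L = 0$. Suppose $g(x)>0$ for every $x$ with $P_xv\neq 0$. On the set where $P_x v=0$, the function $g$ vanishes. The set where $P_x v\neq0$ has positive measure, because $\int\norm{P_xv}^2 = L>0$. Hence the integral of $g$ would be strictly positive, a contradiction.

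So some $x$ with $P_xv\neq0$ satisfies $g(x)\leq 0$. Dividing by $\norm{P_xv}^2$ yields $\langle v_x,Hv_x\rangle\leq I$, which is the claim.

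The only genuinely delicate step is getting the Fourier transform of $\omega_L$ right, including the normalization constant and the interplay of the half-open interval with measure-zero boundary effects. Everything else is bookkeeping. The hypothesis $H\leq\id$ is not even needed here.
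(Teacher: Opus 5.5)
Your proof is correct and follows the paper's argument essentially step by step: you compute $\frac{1}{L}\int P_x H P_x\,\di x$ from the overlap length $(L-|y_k-y_l|)_+$, identify it with $\int\omega_L(t)\,\ee^{\ci tY}H\ee^{-\ci tY}\,\di t$ via the Fejér identity, and conclude by averaging with weights $\norm{P_x v}^2/L$. Your function $g$ makes the final pigeonhole step a bit more explicit than the paper's wording, and you are right that $H\leq\id$ is not needed for this lemma (only Hermiticity of $H$ is used).
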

\begin{proof}
    Using the eigenbasis of $Y$, each eigenvalue $y_i$ belongs to spectral windows of length $L$. Two eigenvalues $y_i,y_j$ can belong to the same window if their distance is less than $L$. Thus $(P_x H P_x)_{ij} = (P_x)_{ii} H_{ij} (P_x)_{jj} =  H_{ij}$ if $y_i,y_j \in [x,x+L)$ and zero otherwise.
    Counting the number of windows containing both then gives a set of length $\max\{L-|y_i-y_j|,0\}$ and therefore 
    \begin{equation}
        \frac{1}{L}\int (P_x H P_x)_{ij} \, \di x = \max \left\{ 1-\frac{|y_i-y_j|}{L},0\right\}H_{ij} \, .
    \end{equation}
    If we then use the identity~\cite[Section~2.1]{fourier_book}
    \begin{equation}
        \int_\mathbb{R} \omega_L(t) \ee^{\ci ta} \, \di t = \max\left \lbrace 1-\frac{|a|}{L},0\right \rbrace \quad \forall a \in \R \, 
    \end{equation}
    together with $(\ee^{\ci tY}H \ee^{-\ci tY})_{ij} = \ee^{\ci t (y_i -y_j)} H_{ij}$, we get the operator identity
    \begin{equation}
        \frac{1}{L} \int_\mathbb{R} P_x H P_x \, \di x = \int_\mathbb{R} \omega_L(t)  \ee^{\ci tY}H \ee^{-\ci tY} \, \di t \, .
    \end{equation}
    If we now take the expectation with $v$, we get $\langle P_x v, H P_x v\rangle = \norm{P_x v}^2 \langle v_x,H v_x \rangle$ on the left side. Since $1/L \int_\mathbb{R} P_x  \di x= \id$ and $\norm{v}=1$, we have $1/L \int_\mathbb{R} \norm{P_x v}^2 \, \di x = 1$. So the left-hand side is an average of $\langle v_x,H v_x \rangle$ with weights $\norm{P_x v}^2/L$. In particular, there has to be some $x$ with $P_x v \neq 0$, such that $\langle v_x,H v_x \rangle$ is not larger than this average. This concludes the proof.
\end{proof}

\begin{lemma}\label{lem:abscont}
    Let $\kappa$ and $\omega$ be probability densities on $\mathbb{R}$ with $\kappa>0$ almost everywhere. For measurable functions $f_s: \mathbb{R} \to [0,1]$, 
    \begin{equation}
     \lim_{s\downarrow0}  \int \kappa(t) f_s(t) \, \di t = 0  
\quad \implies \quad 
        \lim_{s\downarrow0}  \int \omega(t) f_s(t) \, \di t = 0 \, .
    \end{equation}
\end{lemma}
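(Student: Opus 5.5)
The strategy is to split $\int \omega f_s$ into a part where $\omega$ is controlled by a multiple of $\kappa$ and a remainder of uniformly small $\omega$-mass. The only input is $0\le f_s\le 1$; no regularity of $f_s$ in $s$ or $t$ is needed.

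Fix $\varepsilon>0$. For $M>0$ define the measurable set
\[
E_M:=\{t\in\mathbb{R}:\ \omega(t)>M\kappa(t)\}.
\]
Since $\kappa>0$ almost everywhere, the sets $E_M$ decrease as $M\to\infty$, and their intersection $\bigcap_M E_M$ is contained, up to a null set, in $\{\kappa=0\}$, which has Lebesgue measure zero. The function $\omega$ is integrable, so dominated (or monotone) convergence gives $\int_{E_M}\omega\,\mathrm{d}t\to 0$ as $M\to\infty$. Choose $M$ so large that $\int_{E_M}\omega\,\mathrm{d}t\le\varepsilon$. This is the one step that uses $\kappa>0$ a.e.: it is what makes $\omega$ absolutely continuous with respect to $\kappa$. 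With $M$ fixed, everything else is routine.

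Now split the integral using $0\le f_s\le 1$:
\[
\int \omega f_s\,\mathrm{d}t=\int_{E_M^c}\omega f_s\,\mathrm{d}t+\int_{E_M}\omega f_s\,\mathrm{d}t\le M\int\kappa f_s\,\mathrm{d}t+\varepsilon .
\]
On $E_M^c$ we have $\omega\le M\kappa$, which bounds the first term. On $E_M$ we use $f_s\le 1$, which bounds the second term by $\int_{E_M}\omega\,\mathrm{d}t\le\varepsilon$.

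Take $\limsup_{s\downarrow 0}$. By hypothesis $\int\kappa f_s\,\mathrm{d}t\to 0$, so $\limsup_{s\downarrow0}\int\omega f_s\,\mathrm{d}t\le\varepsilon$. Since the integrand is nonnegative, the integral is nonnegative, and because $\varepsilon>0$ was arbitrary the limit is zero.

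In later use I expect $\kappa$ to be the positive density $\mu_s$ from \cref{lem:orderineq} (or a fixed rescaling of it) and $\omega=\omega_L$ from \cref{lem:average}. In that setting $\kappa>0$ holds everywhere, so the hypothesis is satisfied.
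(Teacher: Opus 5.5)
Your proof is correct and complete. The paper gives no argument of its own. It cites the textbook fact that the finite measure $\omega\,\mathrm{d}t$ is absolutely continuous with respect to $\kappa\,\mathrm{d}t$, which holds because $\kappa>0$ a.e.

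Your truncation of the likelihood ratio, via the set $E_M=\{\omega>M\kappa\}$ chosen with $\int_{E_M}\omega\,\mathrm{d}t\le\varepsilon$, is exactly the standard proof of that $\varepsilon$--$\delta$ fact. The difference is that you carry it out directly for $[0,1]$-valued $f_s$. This makes the argument self-contained. It also avoids the extra step the citation needs, because the usual $\varepsilon$--$\delta$ statement is about sets. To use it one would bound $\int\omega f_s\,\mathrm{d}t\le \int_{\{f_s\ge\eta\}}\omega\,\mathrm{d}t+\eta$, then control the $\kappa$-measure of $\{f_s\ge\eta\}$ by Markov's inequality. Your splitting sidesteps the level sets of $f_s$ altogether.

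Your closing remark about later use needs a small correction. The lemma requires $\kappa$ and $\omega$ to be independent of $s$. In the proof of \cref{lem:test} they are the rescaled densities $\kappa(u)=s\mu_s(su)=2/\cosh(2\pi u)$ and $\nu_\varepsilon(u)=s\,\omega_L(su)$. The density $\omega_L$ itself depends on $s$ through $L=\varepsilon/s$, so it could not be plugged in directly. The same holds for $\mu_s$.
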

\begin{proof}
    This is a standard consequence of absolute continuity of finite measures (see for instance~\cite[Lemma~5.21]{Salamon}).
\end{proof}

\begin{lemma}\label{lem:test}
    Let $\cE,\cF \in \CPTP(A,B)$ and $\varepsilon>0$. Furthermore, set $L = \varepsilon/s$ and $n= \lceil L \rceil$. Then for sufficiently small $s$, there are $d_s\in[0,1)$ with $d_s \to 0$ as $s \downarrow 0$, such that there are a finite reference $R$, a positive definite test $T$ and a pure state $\rho_s$ that is confined to width $L$ for $\log (\cF_R^*)^{\otimes n} (T)$ that satisfies
    \begin{equation}
        q_s(T,\rho_s) \geq nC_s(\cE,\cF) + \frac{1}{s} \log(1-d_s) \, .
    \end{equation}
\end{lemma}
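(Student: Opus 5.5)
The plan is to work throughout with the pair $(\cE^{\otimes n},\cF^{\otimes n})$, for which $C_r(\cE^{\otimes n},\cF^{\otimes n})=nC_r(\cE,\cF)$ by \cref{lem:properties}. The near-optimal witness is taken at order $s/2$, and \cref{lem:orderineq} transfers it to order $s$. \Cref{lem:average} then confines the state, and \cref{lem:abscont} controls the error on the scale set by $L=\varepsilon/s$.

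\textbf{Step 1 (witness at order $s/2$).} Write $C_r:=C_r(\cE,\cF)$. By \cref{lem:properties}, $r\mapsto C_r$ is nondecreasing and bounded by $D_{\max}(\cE\|\cF)$, so $\lim_{r\downarrow0}C_r$ exists and $C_s-C_{s/2}\to0$. Using the characterization $\frac1r\log\lambda_{\max}(N_r(T))=\sup_\rho q_r(T,\rho)$ from the proof of \cref{lem:properties} (for the $n$-fold channels), choose a finite reference $R$, a test $T>0$ and a unit vector $v$ such that
\begin{equation*}
\langle v,N_{s/2}(T)v\rangle\geq (1-s)\,\ee^{\frac{s}{2}nC_{s/2}} .
\end{equation*}
Set $A=(\cF_R^*)^{\otimes n}(T)$ and $\lambda=\ee^{snC_s}$. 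By \cref{eq_s_big}, $\lambda\geq\norm{N_s(T)}_\infty$, so $H=\id-\lambda^{-1}N_s(T)$ satisfies $0\leq H\leq\id$. Then
\begin{equation*}
\langle v,(\id-\lambda^{-1/2}N_{s/2}(T))v\rangle\leq 1-(1-s)\,\ee^{-\frac{s}{2}n(C_s-C_{s/2})}=:e_s .
\end{equation*}
Since $sn\leq \varepsilon+s$, we get $e_s\to0$ as $s\downarrow0$; this is why the witness is taken at order $s/2$ and not at order $s$.

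\textbf{Step 2 (transfer and rescaling).} Applying \cref{lem:orderineq} with this $\lambda$ and taking expectation in $v$ gives
\begin{equation*}
\int\mu_s(t)f_s(t)\,\di t\leq 2e_s\to0, \qquad f_s(t):=\langle v,A^{\ci t}HA^{-\ci t}v\rangle\in[0,1].
\end{equation*}
The two densities live on different scales: $\mu_s$ has width of order $s$, and $\omega_L$ with $L=\varepsilon/s$ also has width of order $s$. Substituting $t=su$ gives $s\mu_s(su)=\kappa(u):=2/\cosh(2\pi u)$, which is independent of $s$ and strictly positive, and $s\,\omega_L(su)=\omega_\varepsilon(u)$, also independent of $s$. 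With $g_s(u)=f_s(su)$, \cref{lem:abscont} yields
\begin{equation*}
d_s':=\int\omega_L(t)f_s(t)\,\di t\to0 .
\end{equation*}

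\textbf{Step 3 (confinement and conversion to $\rho_s$).} Apply \cref{lem:average} with $Y=\log A$, so that $\ee^{\ci tY}=A^{\ci t}$, together with $H$, $v$ and width $L$. This gives a normalized $v_x$ supported in $\range\mathbf{1}_{[x,x+L)}(\log A)$ with $\langle v_x,Hv_x\rangle\leq d_s'$, that is,
\begin{equation*}
\langle v_x,N_s(T)v_x\rangle\geq\ee^{snC_s}(1-d_s').
\end{equation*}
Define the pure state $\rho_s\propto A^{-s/2}\proj{v_x}A^{-s/2}$. Since $A^{-s/2}$ preserves the spectral subspaces of $\log A$, $\rho_s$ is confined to width $L$. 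By \cref{eq:tilt}, with $\omega=\proj{v_x}$, we have $\ee^{sq_s(T,\rho_s)}=\langle v_x,N_s(T)v_x\rangle$. Hence
\begin{equation*}
q_s(T,\rho_s)\geq nC_s+\tfrac1s\log(1-d_s'),
\end{equation*}
which is the claim with $d_s:=d_s'\in[0,1)$ for $s$ small enough.

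\textbf{Main obstacle.} The delicate point is Step 2, together with the choice of orders in Step 1. \Cref{lem:orderineq} only lets smallness at order $s/2$ control the twisted averages of $H$ at order $s$. So the near-optimality must be imposed at order $s/2$ against the target value $C_s$, and the gap $C_s-C_{s/2}$ must be absorbed. It is harmless only because $n\sim\varepsilon/s$ exactly cancels the $s$ in the exponent. The choice $L=\varepsilon/s$ is precisely what puts $\mu_s$ and $\omega_L$ on the same scale, so that the fixed-density \cref{lem:abscont} applies.
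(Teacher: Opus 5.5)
Your proof is correct and follows essentially the same route as the paper. You take a near-optimal witness at order $s/2$ against the target $\lambda=\ee^{snC_s}$, transfer it via \cref{lem:orderineq}, and rescale $t=su$ so that \cref{lem:abscont} applies to the $s$-independent densities $2/\cosh(2\pi u)$ and $\omega_\varepsilon$. You then confine via \cref{lem:average} with $Y=\log A$ and convert back through $\rho_s\propto A^{-s/2}\proj{v_x}A^{-s/2}$ and \cref{eq:tilt}.

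The only differences are cosmetic. You choose $v$ directly as a near-top eigenvector of $N_{s/2}(T)$ with multiplicative slack $(1-s)$. The paper instead picks a pure $\rho$ with $q_{s/2}(T,\rho)\geq nC_{s/2}-ns$ and tilts it to $\proj{v}\propto A^{s/4}\rho A^{s/4}$, which is equivalent.
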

\begin{proof}
    By~\cref{lem:properties}, the right limit $s\downarrow0$ of $C_s(\cE,\cF)$ exists and $\delta_s \coloneqq C_s(\cE,\cF) - C_{s/2}(\cE,\cF) +s$ vanishes for $(s\downarrow0)$. Furthermore, we have $\varepsilon\leq ns \leq \varepsilon + s$ and therefore $ns \to \varepsilon$ as $s \downarrow 0$. By additivity and because the supremum may be restricted to pure states (see~\cref{lem:properties}), we can find a finite reference, $T>0$ and a pure $\rho$ such that 
    \begin{equation}\label{eq:supb}
        q_{s/2}(T,\rho)\geq nC_{s/2}(\cE,\cF) - ns \, .
    \end{equation} Let us now define $A =  (\cF_R^*)^{\otimes n} (T)$, $\lambda = \ee^{snC_s(\cE,\cF)}$ and $H=\id - \lambda^{-1} N_s(T)$. By~\cref{lem:properties}, we have $N_s(T)\leq \lambda \id$ and therefore $0\leq H \leq \id$. Now, let \smash{$\ket{v}\bra{v} = \frac{A^{s/4}\rho A^{s/4}}{\tr[\rho A^{s/2}]}$} and $g_s(t) = \langle v, A^{\ci t}HA^{-\ci t} v \rangle$. By sandwiching the inequality in~\cref{lem:orderineq} with $v$, we get
    \begin{align}
        \int_\mathbb{R} \mu_s(t) g_s(t) \, \di t 
        \overset{\textnormal{\Cshref{lem:orderineq}}}&{\leq} 2 \left(1- \lambda^{-1/2}\langle v,N_{s/2}(T)v\rangle\right) \\
        \overset{\textnormal{\Cshref{eq:tilt}}}&{=} 2 \left(1-\ee^{-\frac{s}{2}nC_s(\cE,\cF)}\ee^{\frac{s}{2}q_{s/2}(T,\rho)}\right) \\
        \overset{\textnormal{\Cshref{eq:supb}}}&{\leq} 2 \left(1-\ee^{-\frac{sn}{2}(C_s(\cE,\cF) - C_{s/2}(\cE,\cF) + s)} \right) \\
        &\leq sn \delta_s \,,
    \end{align}
    where the last inequality uses $1-\ee^{-x} \leq x$.  Note that $\lim_{s\downarrow 0}sn \delta_s = 0$.
    To get a measure that is independent of $s$, note that the change of variables from $t$ to $su$ gives
    \begin{equation}
        \int_\mathbb{R} \mu_s(t) g_s(t) \, \di t =  \int_\mathbb{R} s\mu_s(us) g_s(us) \, \di u = \int_\mathbb{R} \kappa(u) h_s(u) \, \di u \, ,
    \end{equation}
    where we introduced the function $h_s(u) \coloneqq g_s(us)$ and the density $\kappa(u) \coloneqq s \mu_s(su) = \frac{2}{\cosh(2\pi u)} > 0$. Analogously, we get for the density in~\cref{lem:average} 
\begin{align}
\nu_\varepsilon(u) \coloneqq s\omega_{L}(su) 
= s\frac{2 \sin^2(L su/2)}{\pi Ls^2 u^2} \, = \frac{2 \sin^2(\varepsilon u/2)}{\pi \varepsilon u^2} \, ,
\end{align}
    which for fixed $\varepsilon$ is independent of $s$. According to~\cref{lem:abscont}, $\lim_{s \downarrow 0}\int_\mathbb{R} \kappa(u) h_s(u) \, \di u =  0$ implies  $\lim_{s \downarrow 0}\int_\mathbb{R} \nu_\varepsilon(u) h_s(u) \, \di u  = 0$. Thus let us define
    \begin{equation}
        d_s \coloneqq \int_\mathbb{R} \nu_\varepsilon(u) h_s(u) \, \di u = \int_\mathbb{R} \omega_{L}(t) g_s(t) \, \di t \, .
    \end{equation}
    Applying~\cref{lem:average} to $Y = \log A$ and $L = \varepsilon/s$, it gives a normalized projected vector $v_x$ supported in $[x,x+L)$ of the spectrum of $\log A$ such that $\langle v_x, H v_x \rangle \leq d_s$. We can now turn $v_x$ into an input state $\rho_s$ by 
    \begin{equation}
        \rho_s = \frac{A^{-\frac{s}{2}}\ket{v_x}\bra{v_x}A^{-\frac{s}{2}}}{\langle v_x, A^{-s} v_x \rangle} \, .
    \end{equation}
    This state is normalized, pure and conjugating by $A^{-\frac{s}{2}}$ preserves the spectral subspace of $\log A$. So $\rho_s$ is still confined to width $L$ for $\log A$. Applying~\cref{eq:tilt} to $\rho_s$ gives 
    \begin{equation}
        \ee^{sq_s(T,\rho_s)} = \langle v_x, N_s(T) v_x\rangle = \lambda(1-\langle v_x, H v_x\rangle) \geq \ee^{snC_s(\cE,\cF)} (1-d_s) \, .
    \end{equation}
\end{proof}

\begin{proof}[Proof of~\cref{lem_cont_Cs}]
Without loss of generality we can assume that $\cE$ and $\cF$ satisfy $D_{\max}(\cE\|\cF)<\infty$ as otherwise the assertion of~\cref{lem_cont_Cs} clearly holds.
For fixed $\varepsilon>0$, $L = \varepsilon/s$, $n = \lceil L \rceil$ and sufficiently small $s$,~\cref{lem:test} provides a test $T$, a confined state $\rho_s$ and $d_s \to 0$ for $s\downarrow 0$ satisfying
    \begin{equation} \label{eq_LS_final1}
        q_s(T,\rho_s) \geq n C_s(\cE,\cF) + \frac{1}{s} \log(1-d_s) \, .
    \end{equation}
    On the other hand,~\cref{cor:upper} guarantees for such $(T,\rho_s)$ that
    \begin{equation}\label{eq_LS_final2}
        q_s(T,\rho_s) \leq nC_0(\cE,\cF) + \log 2 + \frac{s K^2 n^2}{2(1-s)} \, ,
    \end{equation}  
    where $K = D_{\max}(\cE\|\cF) + 3$. Rearranging and dividing by $n$ then gives
    \begin{equation}
        0 
        \overset{\textnormal{\Cshref{lem:properties}}}{\leq} C_s(\cE,\cF) - C_0(\cE,\cF) 
        \overset{\textnormal{\Cshref{eq_LS_final1,eq_LS_final2}}}{\leq} \frac{\log 2}{n} + \frac{K^2}{2(1-s)}sn - \frac{\log(1-d_s)}{sn} \, .
    \end{equation}
Now let $s \downarrow 0 $
with $\varepsilon$ fixed. Then the first and the last terms on the right hand side tend to zero because $n \to \infty$, $d_s \to 0$ and $sn \to \varepsilon$. For $s<1/2$, we can bound the leftover term and get
\begin{equation}
    \limsup_{s \downarrow 0} C_s(\cE,\cF) - C_0(\cE,\cF) \leq  K^2 \varepsilon \, .
\end{equation}
Since this works for every $\varepsilon > 0$, this proves $\lim_{s \downarrow 0} C_s(\cE,\cF) = C_0(\cE,\cF)$.
\end{proof} 

\subsection{It suffices that the second channel is completely positive} \label{sec_CP_sufficient}
As explained in~\cref{eq_overview_proof} the assertion of~\cref{thm:main} for CPTP maps $\cE$ and $\cF$ follows by combining~\cref{lem_var_formula_new,lem_cont_Cs}.
We next show that the statement can be lifted to $\cF \in \CP(A,B)$. Choose $c>0$ such that
$\cF^*(\id_B)\le c \id_A$ and define channels into
$B\oplus \mathbb{C}$ by
\begin{equation}
    \cE'(X):=\cE(X) \oplus 0 \qquad \textnormal{and} \qquad \cF'(X) := c^{-1} \cF(X)\oplus \bigl(\tr[X]-c^{-1}\tr[\cF(X)]\bigr)\, .
\end{equation}
Then we have $\cE',\cF' \in \CPTP(A,B\oplus \mathbb{C})$.
At every tensor power, the output of $(\cE')^{\otimes n}$
is supported on $B^{\otimes n}$, where the corresponding block of $(\cF')^{\otimes n}$ equals $c^{-n} \cF^{\otimes n}$.
Therefore, we get for every $\alpha>1$
\begin{equation}
    D_\alpha^\infty(\cE'\|\cF') = D_\alpha^\infty(\cE\|\cF)+\log c\, ,
\end{equation}
and the same identity holds for $D^\infty$. Applying~\cref{eq:main} to $\cE',\cF'$ and subtracting $\log c$ from both sides completes the argument.

\section{Applications} \label{sec_applications}

\subsection{Computability of channel divergence} \label{sec_computability}
Due to the regularization in the formula, it is unclear if the regularized channel divergence is computable~\cite{FFF25_2}. First hints that this may be the case have been obtained in~\cite[Theorem~5.1]{FF21} where it was shown that the regularized R\'enyi channel divergence for $\alpha>1$ is computable. 
\Cref{thm:main} allows us to prove that this is also true for the case $\alpha =1$.
\begin{corollary} \label{cor_computable}
Let $\cE,\cF \in \CPTP(A,B)$. Then the regularized channel divergence $D^\infty(\cE\|\cF)$ is computable.
\end{corollary}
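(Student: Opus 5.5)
The plan is to show that $D^\infty(\cE\|\cF)$ can be approximated from above and from below by two computable sequences whose gaps shrink to zero. Such a two-sided approximation is exactly what makes a real number computable.

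\emph{Upper bounds.} By \cref{thm:main} together with monotonicity in $\alpha$, we have
\[
D^\infty(\cE\|\cF)=\lim_{\alpha\downarrow1}D_\alpha^\infty(\cE\|\cF)=\inf_{\alpha>1}D_\alpha^\infty(\cE\|\cF).
\]
By~\cite[Theorem~5.1]{FF21}, each $D_\alpha^\infty(\cE\|\cF)$ with $\alpha>1$ is computable; that proof supplies explicit converging upper approximations, for instance via the amortized or semidefinite hierarchy. Taking $\alpha_k=1+1/k$, I would extract for each $k$ a rational number $u_k$ with
\[
D_{\alpha_k}^\infty(\cE\|\cF)\le u_k\le D_{\alpha_k}^\infty(\cE\|\cF)+1/k .
\]
Then $\inf_{j\le k}u_j$ is computable, non-increasing, and converges to $D^\infty(\cE\|\cF)$ from above. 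This holds because $D_{\alpha_k}^\infty\ge D^\infty$ and, by \cref{thm:main}, $D_{\alpha_k}^\infty\to D^\infty$.

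\emph{Lower bounds.} Here I would use the representation
\[
D^\infty(\cE\|\cF)=\sup_n \tfrac1n D(\cE^{\otimes n}\|\cF^{\otimes n}) .
\]
Each $D(\cE^{\otimes n}\|\cF^{\otimes n})$ is a supremum, over pure inputs with a reference of bounded dimension, of a continuous function on a compact set. Evaluating the relative entropy at any rational input, approximately and with certified error, therefore gives valid lower bounds. Enumerating $n$ together with a dense countable set of inputs produces a computable non-decreasing sequence $\ell_k$ with $\sup_k\ell_k=D^\infty(\cE\|\cF)$.

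\emph{Conclusion.} Since $\ell_k\le D^\infty\le \inf_{j\le k}u_j$ and both sides converge, the algorithm runs until $\inf_{j\le k}u_j-\ell_k<\varepsilon$. This outputs $D^\infty$ to precision $\varepsilon$. Along the way, one first decides whether $J_\cE\ll J_\cF$, which is a rank computation. If it fails, $D^\infty=\infty$; otherwise all quantities are bounded by $D_{\max}(\cE\|\cF)$.

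The main obstacle is the upper side. One must check that~\cite{FF21} indeed gives \emph{computable upper bounds} converging to $D_\alpha^\infty$, uniformly enough to be combined over $\alpha$. Without \cref{thm:main}, upper bounds are available only for $\alpha>1$, and there would be no reason for their infimum to equal $D^\infty$. This is precisely the gap that the continuity result closes.
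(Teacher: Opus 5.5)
Your proof is correct. Its upper-bound half is exactly the paper's: computable upper bounds on $D_{1+1/k}^\infty(\cE\|\cF)$ from \cite[Theorem~5.1]{FF21}, which \cref{thm:main} forces down to $D^\infty(\cE\|\cF)$.

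The difference is on the lower-bound side. The paper evaluates $q_0(T,\rho)$ at finite references, positive tests and states. This relies on the identity $D^\infty(\cE\|\cF)=C_0(\cE,\cF)$ from \cref{lem_var_formula_new}. You instead read lower bounds directly off $D^\infty(\cE\|\cF)=\sup_n\frac1n D(\cE^{\otimes n}\|\cF^{\otimes n})$, which follows from superadditivity and Fekete. Your route is more elementary: it shows computability needs only \cref{thm:main} plus the standard lower semicomputability of a regularized quantity, not the variational formula. The paper's route has the conceptual appeal that its lower bounds come from a single channel use with an unrestricted reference. Your dense-enumeration step is fine. Under $J_\cE\ll J_\cF$ the objective is in fact continuous, and lower semicontinuity of the relative entropy would already suffice.

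The obstacle you flag is milder than you suggest. No uniformity in $\alpha$ is needed, and you also do not need $u_k$ within $1/k$ of $D^\infty_{1+1/k}$. Dovetail over $k$ and over the levels of the converging hierarchy in \cite{FF21}, keeping the running minimum of all certified upper bounds produced so far. This already gives a non-increasing computable sequence converging to $\inf_k D_{1+1/k}^\infty(\cE\|\cF)=D^\infty(\cE\|\cF)$.
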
 
\begin{proof}
The identity $D^\infty(\cE\|\cF) = C_0(\cE,\cF)$ from~\cref{lem_var_formula_new} allows us to calculate better and better lower bounds by using finite reference systems, positive tests and density matrices. \cite[Theorem 5.1]{FF21} allows us to calculate upper bounds for $D_{1+1/k}^\infty$, for any $k \in \N$. \Cref{thm:main} then implies that these bounds approach $D^\infty(\cE\|\cF)$ arbitrarily closely (by making $k$ sufficiently large). Once the difference is below the desired precision, we terminate.
\end{proof}

\subsection{Quantum channel Stein's lemma} \label{sec_channel_disc}
In~\cite{FF21,FGX25} it was observed that the continuity of the regularized channel R\'enyi divergence gives rise to a strong converse for channel discrimination. 
For a fixed strategy $T$, let $\rho_n$ and $\sigma_n$ be the final states before the measurement and let $\{P_n,\id-P_n\}$ denote the final binary POVM, where $P_n$ corresponds to the decision $\cE$. 
The type-I and type-II errors $ \alpha_n(T,\cE) := \tr[(\id-P_n)\rho_n]$ and $\beta_n(T,\cF):=\tr[P_n\sigma_n]$ are the probabilities of accepting $\cF$ under the assumption $\cE$ and accepting $\cE$ under $\cF$ respectively.
Let us define
\begin{equation}
    \beta_{n,x}^\varepsilon(\cE \| \cF) \coloneqq \inf_{T\in \cT_{n,x}} \{\beta_n(T,\cF) : \alpha_n(T,\cE) \leq \varepsilon\}, \qquad x\in\{\textnormal{parallel, adaptive}\} \, ,
\end{equation}
where $\cT_{n,x}$ denotes the set of parallel or adaptive discrimination strategies.  

\begin{corollary}\label{cor:stein}
    Let $\cE,\cF \in \CPTP(A,B)$ and $\varepsilon \in (0,1)$. Then, 
    \begin{equation}\label{eq:channel-stein}
        \lim_{n \to \infty} -\frac{1}{n} \log \beta_{n,\textnormal{parallel}}^\varepsilon(\cE\|\cF) = \lim_{n \to \infty} -\frac{1}{n} \log \beta_{n,\textnormal{adaptive}}^\varepsilon(\cE\|\cF) = D^\infty(\cE\|\cF) \, .
    \end{equation}
    Moreover, for every $r>D^\infty(\cE\|\cF)$ there exists $c_r >0$ such that any $T \in \cT_{n,\textnormal{adaptive}}$ with errors $\alpha_n(T,\cE), \beta_n(T,\cF)$ satisfying $\beta_n (T,\cF) \leq \ee^{-n r}$ obeys $1-\alpha_n(T,\cE) \leq \ee^{-n c_r}$. 
\end{corollary}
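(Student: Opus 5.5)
We prove achievability for parallel strategies and a strong converse for adaptive ones. Since parallel strategies are a subset of adaptive ones, the two bounds together sandwich both limits in~\cref{eq:channel-stein} at $D^\infty(\cE\|\cF)$. The exponential strong converse is the ingredient that needs~\cref{thm:main}; the rest is standard.

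\emph{Achievability.} Fix $k\in\N$ and a pure input $\psi_{RA^k}$. Split $n=km+r$ with $0\le r<k$, feed $\psi^{\otimes m}$ into $\cE^{\otimes km}$, and let the last $r$ channel uses act on an arbitrary fixed input. This reduces the task to i.i.d.\ state discrimination between $\cE_R^{\otimes k}(\psi)^{\otimes m}$ and $\cF_R^{\otimes k}(\psi)^{\otimes m}$. The quantum Stein's lemma~\cite{PH91,ogawa00} gives, for every $\varepsilon\in(0,1)$,
\[
\liminf_{n\to\infty} -\tfrac1n\log\beta^\varepsilon_{n,\textnormal{parallel}} \ge \tfrac1k D\big(\cE_R^{\otimes k}(\psi)\|\cF_R^{\otimes k}(\psi)\big).
\]
The wasted $r<k$ uses can only increase $\beta_n$ by a factor bounded by a constant, namely the largest norm of the $\cF$-output over those uses. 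That factor disappears after taking $\tfrac1n\log$. Optimizing over $\psi$ and letting $k\to\infty$ yields $D^\infty(\cE\|\cF)$ via~\cref{eq:regularized}.

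\emph{Strong converse for adaptive strategies.} Fix $r>D^\infty(\cE\|\cF)$. By~\cref{thm:main} there is an $\alpha>1$ with $D_\alpha^\infty(\cE\|\cF)<r$. Write $\rho_j,\sigma_j$ for the states after the $j$-th channel use. An adaptive strategy interleaves channel uses with the same CPTP processing maps under both hypotheses, and data processing holds for $D_\alpha$. Applying the chain rule~\cref{eq_chain_rule} at each channel use, with all processing absorbed into the reference system, gives
\[
D_\alpha(\rho_n\|\sigma_n)\le D_\alpha(\rho_{n-1}\|\sigma_{n-1})+D_\alpha^\infty(\cE\|\cF)\le\dots\le nD_\alpha^\infty(\cE\|\cF).
\]
The initial states coincide, so the recursion starts at $0$. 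Now apply data processing to the binary measurement $\{P_n,\id-P_n\}$ and keep only the $P_n$-term in the classical sum:
\[
\tfrac{1}{\alpha-1}\log\big[(1-\alpha_n)^\alpha\beta_n^{1-\alpha}\big]\le nD_\alpha^\infty(\cE\|\cF).
\]
Rearranging gives
\[
1-\alpha_n\le \exp\Big(-n\tfrac{\alpha-1}{\alpha}\big(r-D^\infty_\alpha(\cE\|\cF)\big)\Big)
\]
whenever $\beta_n\le \ee^{-nr}$. So $c_r=\frac{\alpha-1}{\alpha}\big(r-D_\alpha^\infty(\cE\|\cF)\big)>0$.

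This bound has two consequences. First, for any $\varepsilon<1$, once $n$ is large enough that $\ee^{-nc_r}<1-\varepsilon$, no adaptive strategy with $\alpha_n\le\varepsilon$ can achieve $\beta_n\le\ee^{-nr}$. Hence $\limsup_n -\frac1n\log\beta^\varepsilon_{n,\textnormal{adaptive}}\le r$. Letting $r\downarrow D^\infty(\cE\|\cF)$ closes the sandwich.

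The only delicate point is justifying the chain-rule recursion for adaptive strategies. Intermediate states live on growing memory systems, and the chain rule must be applied with $\rho,\sigma$ being the current joint states, with $\sigma$ not necessarily normalized-equal to $\rho$. I would check that~\cref{eq_chain_rule} is applied with the channel acting as $\cI_{\text{mem}}\otimes\cE$ versus $\cI_{\text{mem}}\otimes\cF$. The regularized channel divergence is stable under tensoring with identities by definition of the stabilized quantity, so this step holds. After that, the remaining steps are routine.
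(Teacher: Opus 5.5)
Your proposal is correct and follows essentially the same route as the paper. For achievability you feed $\psi_k^{\otimes m}$ into blocks of $k$ channel uses and apply the state Stein lemma; for the adaptive strong converse you use the chain rule, data processing onto the binary measurement, and \cref{thm:main} to pick $\alpha>1$ with $D_\alpha^\infty(\cE\|\cF)<r$, which yields the same constant $c_r=\frac{\alpha-1}{\alpha}\bigl(r-D_\alpha^\infty(\cE\|\cF)\bigr)$. The only superfluous remark is the ``constant factor'' from the leftover channel uses: since $\cF$ is trace preserving and the test acts trivially on those outputs, $\beta_n$ is unchanged exactly.
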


\begin{proof}
    For $\alpha>1$, let $\rho_n$ and $\sigma_n$ be the final states of an adaptive $n$-use strategy before its binary measurement. By the chain rule in~\cite[Corollary 5.2]{FF21}, each channel use increases the R\'enyi divergence by at most $D_\alpha^\infty(\cE\|\cF)$. Intermediate channels cannot increase it by data-processing. Since the initial states coincide, we get
    \begin{equation}
        D_\alpha(\rho_n \| \sigma_n) \leq n D_\alpha^\infty(\cE \|\cF) \, .
    \end{equation}
    The final measurement then accepts $\cE$ with probability $1-\alpha_n$ and $\beta_n$, respectively. Data-processing therefore gives
    \begin{align}
        D_\alpha(\rho_n\|\sigma_n) 
        \geq \frac{1}{\alpha-1} \log \big((1-\alpha_n)^\alpha \beta_n^{1-\alpha} + \alpha_n^\alpha(1-\beta_n)^{1-\alpha}\big)
        \geq  \frac{\alpha}{\alpha-1} \log (1-\alpha_n) - \log \beta_n \, .
    \end{align}
    Combining and rearranging thus leads to
    \begin{equation}\label{eq:channel-converse}
        1 - \alpha_n \leq \beta_n^{\frac{\alpha-1}{\alpha}}\exp\Big(\frac{n(\alpha-1)}{\alpha}D_\alpha^\infty(\cE\|\cF) \Big) \, .
    \end{equation}
    Now take $r>D^\infty(\cE\|\cF)$. \Cref{thm:main} then allows us to choose an $\alpha>1$ such that $D_\alpha^\infty(\cE \| \cF) <r$. For $\beta_n \leq \ee^{-nr}$,~\cref{eq:channel-converse} gives $1-\alpha_n \leq \ee^{-nc_r}$, with $c_r = \frac{\alpha-1}{\alpha}(r-D_\alpha^\infty(\cE\| \cF)) >0$. This then also proves the converse of~\cref{eq:channel-stein}.
    
    For achievability, let $k \geq 1$ and take an input state $\ket{\psi_k}$ on $RA^{\otimes k}$ and corresponding output states $\rho_k = \cE_R^{\otimes k}(\psi_k)$ and $\sigma_k = \cF_R^{\otimes k}(\psi_k)$. For $n = mk + t, \, t\in\{0,\dots,k-1\}$, use the input state $\ket{\psi_k}^{\otimes m}$ on the first $mk$ channel inputs and ignore the remaining $t$. The resulting states are $\rho_k^{\otimes m}$ and $\sigma_k^{\otimes m}$. 
    Applying the quantum Stein lemma for states~\cite[Theorem 2]{ogawa00} together with $m/n\to 1/k$ then gives
    \begin{equation}
        \liminf_{n\to \infty} -\frac{1}{n} \log \beta_{n,\textnormal{parallel}}^\varepsilon \geq \frac{1}{k}D(\rho_k \| \sigma_k) \, .
    \end{equation}
    Taking the supremum over $\ket{\psi_k}$ and then over $k$ gives $D^\infty(\cE \| \cF)$. Adaptively, the same rate is achievable, as every parallel strategy is also an adaptive strategy.
\end{proof}



\subsection{Channel asymptotic equipartition property} \label{sec_channel_AEP}
The AEP for states~\cite{tomamichel09,marco_book} shows that for any $\rho,\sigma \in \St(\cH)$ and any $\eps \in (0,1)$ we have
\begin{align}
\lim_{n \to \infty} \frac{1}{n} D^{\eps}_{\max}(\rho^{\otimes n} \| \sigma^{\otimes n}) = D(\rho \| \sigma) \, .
\end{align}
It was conjectured in~\cite{openQuestionAndreas} that a similar AEP holds for channels when smoothing is done with respect to the diamond norm.
Let us define
\begin{equation}
    D_{\max}^{\varepsilon,\leq}(\cE\|\cF) := \inf_{\cE' \in \cB^\varepsilon(\cE)} D_{\max}(\cE'\|\cF)
\end{equation}
where $\cB^\varepsilon(\cE)=\{ \cE' \in \CPTN : \norm{\cE-\cE'}_{\diamond,+} \leq \varepsilon\}$.
Here $\CPTN$ denotes the set of completely positive, trace-nonincreasing maps and $\norm{\cdot}_{\diamond,+}$ is the generalized diamond norm 
\begin{equation}
    \norm{\cE-\cE'}_{\diamond,+} 
    := \frac{1}{2}\sup_{\psi_{RA}} \Big\{ \bigl|\cE_R(\psi_{RA})-\cE'_R(\psi_{RA})\bigr|_1 + \bigl|\tr[\cE_R(\psi_{RA})-\cE'_R(\psi_{RA})]\bigr| \Big\}
\end{equation}
defined in~\cite{gour2026failaep}. Note that in the case of $\CPTP$ maps, we have $\norm{\cE - \cF}_{\diamond,+} = \frac{1}{2}\norm{\cE-\cF}_\diamond$.

In~\cite[Theorem 14]{gour2026failaep} it was shown that an AEP for this quantity is equivalent to a strong converse for parallel channel discrimination. Therefore~\cref{cor:stein} implies the following AEP.
\begin{corollary}[Channel AEP]
    Let $\cE,\cF \in \CPTP(A,B)$ and $\varepsilon \in (0,1)$. Then, 
    \begin{equation}
        \lim_{n\to \infty}\frac{1}{n} D_{\max}^{\varepsilon,\leq}\bigl(\cE^{\otimes n} \| \cF^{\otimes n}\bigr) = D^\infty(\cE \| \cF) \, .
    \end{equation}
\end{corollary}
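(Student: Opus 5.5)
The plan is to show the two inequalities separately and to reduce the work to results already available: the equivalence theorem \cite[Theorem 14]{gour2026failaep} and \cref{cor:stein}.

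\emph{Achievability} ($\liminf \geq D^\infty$, in fact the lower bound on the smoothed quantity). Take any $\cE' \in \cB^\varepsilon(\cE^{\otimes n})$ with $D_{\max}(\cE'\|\cF^{\otimes n}) = \lambda$. Fix a parallel discrimination strategy that is optimal up to vanishing error in~\cref{eq:channel-stein} at some type-I error $\varepsilon' $ with $\varepsilon+\varepsilon'<1$. Run it on $\cE'$. Since $\cE' \leq \ee^{\lambda}\cF^{\otimes n}$ completely positively, the acceptance probability under $\cE'$ is at most $\ee^\lambda \beta_n$. The generalized diamond distance controls the acceptance probability under $\cE^{\otimes n}$ versus $\cE'$ up to $\varepsilon$, so $1-\varepsilon'-\varepsilon \leq \ee^{\lambda}\beta_n$. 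Taking logs, dividing by $n$ and using achievability in \cref{cor:stein} gives $\liminf_n \frac1n D^{\varepsilon,\leq}_{\max} \geq D^\infty(\cE\|\cF)$.

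\emph{Converse} (upper bound, the harder direction). Here I would directly invoke \cite[Theorem 14]{gour2026failaep}. It states that the AEP holds iff parallel channel discrimination has the strong converse property at rate $D^\infty$, i.e.\ for every $r>D^\infty(\cE\|\cF)$ any parallel test with $\beta_n\le \ee^{-nr}$ has $1-\alpha_n\to0$. This is exactly the second statement of \cref{cor:stein} restricted to parallel strategies, with even exponential decay. Hence the limit exists and equals $D^\infty$.

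If I had to make the converse self-contained, I would build $\cE'$ by a minimax/convex-duality argument. The key observation is that $\min_{\cE'} \{\norm{\cE^{\otimes n}-\cE'}_{\diamond,+} : \cE' \leq \ee^{nr}\cF^{\otimes n}\}$ dualizes, via semidefinite programming over Choi operators, to a supremum over parallel tests of $1-\alpha_n - \ee^{nr}\beta_n$. The strong converse then drives this to zero. The main obstacle is verifying the duality with the generalized (trace-non-increasing) diamond norm, in particular strong duality and compactness of input states. This is precisely the content of the cited theorem, so I rely on it rather than redoing it.
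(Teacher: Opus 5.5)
Your proposal is correct and follows the paper's argument: the paper also derives the channel AEP directly from the equivalence in \cite[Theorem 14]{gour2026failaep}, using the strong converse of \cref{cor:stein} (stated for adaptive strategies, hence valid for parallel ones). Your explicit lower bound $\liminf_n \frac{1}{n} D_{\max}^{\varepsilon,\leq}(\cE^{\otimes n}\|\cF^{\otimes n}) \geq D^\infty(\cE\|\cF)$ is sound (it uses $\tr[PX]\leq \frac12(\|X\|_1+|\tr X|)$ together with the achievability part of \cref{cor:stein}), but it is redundant, since the cited equivalence already gives both directions.
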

\noindent \cite{gour2026failaep} shows that if one restricts the smoothing to $\CPTP$ maps an AEP fails in general.

\subsection{Relative entropy accumulation} \label{sec_REAT}
The entropy accumulation theorem (EAT)~\cite{DFR16,MetgerFawziSutterRenner2024} reduces the smooth min-conditional entropy analysis of a correlated multi-round quantum process to the conditional entropy analysis of its individual rounds.
The conditional entropy can be phrased in terms of relative entropy in the sense that $H(A|B)_{\rho}=-D(\rho_{AB}\| \id_A \otimes \rho_B)$. Hence a generalization of the EAT is a relative entropy accumulation theorem (REAT). 
For this, define
\begin{equation}
    D_{\max}^{\varepsilon,\text{out}}(\cE\|\cF) \coloneqq \sup_{\psi_{RA}}D_{\max}^\varepsilon(\cE_R(\psi_{RA}) \| \cF_R (\psi_{RA})) \, .
\end{equation}

Let $\cA_i \in \CPTP(R_{i-1}S_{i-1}, R_iS_i)$ and $\cB_i \in \CP(R_{i-1}S_{i-1}, R_iS_i)$, and $\Bar{\cA}_i = \tr_{R_i} \circ \cA_i$ and  $\Bar{\cB}_i = \tr_{R_i} \circ \cB_i$. Assume $\Bar{\cB}_i = \cT_i \circ \tr_{R_{i-1}}$ for a completely positive map $\cT_i$ and furthermore that for a fixed pair of maps $\cN \in \CPTP(X,Y)$ and $\cM \in \CP(X,Y)$ with $\D_{\max}(\cN \| \cM) < \infty$, there exist $\CPTP$ maps $\cU_i : R_{i-1}S_{i-1} \to K_i X$ and $\cV_i : K_iY \to S_i$ such that 
\begin{equation}
    \Bar{\cA_i} = \cV_i \circ (\cI_{K_i} \otimes \cN) \circ \cU_i, \qquad \Bar{\cB_i} = \cV_i \circ (\cI_{K_i} \otimes \cM) \circ \cU_i\,,
\end{equation}
where the auxiliary system $K_i$ may grow with $i$. Then define $\cP_n = \tr_{R_n} \circ \cA_n \circ \cdots \circ \cA_1$ and $\cQ_n = \tr_{R_n} \circ \cB_n \circ \cdots \circ \cB_1$.
\begin{corollary}[REAT] \label{cor_REAT}
    Under these assumptions, we get for every $\varepsilon \in (0,1)$
    \begin{equation}
        \limsup_{n \to \infty} \frac{1}{n} D_{\max}^{\varepsilon,\textnormal{out}}(\cP_n\|\cQ_n) \leq D^\infty(\cN \| \cM) \, .
    \end{equation}
\end{corollary}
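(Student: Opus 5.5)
The approach is a R\'enyi chain-rule argument at order $\alpha>1$, combined with a standard conversion from R\'enyi bounds to smooth max-divergence bounds, and then the limit $\alpha\downarrow 1$ via \cref{thm:main}.

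\textbf{Step 1 (accumulate at fixed $\alpha$).} Fix $\alpha>1$ and an input $\psi_{R R_0 S_0}$. Put $\rho_i:=(\cI\otimes\tr_{R_i}\circ\cA_i\circ\cdots\circ\cA_1)(\psi)$ on $RS_i$ and $\sigma_i$ analogously with the $\cB$'s. I want to show
\begin{equation}
D_\alpha(\rho_n\|\sigma_n)\le n\,D_\alpha^\infty(\cN\|\cM).
\end{equation}
Because of the non-signalling assumption $\bar\cB_i=\cT_i\circ\tr_{R_{i-1}}$, the $R_{i-1}$ register is discarded on the $\cB$ side. On the $\cA$ side $R_{i-1}$ is kept, so one should not trace it out naively. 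The idea is to treat it as part of the input:
\begin{itemize}
\item write $\rho_i=\bar\cA_i(\tilde\rho_{i-1})$, where $\tilde\rho_{i-1}$ is the state on $RR_{i-1}S_{i-1}$ with marginal $\rho_{i-1}$;
\item write $\sigma_i=\bar\cB_i(\tilde\sigma_{i-1})$ for any $\tilde\sigma_{i-1}$ on $RR_{i-1}S_{i-1}$ with $\tr_{R_{i-1}}\tilde\sigma_{i-1}=\sigma_{i-1}$;
\item choose $\tilde\sigma_{i-1}$ by the standard REAT trick, as in the chain rules of \cite{FF21,MetgerFawziSutterRenner2024}, so that $D_\alpha(\tilde\rho_{i-1}\|\tilde\sigma_{i-1})=D_\alpha(\rho_{i-1}\|\sigma_{i-1})$.
\end{itemize}
For the choice of $\tilde\sigma_{i-1}$, the first thing to try is the "$\alpha$-optimal extension" $\tilde\sigma_{i-1}=\sigma_{i-1}^{1/2}\,\rho_{i-1}^{-1/2}\,\tilde\rho_{i-1}\,\rho_{i-1}^{-1/2}\,\sigma_{i-1}^{1/2}$ or a variant of it. This matches the marginal, and the required equality is the known sandwiched-R\'enyi identity used to prove the EAT chain rules. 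As a fallback, I would invoke \cite[Corollary 5.2]{FF21} in its form for CP second arguments with a non-signalling condition.

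Next, factor $\bar\cA_i=\cV_i\circ(\cI_{K_i}\otimes\cN)\circ\cU_i$ and likewise for $\bar\cB_i$ with $\cM$. Data processing under $\cU_i$ and $\cV_i$ (both CPTP), together with the chain rule \cref{eq_chain_rule} extended to CP $\cM$ as in \cref{sec_CP_sufficient}, gives
\begin{equation}
D_\alpha(\rho_i\|\sigma_i)\le D_\alpha(\rho_{i-1}\|\sigma_{i-1})+D_\alpha^\infty(\cN\|\cM).
\end{equation}
The base case is $\rho_0=\sigma_0$, so the induction closes.

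\textbf{Step 2 (R\'enyi to smooth max).} Use the standard bound \cite{marco_book,tomamichel09}
\begin{equation}
D_{\max}^\varepsilon(\rho\|\sigma)\le D_\alpha(\rho\|\sigma)+\frac{g(\varepsilon)}{\alpha-1},\qquad g(\varepsilon)=\log\frac{1}{1-\sqrt{1-\varepsilon^2}}\ \text{(or a similar function)}.
\end{equation}
It holds for CP (unnormalized) $\sigma$ as well, since rescaling $\sigma$ shifts both sides equally. Taking the supremum over $\psi$ and dividing by $n$,
\begin{equation}
\limsup_{n\to\infty}\frac1n D_{\max}^{\varepsilon,\mathrm{out}}(\cP_n\|\cQ_n)\le D_\alpha^\infty(\cN\|\cM).
\end{equation}

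\textbf{Step 3 (limit).} Let $\alpha\downarrow1$ and apply \cref{thm:main}; this is allowed because $\cN$ is CPTP and $\cM$ is CP. This gives the claim.

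\textbf{Main obstacle.} The hardest part is Step 1: the memory $R_{i-1}$ is present for $\cA$ but traced out for $\cB$, so the chain rule has to be applied with a well-chosen extension $\tilde\sigma_{i-1}$. Getting this extension right, or correctly quoting the non-signalling chain rule of \cite{FF21,MetgerFawziSutterRenner2024} for the case where the second argument is only CP, is where the real work lies.
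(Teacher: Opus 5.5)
Your overall structure matches the paper's: accumulate at fixed $\alpha>1$, convert to $D_{\max}^\varepsilon$, then let $\alpha\downarrow 1$ via \cref{thm:main}. Fixing $\alpha$ and taking $\limsup_n$ is equivalent to the paper's choice $\alpha=1+1/\sqrt n$. The gap is in Step 1: the mechanism you propose for handling the memory does not exist. In general there is no extension $\tilde\sigma_{i-1}$ of $\sigma_{i-1}$ with $D_\alpha(\tilde\rho_{i-1}\|\tilde\sigma_{i-1})=D_\alpha(\rho_{i-1}\|\sigma_{i-1})$.

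Here is a counterexample. Take $S_{i-1},R_{i-1}$ to be qubits with trivial reference. Let $\tilde\rho_{i-1}=\Phi$ be maximally entangled, so $\rho_{i-1}=\id/2$, and let $\sigma_{i-1}=\diag(p,1-p)$ with $p\neq\tfrac12$.
\begin{itemize}
\item Your candidate equals $2(\sigma_{i-1}^{1/2}\otimes\id)\Phi(\sigma_{i-1}^{1/2}\otimes\id)=\proj{\phi}$ with $\ket{\phi}=\sqrt p\ket{00}+\sqrt{1-p}\ket{11}$. Hence $\Phi\not\ll\tilde\sigma_{i-1}$ and $D_\alpha(\Phi\|\tilde\sigma_{i-1})=+\infty$, while $D_\alpha(\id/2\|\sigma_{i-1})<\infty$.
\item No other extension works either. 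For $\alpha>1$, equality in data processing under $\tr_{R_{i-1}}$ forces Petz recoverability, $\tilde\sigma_{i-1}^{1/2}\bigl(\tfrac12\sigma_{i-1}^{-1}\otimes\id\bigr)\tilde\sigma_{i-1}^{1/2}=\Phi$. Since the middle factor is positive definite, this forces $\rank\tilde\sigma_{i-1}=1$, so $\tilde\sigma_{i-1}\propto\Phi$, contradicting the marginal constraint.
\end{itemize}
So any per-round extension argument loses a strictly positive amount in some rounds, and this loss can accumulate linearly in $n$.

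Your fallback does not close the gap either. The chain rule of \cite{FF21}, which is the paper's~\eqref{eq_chain_rule}, has both states on the same input system. Applied to $\bar\cA_i,\bar\cB_i$, it leaves $D_\alpha(\tilde\rho_{i-1}\|\tilde\sigma_{i-1})$, memory included, on the right-hand side, which is exactly the quantity you cannot control. You also cannot track the full states on $R_iS_i$ instead, because the hypotheses only control $\bar\cA_i$ and $\bar\cB_i$, not $\cA_i$ and $\cB_i$.

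The missing ingredient is the strengthened non-signalling chain rule \cite[Theorem~3.1]{MetgerFawziSutterRenner2024}, which the paper invokes. Since $\bar\cB_i=\cT_i\circ\tr_{R_{i-1}}$, it gives $D_\alpha(\bar\rho_i\|\bar\sigma_i)\le D_\alpha(\bar\rho_{i-1}\|\bar\sigma_{i-1})+D_\alpha^\infty(\bar\cA_i\|\bar\cB_i)$, with the memory already discarded on the right. This is a genuinely stronger statement than~\eqref{eq_chain_rule} and has to be quoted; it does not follow from a single-copy identity. With it, plus data processing $D_\alpha^\infty(\bar\cA_i\|\bar\cB_i)\le D_\alpha^\infty(\cN\|\cM)$ under $\cU_i$, $\cV_i$ and $\cI_{K_i}$, your induction closes. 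Steps 2--3 then go through as in the paper.
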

\begin{proof}
    Fix $\alpha>1$, let $Z$ be a finite reference system and let $\eta \in \St(ZR_0S_0)$ be an arbitrary initial state. Then let us define recursively $\rho_0=\sigma_0 = \eta$, $\rho_i = (\cI_Z \otimes \cA_i)(\rho_{i-1})$ and $\sigma_i = (\cI_Z \otimes \cB_i)(\sigma_{i-1})$. 
    Further, define $\Bar{\rho}_i = \tr_{R_i} \rho_i$ and $\Bar{\sigma}_i = \tr_{R_i} \sigma_i$. Since $\Bar{\cB}_i = \cT_i \circ \tr_{R_{i-1}}$, the strengthened chain rule in~\cite[Theorem 3.1]{MetgerFawziSutterRenner2024} gives
    \begin{align}
        D_\alpha(\Bar{\rho_i} \| \Bar{\sigma}_i ) &= D_\alpha \bigl( (\cI_Z \otimes \Bar{\cA}_i)(\rho_{i-1}) \| (\cI_Z \otimes \Bar{\cB}_i)(\sigma_{i-1})\bigr) \\
        &\leq D_\alpha(\Bar{\rho}_{i-1}\| \Bar{\sigma}_{i-1}) + D_\alpha^\infty (\Bar{\cA}_i \| \Bar{\cB}_i) \\
        &\leq  D_\alpha(\Bar{\rho}_{i-1}\| \Bar{\sigma}_{i-1}) + D_\alpha^\infty (\cN \| \cM) \, ,
    \end{align}
    where the last step follows from data-processing. 
    Using this iteratively together with $\Bar{\rho}_0 = \Bar{\sigma}_0$ leads to
    \begin{equation}
        D_\alpha(\Bar{\rho}_n \| \Bar{\sigma}_n ) = D_\alpha \bigl( (\cI_Z \otimes \cP_n)(\eta) \| (\cI_Z \otimes \cQ_n)(\eta) \bigr) \leq n D_\alpha^\infty(\cN \| \cM) \, .
    \end{equation}
    Then the smoothing bound from~\cite[Theorem~3]{AnshuBertaJainTomamichel2019} gives
    \begin{equation}
        D^{\varepsilon}_{\max}\bigl( (\cI_Z \otimes \cP_n)(\eta) \| (\cI_Z \otimes \cQ_n)(\eta) \bigr) \leq  n D_\alpha^\infty(\cN \| \cM) + \frac{\log 1/\varepsilon^2}{\alpha -1} + \log \frac{1}{1-\varepsilon^2} \,.
    \end{equation}
    After taking the supremum over $\eta$, the statement follows by dividing both sides by $n$, choosing $\alpha = 1 + \frac{1}{\sqrt{n}}$, and sending $n \to \infty$. 
\end{proof}

The REAT from~\cref{cor_REAT} implies the EAT from~\cite{MetgerFawziSutterRenner2024}. To see this, let $\cG\in\CPTP(RE,ARE)$ satisfy $\tr_{AR}\circ\cG=\cT\circ\tr_R$, and define $\cS_A(\omega)=\id_A\otimes\tr_A\omega$. Then, choose $S_i=A^iE$, $K_i=A^{i-1}$, $\cA_i=\cI_{A^{i-1}}\otimes\cG$, $\cB_i=\cI_{A^{i-1}}\otimes(\cS_A\circ\cG)$, $\cN=\tr_R\circ\cG$, and $\cM=\cS_A\circ\cN$. 

\subsection{Generalized quantum Stein's lemma} \label{sec_GSL}
The generalized quantum Stein’s lemma, introduced by~\cite{brandao_Stein_10}, focuses on distinguishing many copies of a fixed quantum state from a set of alternative states, which may be correlated. Under suitable assumptions on these sets, it identifies the optimal type-II error exponent, for any fixed type-I error with the regularized relative entropy distance to the alternative set. A gap in the original proof was identified by~\cite{Berta2023gapinproofof} and independent complete proofs were given in~\cite{haya_stein_25,ludo25}. Since then, further generalizations have appeared~\cite{MSR_stein26,LBT26, girardi2026stein} and the original gap from~\cite{brandao_Stein_10} has been resolved by~\cite{MSR_stein26}.

The tools developed in our work also give an alternative proof of the generalized quantum Stein lemma with a more general null hypothesis. 
Let $\cS_n\subseteq \St(\cH^{\otimes n})$ be nonempty, closed and convex sets, such that $\cS_n \otimes \cS_m \subseteq \cS_{n+m}$ and $\cS_1$ contains a state $\tau>0$.\footnote{These are the same set axioms as used in~\cite{haya_stein_25}.} Furthermore, let us introduce the notation $D(\rho_n\|\cS_n) \coloneqq \min_{\sigma_n\in \cS_n} D(\rho_n\|\sigma_n)$ for $\rho_n \in \St(\cH^{\otimes n})$
and define the type-II error
\begin{equation}
    \beta_n^\varepsilon(\rho_n\|\cS_n) \coloneqq \inf_{0\leq T_n \leq \id
    }  \Big \{\sup_{\sigma_n \in \cS_n} \tr[\sigma_n T_n] : \tr[\rho_n T_n]\geq1-\varepsilon\Big \} \,.
\end{equation}
We consider sequences $(\rho_n)_{n\geq1}$ of states for which the limit
\begin{equation} \label{eq_assumption_limit_exists}
    R\coloneqq\lim_{n\to\infty}\frac1n D(\rho_n\|\cS_n)
\end{equation} exists.
We further assume the following strong converse: For every $\eta>0$ and every sequence of tests $0 \leq T_n \leq \id$,
\begin{equation} \label{eq_strong_converse_assumption}
    \sup_{\sigma_n\in\cS_n}\tr[\sigma_nT_n] \leq\ee^{-n(R+\eta)}
    \quad\Longrightarrow\quad \lim_{n \to \infty} \tr[\rho_nT_n]=0\, .
\end{equation}
These assumptions hold for iid~states (see~\cite{brandao_Stein_10}), but also permutation-invariant almost-iid states\footnote{Under the additional assumptions on the alternative set~\cite[Theorem 4.3]{MSR_stein26}}(see~\cite{MSR26,MSR_stein26}) or general stationary ergodic sequences (which can be shown from~\cite{Bjelakovic_2004}).

\begin{corollary}
Under the above assumptions, for every $\varepsilon \in(0,1),$
    \begin{equation}
        \lim_{n\to\infty} -\frac{1}{n} \log \beta_n^{\varepsilon}(\rho_n \|\cS_n) = \lim_{n\to \infty} \frac{1}{n} D(\rho_n \| \cS_n) \, .
    \end{equation}
\end{corollary}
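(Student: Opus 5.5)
The plan is to prove the two inequalities separately. The converse, $\limsup_n -\frac1n\log\beta_n^\varepsilon\le R$, follows directly from~\cref{eq_strong_converse_assumption}. Suppose tests $T_n$ satisfy $\tr[\rho_nT_n]\ge 1-\varepsilon$ and $\sup_{\sigma_n\in\cS_n}\tr[\sigma_nT_n]\le \ee^{-n(R+\eta)}$ along a subsequence. Extending them by $T_n=0$ elsewhere, the strong converse forces $\tr[\rho_nT_n]\to0$. This contradicts $1-\varepsilon>0$.

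The substance is achievability, $\liminf_n -\frac1n\log\beta_n^\varepsilon\ge R$. I would write $D_H^{\varepsilon}(\rho\|\sigma):=-\log\inf\{\tr[\sigma T]:0\le T\le\id,\ \tr[\rho T]\ge1-\varepsilon\}$. Since the tests and $\cS_n$ are convex and compact and the objective is bilinear, Sion's minimax theorem gives
\begin{equation}
-\log\beta_n^\varepsilon(\rho_n\|\cS_n)=\min_{\sigma_n\in\cS_n}D_H^\varepsilon(\rho_n\|\sigma_n).
\end{equation}
The converse argument in fact shows $D_H^{\varepsilon'}(\rho_n\|\cS_n)\le n(R+\eta)$ eventually for every fixed $\varepsilon'\in(0,1)$ and $\eta>0$.

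The key idea is an averaging argument. It uses that $\varepsilon'\mapsto D_H^{\varepsilon'}$ is nondecreasing, that relative entropy is roughly the integral over $\varepsilon'$ of the hypothesis-testing quantities, and that the strong converse caps the large-$\varepsilon'$ part. Fix a grid $\varepsilon=\varepsilon_1<\dots<\varepsilon_J<1$. Let $\sigma^{(j)}\in\cS_n$ be minimax optimizers at level $\varepsilon_j$, and set $\bar\sigma:=\frac{1-\delta}{J}\sum_j\sigma^{(j)}+\delta\,\tau^{\otimes n}$. This lies in $\cS_n$ by convexity and $\cS_1^{\otimes n}\subseteq\cS_n$. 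Since $\bar\sigma\ge\frac{1-\delta}{J}\sigma^{(j)}$, we get
\begin{equation}
D_H^{\varepsilon_j}(\rho_n\|\bar\sigma)\le D_H^{\varepsilon_j}(\rho_n\|\cS_n)+\log\tfrac{J}{1-\delta}.
\end{equation}
Moreover $\bar\sigma\ge\delta\lambda_{\min}(\tau)^n\id$, so its spectrum lies in $[\ee^{-cn},1]$.

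The next step is to compare with a commuting pair. Round the eigenvalues of $\bar\sigma$ on a logarithmic grid of spacing $\gamma$. This gives $\hat\sigma$ with $\ee^{-\gamma}\hat\sigma\le\bar\sigma\le\hat\sigma$ (after normalizing, which costs a constant) and only $v=O(n/\gamma)$ distinct eigenvalues. By the pinching bound in the proof of~\cref{lem:pinch} at $\alpha=1$, we get $D(\rho_n\|\hat\sigma)\le D(\cP_{\hat\sigma}(\rho_n)\|\hat\sigma)+\log v$. The right-hand side is a classical pair $(p,q)$. For it, the information-spectrum representation $D(p\|q)=\mathbb E_p[\log p/q]$ and the quantile bound $D_H^{\varepsilon'}(p\|q)\ge Q_{\varepsilon'}(\log p/q)-\log\frac{1}{\varepsilon'-\varepsilon''}$-type estimates give
\begin{equation}
D(p\|q)\le\int_0^1D_H^{\varepsilon'}(p\|q)\,\di\varepsilon'+O(\log n).
\end{equation}
The lower tail of $\log p/q$ is controlled because $q\ge\ee^{-cn}$ and $\log p\le 0$. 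By data processing, $D_H^{\varepsilon'}(p\|q)\le D_H^{\varepsilon'}(\rho_n\|\hat\sigma)$. Discretizing the integral on the grid and using monotonicity in $\varepsilon'$, I would bound the pieces as follows. On $[0,\varepsilon]$ the integrand is at most $D_H^{\varepsilon}(\rho_n\|\cS_n)+o(n)$. On $[\varepsilon,\varepsilon_J]$ it is at most $n(R+\eta)+o(n)$ by the strong converse bound above. The tail $[\varepsilon_J,1]$ needs $D_H^{\varepsilon'}\le O(n)+\log\frac1{1-\varepsilon'}$, which holds since $\hat\sigma\ge \ee^{-cn}\id$. Combined with $D(\rho_n\|\cS_n)\le D(\rho_n\|\bar\sigma)$ and $nR-o(n)\le D(\rho_n\|\cS_n)$, this yields
\begin{equation}
nR\le \varepsilon\,D_H^\varepsilon(\rho_n\|\cS_n)+(1-\varepsilon)n(R+\eta)+n\,O(1-\varepsilon_J)+o(n).
\end{equation}
Hence $\liminf_n\frac1nD_H^\varepsilon(\rho_n\|\cS_n)\ge R-\frac{(1-\varepsilon)\eta+O(1-\varepsilon_J)}{\varepsilon}$. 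Letting $\eta\to0$ and $\varepsilon_J\to1$ (with $J$ fixed as $n\to\infty$, and $\gamma\to0$ slowly) gives the claim.

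The main obstacle I expect is the classical step $D(p\|q)\lesssim\int_0^1 D_H^{\varepsilon'}\,\di\varepsilon'$ with only $o(n)$ loss for non-i.i.d.\ distributions. Making the quantile estimate uniform in $\varepsilon'$, and controlling the $\varepsilon'\to1$ and negative-log-likelihood tails via the floor $\ee^{-cn}$ supplied by $\tau^{\otimes n}$, is where the argument could break. If it does, I would first try truncating $\log p/q$ to $[-cn,cn]$ before integrating.
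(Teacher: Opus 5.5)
Your proof is correct, but it takes a different route from the paper.

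\textbf{The paper's route.} The paper never invokes a minimax theorem. It builds one explicit operator $Z_n$ from the first-order optimality condition for minimizing $D\bigl(\rho_n\|\frac{\sigma_n+\tau^{\otimes n}}{2}\bigr)$, via $f_{0,A_n}$ from \cref{lem:derivative}. The bound $\sup_{\sigma_n\in\cS_n}\tr[\sigma_n Z_n]\le 1$ then acts as a dual certificate. Pinching gives $\tr[\rho_n\log Z_n]\ge nR-o(n)$. The tests are the level sets $\mathbf{1}_{[r,\infty)}(\frac1n\log Z_n)$. The strong converse kills the $\rho_n$-mass above $R+\eta$, and a Markov-type bound using $\log Z_n\le cn$ then forces the mass below $r<R$ to vanish.

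\textbf{Your route.} You use Sion to replace $\cS_n$ by a single mixture $\bar\sigma$ of worst-case states plus the floor $\delta\tau^{\otimes n}$. You then pinch to a commuting pair and use that relative entropy is bounded by the average of $D_H^u$ over error levels $u$. The strong converse caps this average on $[\varepsilon,\varepsilon_J]$, which pushes $D_H^\varepsilon$ up. Both arguments run the same mechanism in dual form: the mean is at least $nR$ and the upper part is capped at $n(R+\eta)$.

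\textbf{What each buys.} The paper's version gives explicit, $\varepsilon$-independent tests that work for every rate $r<R$ simultaneously. It also stays inside the paper's $f_{s,A}$ machinery. Yours avoids the derivative and optimality argument. However, it is existential and works separately for each $\varepsilon$. It also genuinely needs minimax in both places where you pass between $-\log\beta_n^{\varepsilon'}$ and $\min_{\sigma}D_H^{\varepsilon'}(\rho_n\|\sigma)$, since weak duality alone gives the wrong direction. Sion applies because both sets are convex and compact and the objective is bilinear.

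\textbf{The step you flag as the main obstacle.} It is exact and costs nothing. Let $Z=\log\frac{p}{q}$ with $x\sim p$, and let $Q_u:=\sup\{a:\Pr_p[Z\le a]\le u\}$. For every $a<Q_u$, the threshold test $\{Z>a\}$ has type-I error at most $u$ and type-II error at most $\ee^{-a}$. Hence $D_H^u(p\|q)\ge Q_u$. Since $Z$ takes finitely many values, $\int_0^1 Q_u\,\di u=\mathbb{E}_p[Z]=D(p\|q)$. Therefore
$D(p\|q)\le\int_0^1 D_H^u(p\|q)\,\di u$
holds with no $O(\log n)$ loss, no lower-tail control and no truncation. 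The only losses are $\gamma+\log v$ from rounding and pinching, and a fixed $\gamma$ suffices.

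\textbf{A simplification.} By monotonicity of $u\mapsto D_H^u$, the grid is superfluous. Two optimizers, at levels $\varepsilon$ and $\varepsilon_J$, already give your final inequality.
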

\begin{proof}
    Define $c\coloneqq-\log\lambda_{\min}(\tau)$.
    The central step will be to construct a positive definite operator $Z_n=Z_n(\rho_n)$ satisfying 
    \begin{align}
        \sup_{\sigma_n\in \cS_n} \tr[\sigma_n Z_n] &\leq 1, \label{eq:Z1}\\
        \log Z_n &\leq cn\id,\label{eq:Z2}\\
        \tr[\rho_n \log Z_n]&\geq nR - o(n) \, . \label{eq:Z3}
    \end{align}
    This can be done by defining
    \begin{equation}\label{eq:an-wn}
        A_n \in \argmin_{A\in\{\frac{\sigma_n+\tau^{\otimes n}}{2}:\sigma_n\in\cS_n\}} D(\rho_n \|A) \qquad \textnormal{and} \qquad  W_n = \frac{1}{2} f_{0,A_n}(\rho_n)
    \end{equation}
    followed by
    \begin{equation} \label{eq:bn-zn}
        B_n = W_n + \ee^{-n} \id \qquad \textnormal{and} \qquad Z_n = \frac{B_n}{1+\ee^{-n}}\, ,
    \end{equation}
    where the function $f_{0,A}$ was introduced in~\cref{lem:derivative}. The minimum $A_n$ exists because the feasible set is compact and every feasible operator is bounded below by $\tau^{\otimes n}/2$.

    We start by showing~\cref{eq:Z1}. Since $A_n$ minimizes the objective, for any $\sigma_n \in \cS_n$, moving towards $(\sigma_n + \tau^{\otimes n})/2$ cannot decrease it. Therefore
    \begin{align}
        0 
        &\leq \frac{\di }{\di t} \, D\left(\rho_n\| A_n+t\left(\frac{\sigma_n+\tau^{\otimes n}}{2}-A_n \right)\right)\Biggr|_{t=0^+} \\
        \overset{\textnormal{\Cshref{lem:derivative}}}&{=} -\tr\left[f_{0,A_n}(\rho_n)\left(\frac{\sigma_n+\tau^{\otimes n}}{2}-A_n \right)\right] \\
        \overset{\textnormal{\Cshref{eq:an-wn}}}&{=} 1- \tr[W_n(\sigma_n+\tau^{\otimes n})] \, , \label{eq_derivative_LS}
    \end{align}
    where the final step uses $\tr[f_{0,A_n}(\rho_n)A_n]=1$. 
Since $W_n\geq 0 $ (see~\cref{lem:derivative}) we have
\begin{align} \label{eq_mid_step}
\sup_{\sigma_n \in \cS_n} \tr[\sigma_n W_n]\overset{\textnormal{\Cshref{eq_derivative_LS}}}{\leq} 1 \, .
\end{align}
Hence, 
    \begin{align}
\sup_{\sigma_n\in \cS_n} \tr[\sigma_n Z_n]
\overset{\textnormal{\Cshref{eq:bn-zn}}}{=} \frac{1}{1+\ee^{-n}} \sup_{\sigma_n\in \cS_n} \tr[\sigma_n (W_n +\ee^{-n}\id)]
\overset{\textnormal{\Cshref{eq_mid_step}}}&{\leq} 1 \, .
    \end{align}
 
To show~\cref{eq:Z2}, we use that by~\cref{lem:derivative} $f_{0,A_n}(\id) = A_n^{-1}$. Using $\rho_n\leq \id$ and $A_n \geq \tau^{\otimes n}/2$, we obtain 
\begin{align}
0
\leq W_n 
\overset{\textnormal{\Cshref{eq:an-wn}}}{=} \frac{1}{2} f_{0,A_n}(\rho_n)
\overset{\textnormal{\Cshref{eq:f_s-integral}}}{\leq}\frac{1}{2} f_{0,A_n}(\id)
= \frac{1}{2} A_n^{-1} 
\leq (\tau^{-1})^{\otimes n} 
\leq \ee^{cn} \id \, , \label{eq_step2_ts}
\end{align}
where the penultimate step uses that $t \mapsto t^{-1}$ is operator anti-monotone~\cite[Table~2.2]{Sutter_book}.
Since $c\geq 0 $, we find
\begin{align}
0 
\overset{\textnormal{\Cshref{eq:bn-zn,eq_step2_ts}}}{\leq} Z_n
\overset{\textnormal{\Cshref{eq:bn-zn}}}{=} \frac{W_n + \ee^{-n}\id}{1+\ee^{-n}}
\overset{\textnormal{\Cshref{eq_step2_ts}}}{\leq} \frac{\ee^{c n} + \ee^{-n}}{1+\ee^{-n}} \id
\leq \ee^{cn}\id \, .
\end{align}
Operator monotonicity of the logarithm then proves~\cref{eq:Z2}. 
    
Finally, to prove~\cref{eq:Z3}, note that~\cref{lem:derivative} gives us 
    \begin{equation}\label{eq:meas-log2}
        D_{\mathbb{M}}(\rho_n \|W_n) = \tr[\rho_n \log A_n] +\log 2 \, .
    \end{equation}
    To compare this with the ordinary relative entropy, note that the spectrum of $B_n$ lies in $[\ee^{-n},2\ee^{cn}]$. If we round each eigenvalue of $B_n$ upward to the next integer power of $\ee$, without changing its eigenspaces, the resulting operator $B_n'$ satisfies 
    \begin{equation}
        B_n \leq B_n' \leq \ee B_n \label{eq:order}
    \end{equation}
    and has at most $v_n = O(n)$ distinct eigenvalues. 
    Putting everything together then gives 
    \begin{align}
        \tr[\rho_n \log Z_n]
        \overset{\textnormal{\Cshref{eq:bn-zn}}}&{=} \tr[\rho_n \log B_n]-\log(1+\ee^{-n}) \\
        &= \tr[\rho_n\log\rho_n]
           -D(\rho_n\|B_n)-\log(1+\ee^{-n}) \\
        \overset{\textnormal{\Cshref{eq:order}}}&{\geq}  \tr[\rho_n\log\rho_n]
           -D(\rho_n\| B_n')
           -1-\log(1+\ee^{-n}) \\
        \overset{\textnormal{\Cshref{lem:pinch}}}&{\geq} \tr[\rho_n\log\rho_n]
           -D_{\mathbb{M}}(\rho_n\| B_n')
           -\log v_n-1-\log(1+\ee^{-n}) \\
        \overset{\textnormal{\Cshref{eq:bn-zn,eq:order}}}&{\geq}  \tr[\rho_n\log\rho_n]
           -D_{\mathbb{M}}(\rho_n\|W_n)
           -\log v_n-1-\log(1+\ee^{-n}) \\
        \overset{\textnormal{\Cshref{eq:meas-log2}}}&{=}  \tr[\rho_n\log\rho_n]
           -\tr[\rho_n\log A_n]
           -\log 2-\log v_n-1-\log(1+\ee^{-n}) \\
        &= D(\rho_n\|A_n)
           -\log v_n-\log 2-1-\log(1+\ee^{-n}) \\
        &\geq D(\rho_n\|\cS_n)-o(n) \\
        \overset{\textnormal{\Cshref{eq_assumption_limit_exists}}}&{\geq}   nR-o(n)\, ,
    \end{align}
    which proves~\cref{eq:Z3}.
    
    Define $Y_n = \frac{1}{n}\log Z_n$ and the test
    \begin{equation}
        T_n(x) \coloneqq \mathbf{1}_{[\ee^{nx},\infty)}(Z_n) = \mathbf{1}_{[x,\infty)}(Y_n)\, \quad \textnormal{for } x \in \R \, .
    \end{equation}
 Let $r<R$ and note that $T_n(r) \leq \ee^{-nr} Z_n$ implies 
\begin{align}
\sup_{\sigma_n \in \cS_n}\tr[\sigma_nT_n(r)]
\leq \ee^{-nr} \sup_{\sigma_n \in \cS_n}\tr[\sigma_n Z_n]
\overset{\textnormal{\Cshref{eq:Z1}}}{\leq} \ee^{-nr} \, .
\end{align}
Hence, it remains to show that $\tr[\rho_n (\id-T_n(r))]$ tends to zero. To show this, consider a second test $T_n(R+\eta)$ for $\eta>0$. Analogous to before,  
\begin{align}
\sup_{\sigma_n \in \cS_n}\tr[\sigma_nT_n(R+\eta)]
\leq \ee^{-n(R + \eta)} \sup_{\sigma_n \in \cS_n}\tr[\sigma_n Z_n]
\overset{\textnormal{\Cshref{eq:Z1}}}{\leq} \ee^{-n(R + \eta)} \, .
\end{align}
By the strong converse (see~\cref{eq_strong_converse_assumption}), we have 
\begin{align} \label{eq_SC_step}
\lim_{n \to \infty}\tr[\rho_n T_n(R+\eta)] = 0 \, .
\end{align}
On the other hand, \cref{eq:Z2} implies
    \begin{align}
        Y_n &\leq r \mathbf{1}_{(-\infty,r)}(Y_n) + (R+\eta) \mathbf{1}_{[r,R+\eta)}(Y_n) + c \mathbf{1}_{[R+\eta,\infty)}(Y_n) \\
        &= r (\id -T_n(r)) + (R+\eta) \big(T_n(r) - T_n(R+\eta)\big) + cT_n(R+\eta) \\
        &= (R+\eta)\id - (R+\eta-r)\big(\id-T_n(r)\big) + (c-R-\eta) T_n(R+\eta) \, . \label{eq_almost_stein_done}
    \end{align}
With this we find    
    \begin{align}
        R-o(1) 
        \overset{\textnormal{\Cshref{eq:Z3}}}&{\leq} \tr[\rho_n Y_n] \\
        \overset{\textnormal{\Cshref{eq_almost_stein_done}}}&{\leq} R +\eta - (R+\eta-r) \tr[\rho_n (\id-T_n(r))] + (c -R-\eta)\tr[\rho_n T_n(R+\eta)] \\
        \overset{\textnormal{\Cshref{eq_SC_step}}}&{\leq}R +\eta - (R+\eta-r) \tr[\rho_n (\id-T_n(r))] + o(1) \, ,
    \end{align}
    which rearranged yields
    \begin{equation}
        \limsup_{n \to \infty} \tr[\rho_n (\id-T_n(r))] \leq \frac{\eta }{R+\eta-r} \, .
    \end{equation}
    Letting $\eta \downarrow 0$ then proves achievability. Together with the strong converse given in~\cref{eq_strong_converse_assumption}, this proves the assertion.
\end{proof}

\paragraph{Independent work}
During the preparation of this work, we were aware of two alternative proofs of the continuity statement that were independently derived~\cite{gao2026,JW26}. 
The proofs~\cite{gao2026,JW26} use substantially different techniques than our proof presented in this manuscript.

\paragraph{Acknowledgements}
We acknowledge GPT 6 Astra for assistance in developing the proof.
D.S.~thanks Omar Fawzi, Tony Metger, and Renato Renner for discussions on the continuity of the regularized channel R\'enyi divergence several years ago.
L.S.~acknowledges support from the Quantum Center at ETH Zurich and SNSF Grant No. 200021E\_232425.
\appendix

\section{Justification of~\cref{eq:simple_direction}} \label{app_justification_simple_limit}
Using the monotonicity of the R\'enyi divergence in $\alpha$~\cite[Theorem~7]{MLDSFT13} gives
\begin{align}
\lim_{\alpha \uparrow 1} D_{\alpha}^{\reg}(\cE \| \cF)  \leq D^{\reg}(\cE \| \cF) \, . \label{eq_step0}
\end{align}
For any fixed $r \in \N$ we find by definition of the channel relative entropy
\begin{align}
\frac{1}{r} D\big( \cE^{\otimes r} \| \cF^{\otimes r} \big)
&=\frac{1}{r} \max_{\nu \in \St((R A)^{\otimes r})} D\big( \cE^{\otimes r}_R(\nu) \| \cF^{\otimes r}_R(\nu) \big) \nonumber \\
&= \frac{1}{r} D\big( \cE^{\otimes r}_R(\nu^\star) \| \cF^{\otimes r}_R(\nu^\star) \big)  \nonumber \\
&= \lim_{\alpha \uparrow 1} \frac{1}{r} D_{\alpha}\big( \cE_R^{\otimes r}(\nu^\star) \| \cF_R^{\otimes r}(\nu^\star) \big) \, , \label{eq_step1}
\end{align}
where $\nu^\star \in \St((RA)^{\otimes r})$ denotes the maximizer. The final step uses the fact that R\'enyi relative entropy is continuous as ${\alpha \uparrow 1}$~\cite[Theorem~5]{MLDSFT13}.
Using the fact that the R\'enyi relative entropy is additive under tensor products gives
\begin{align}
 \lim_{\alpha \uparrow 1} \frac{1}{r} D_{\alpha}\big( \cE_R^{\otimes r}(\nu^\star) \| \cF_R^{\otimes r}(\nu^\star) \big)
 &=  \lim_{\alpha \uparrow 1} \lim_{k\to \infty} \frac{1}{rk} D_{\alpha}\Big(  \big(\cE_R^{\otimes r}(\nu^\star)\big)^{\otimes k} \| \big( \cF_R^{\otimes r}(\nu^\star) \big)^{\otimes k} \Big) \nonumber \\
 &\leq \lim_{\alpha \uparrow 1} \lim_{n\to \infty} \frac{1}{n} \max_{\tau \in \St((R A)^{\otimes n})} D_{\alpha}\big(\cE_R^{\otimes n}(\tau) \|  \cF_R^{\otimes n}(\tau)  \big) \nonumber \\
 &=\lim_{\alpha \uparrow 1} D_{\alpha}^{\reg}(\cE \| \cF) \label{eq_step2} \, ,
\end{align}
with $n=rk$, where the penultimate step uses that $(\nu^\star)^{\otimes k} \in \St((RA)^{\otimes n})$. Combining~\cref{eq_step1,eq_step2} gives
\begin{align}
\frac{1}{r} D\big( \cE^{\otimes r} \| \cF^{\otimes r} \big)
\leq \lim_{\alpha \uparrow 1} D_{\alpha}^{\reg}(\cE \| \cF) \, .
\end{align}
Since this is true for any $r\in \N$ we find $D^{\reg}(\cE \| \cF) \leq \lim_{\alpha \uparrow 1} D_{\alpha}^{\reg}(\cE \| \cF)$ by considering the limit $r \to \infty$. Combining this with~\cref{eq_step0} proves the assertion.

\bibliographystyle{arxiv_no_month}
\bibliography{bibliofile}

\end{document}